\title{Control of fluctuations and heavy tails for heat variation in the two-time measurement framework}
\author{Tristan Benoist\textsuperscript{1}, Annalisa Panati\textsuperscript{2}, Renaud Raqu\'epas\textsuperscript{3,4}}
\theoremstyle{definition}
	\newtheorem{definition}{Definition}[section]
	\newtheorem{remark}[definition]{Remark}
	\newtheorem{example}[definition]{Example}
	\newtheorem*{example*}{Example}
\theoremstyle{plain}
	\newtheorem{lemma}[definition]{Lemma}
	\newtheorem*{lemma*}{Lemma}
	\newtheorem{proposition}[definition]{Proposition}
	\newtheorem{theorem}[definition]{Theorem}
	\newtheorem*{theorem*}{Theorem}
	\newtheorem{corollary}[definition]{Corollary}
\newcommand{\specialcell}[1]{\ifmeasuring@#1\else\omit$\displaystyle#1$\ignorespaces\fi}
\newcommand{\numset}[1]{\mathbf{#1}}
	\newcommand{\cc}{\numset{C}}
	\newcommand{\rr}{\numset{R}}
	\newcommand{\nn}{\numset{N}}
\newcommand{\one}{\bm{1}}
\newcommand{\e}{\mathrm{e}}
	\newcommand{\Exp}[1]{\mathrm{e}^{#1}}
\renewcommand{\i}{\mathrm{i}}
\renewcommand{\Im}{\operatorname{Im}}
\providecommand*{\diff}
	{\@ifnextchar^{\DIfF}{\DIfF^{}}}
\def\DIfF^#1{
	\mathop{\mathrm{\mathstrut d}}
		\nolimits^{#1}\gobblespace}
\def\gobblespace{
	\futurelet\diffarg\opspace}
\def\opspace{
	\let\DiffSpace\!
	\ifx\diffarg(
		\let\DiffSpace\relax
	\else
		\ifx\diffarg[
			\let\DiffSpace\relax
		\else
			\ifx\diffarg\{
				\let\DiffSpace\relax
			\fi\fi\fi\DiffSpace}
\providecommand*{\od}[3][]{
	\frac{\diff^{#1}#2}{\diff #3^{#1}}}
\renewcommand{\d}{\diff}
	\renewcommand{\P}{\mathbb{P}}
	\newcommand{\cht}{\mathcal{E}_t}
	\newcommand{\ee}{\mathbb{E}}
	\renewcommand{\sp}{\operatorname{sp}}
	\DeclareMathOperator{\supp}{supp}
	\DeclareMathOperator{\Dom}{Dom}
	\DeclareMathOperator{\tr}{tr}
	\newcommand{\cH}{\mathcal{H}}
	\renewcommand{\H}{\mathcal{H}}
	\newcommand{\h}{\mathfrak{h}}
	\newcommand{\ch}{\mathfrak{h}}
	\newcommand{\B}{\mathcal{B}}
			\newcommand{\cB}{\mathcal{B}}
	\newcommand{\cO}{\mathcal{O}}
	\newcommand{\Ga}[1][]{\Gamma^{-}_{#1}}
	\newcommand{\Gs}[1][]{\Gamma^{+}_{#1}}
	\newcommand{\G}[1][]{\Gamma_{#1}}
	\newcommand{\dG}[1][]{\diff\Gamma_{#1}}
   \newcommand{\norme}[1]{\|#1\|}
	\newcommand{\f}{\psi_f}
   \def\ones{\psi_{\textnormal{o}}}
   \newcommand{\eno}{\epsilon_{\textnormal{o}}}
	\newcommand{\cL}{\mathcal{L}}
	\newcommand{\wlim}{\operatorname*{w-lim}}
	\newcommand{\srlim}{\operatorname*{s-r-lim}}
	\newcommand{\ad}{\operatorname{ad}}
\begin{document}

\maketitle

\begin{center}
\small
\begin{tabular}{c c c}
   1. Universit\'e de Toulouse UPS & & 2. Aix Marseille Univ, Universit\'e de Toulon \\
   CNRS, Institut de math{\'e}matiques de Toulouse & &  CNRS, CPT \\
   F-31\,062 Toulouse Cedex 9, France & & Marseille, France \\
   && \\
   3. McGill University & & 4. Univ. Grenoble Alpes \\
   Department of Mathematics and Statistics & & CNRS, Institut Fourier \\
   1005--805 rue Sherbrooke~Ouest & & F-38\,000 Grenoble, France \\
    Montr{\'e}al (Qu{\'e}bec) ~H3A 0B9 & & \\
\end{tabular}
\end{center}

\begin{abstract}
   We study heat fluctuations in the two-time measurement framework. For bounded perturbations, we give sufficient ultraviolet regularity conditions on the perturbation for the moments of the heat variation to be uniformly bounded in time, and for the Fourier transform of the heat variation distribution to be analytic and uniformly bounded in time in a complex neighborhood of~$0$.

   On a set of canonical examples, with bounded and unbounded perturbations, we show that our ultraviolet conditions are essentially necessary. If the form factor of the perturbation does not meet our assumptions, the heat variation distribution exhibits heavy tails. The tails can be as heavy as preventing the existence of a fourth moment of the heat variation.
\end{abstract}

\section{Introduction}
Experimental advances in the control of mesoscopic systems are fueling the interest for the study of fluctuations of thermodynamic quantities. Indeed, while the number of degrees of freedom in mesoscopic systems is large enough to justify a thermodynamical analysis, it is still small enough that fluctuations of thermodynamic quantities about their mean are relevant. In this paper we are interested in the fluctuations of the quantities entering the first law of thermodynamics for isolated locally perturbed systems. We are more precisely focusing on heat fluctuations in quantum systems.

\medskip
In isolated systems\,---\,both classical and quantum\,---\,energy conservation holds almost surely, which is typically written as $\Delta U = 0$ in the context of thermodynamics. In classical systems, the first law of thermodynamics can then be expressed more precisely as the almost sure equality between the work~$W$ and the heat~$\Delta Q$ random variables: $0 =  \Delta U = \Delta Q - W$.
For quantum systems, the translation of the definitions of such thermodynamic random variables is not straightforward. In particular, depending on the definition, different classical thermodynamic relations may fail to hold beyond the level of averages.

Typically, a naive quantization of the classical definition, while preserving $\Delta Q=W$ in law, is hard to interpret physically and leads to a failure of the celebrated classical fluctuation relations~\cite{ECM,GC1,GC2,Jar,Crooks}; see \cite{TLH,JOPP}. To circumvent this issue, Kurchan~\cite{Kur} and Tasaki~\cite{Tas} proposed in~2000 a defintion of thermodynamic random variables as differences between two outcomes of energy measurement. We will call this approach the two-time measurement (TTM) framework.\footnote{It is also, sometimes, called Full Counting Statistics.} This definition has a clear physical interpretation and fluctuation relations can easily be derived in this framework. It thus has grown into a lively research subject; see the reviews \cite{EHM,CHT} and \cite{JOPP} for a more mathematically oriented approach.
Although successful with regard to fluctuation relations, and while preserving equality of the mean work and heat, the TTM definitions give distinct laws to~$\Delta Q$ and~$W$.
Therefore, statistical fluctuations of heat and work have to be studied independently.

The above picture motivates the study of heat fluctuations alone that we provide in this contribution. More precisely, we are interested in controlling the tails of the distribution of heat variations. We particularly aim at highlighting the difference between the classical framework and the quantum TTM framework. The work fluctuations are of minor interest to us since, for a bounded perturbation, work is almost surely uniformly bounded in time\,---\,in classical systems and for both the naive quantization and TTM definitions in quantum systems.
\medskip

For an isolated classical system\,---\,whose dynamics is governed by a time-independent perturbed Hamiltonian $H_\lambda= H_0+\lambda V$ generating a flow~$(\Xi_\lambda^t)_{t \in \rr}$\,---\,the heat variation $\Delta Q :=H_0\circ\Xi_\lambda^t-H_0$ is equal to the work $ W=-(\lambda V\circ\Xi_\lambda^t-\lambda V)$ by the invariance of $H_\lambda$ under the flow $(\Xi_\lambda^t)_{t\in\rr}$.
Following standard classical thermodynamics, the work is defined as an integral along a trajectory, that is $W:=-\int_0^t \lambda \frac{\d}{\d t} V(x_t)  \d t$. The equivalence of those two definitions of work follows from the fundamental theorem of calculus.

Clearly, if the perturbation~$V$ is bounded, the work~$W$ is surely bounded uniformly in time. So is the heat variation~$\Delta Q$, by their sure equality. Therefore,
boundedness of the perturbation~$V$ implies that the distribution~$\P_t$ of heat variation is compactly supported. If~$V$ is unbounded, the tails of~$\P_t$ are expected to be controlled by the strength parameter~$\lambda$.
We illustrate this with two examples in Section~\ref{sec:class-desc}.
\medskip

For a quantum system, the heat variation is defined in the TTM framework according to the following \emph{Gedankenexperiment}. A first measurement of~$H_0$, the unperturbed Hamiltonian, is performed at an initial time, yielding a result $E\in\operatorname{spec}H_0$; the system then evolves according to~$H_\lambda$, the perturbed Hamiltonian, for a time~$t$; and~$H_0$ is once again measured, yielding a result $E'\in\operatorname{spec}H_0$. The resulting heat variation is defined as the difference~$\Delta Q :=E'-E$ between the two measurement outcomes.

For confined systems, where the unperturbed Hamiltonian~$H_0$ and the density matrix~$\rho$ representing the initial state are commuting matrices, the characteristic function of the measure~$\P_t$ assigning probabilities to the differences~$\Delta Q$ in the above thought experiment is
$$\cht(\alpha):=\tr(\e^{\i t H_\lambda}\e^{\i\alpha H_0}\e^{-\i t H_\lambda}\e^{-\i \alpha H_0}\rho).$$

A rewriting of this expression in terms of algebraic objects that survive the thermodynamic limit serves as the basis for the construction of the generalization of this measure to infinitely extended systems; see Section~\ref{sec:fcs-setup}.

Mirroring the classical equality $ W=-(\lambda V\circ\Xi_\lambda^t-\lambda V)$, we define the work~$W$ in the TTM framework
according to a similar \emph{Gedankenexperiment},
but with measurements of~$-\lambda V$ instead of~$H_0$.\footnote{In the expression of the characteristic function $\rho$ also needs to be substituted for its projection onto the matrices commuting with $V$.} This definition ensures that $\Delta Q=W$ in mean regardless of the initial state $\rho$.
This equality in mean however does not extend to an equality in law. The characteristic function of these two probability measures are, in general, not equal.
Hence, while~$V$ being bounded implies $|W|\leq 2|\lambda|\|V\|$ almost surely, it does not imply a control of the tails of~$\P_t$. Other arguments are needed.
\medskip
Certainly motivated by the classical picture, it is colloquially assumed that, in the TTM framework, the tails of~$\P_t$ are still mainly controlled by~$\lambda$; see \cite[\S III.B.3]{EHM}. In this contribution, we challenge this assumption. We  provide sufficient conditions, in the thermodynamic limit, for the control of the tails of~$\P_t$ that do not involve the value of~$\lambda$ as long as~$V$ is bounded. We furthermore test the necessity of these conditions in different canonical models of quantum statistical mechanics. We aim at  conveying that smallness of~$\lambda$ is not the relevant condition to impose in order to prevent large heat fluctuations. Instead, a crucial role is played by the ultraviolet (UV) regularity of~$V$.

\medskip
A sufficient condition for the exponential control of the tails of~$\P_t$ has been introduced in our partly co-authored work~\cite{BJPPP}. There, we considered quantum dynamical systems arising as the limit of a sequence of finite dimensional systems. In the present article, we work directly in the thermodynamic limit, via the operator algebraic formulation of quantum statistical mechanics. 
Let~$\delta$ be the generator of the unperturbed dynamics on the $C^*$-algebra $\cO$ of observables of the system (in finite dimension, $\delta({\,\cdot\,})=\i[H_0,{\,\cdot\,}]$). Assume that the initial state $\omega$ over~$\cO$ is invariant under the unperturbed dynamics, {i.e.} $\omega\circ\e^{t\delta}=\omega$ for all~$t\in\rr$. Let~$V$ be a self-adjoint element of~$\cO$ defining the perturbation. In the language of the present paper, the result of~\cite{BJPPP} can be summarized by the implication
 \begin{equation}
 \label{eq:BJPPP_implication}
V\in\Dom \e^{\i\frac12\gamma \delta}\cap \Dom \e^{-\i\frac12\gamma \delta} \implies \sup_{t\in\rr}\ee_t (\e^{\gamma|\Delta Q|}) < \infty
 \end{equation}
 with $\ee_t$ the expectation with respect to $\P_t$ and $\gamma\in\rr_+$. We provide a proof of this implication in our context in Section~\ref{sec:sufficient_cond}. In terms of control of the tails, Markov's inequality implies that if the left-hand side of~\eqref{eq:BJPPP_implication} holds for some $\gamma>0$, then there exists $C>0$ such that
 $$\sup_{t\in\rr}\P_t(|\Delta Q|>E)\leq C\e^{-\gamma E}$$
 for any $E>0$.

It follows trivially from~\eqref{eq:BJPPP_implication} that if the left-hand side holds, any moment $\ee_t(|\Delta Q|^m)$ is uniformly bounded in time. In Theorem~\ref{thm:diff}, we show that such a uniform bound for the moments holds under weaker conditions:
\begin{equation}
\label{eq:implication_moments}
V\in\Dom \delta^n\implies \sup_{t\in\rr}\ee_t(\Delta Q^{2n+2})<\infty
\end{equation}
for any $n\in\nn$. In terms of control of the tails, Markov's inequality implies that if $V \in \Dom \delta^n$ for some $n\in\nn$, then there exists $C>0$ such that
$$\sup_{t\in\rr}\P_t(|\Delta Q|>E)\leq C \, E^{-2n-2}$$
for any $E>0$.

Remark that both the left-hand side of \eqref{eq:BJPPP_implication} and \eqref{eq:implication_moments} do not depend on the initial state. To obtain these implications, we merely assume the initial state is invariant under the unperturbed dynamics. In particular, the implications hold even if the system is not initially at equilibrium. Furthermore, in each statement, the left-hand side does not depend on $\lambda$; only the value of the supremum in the corresponding right-hand side depends on it.

The implications of \eqref{eq:BJPPP_implication} on the large deviations of~$\P_t$ were discussed in \cite{BJPPP}. In \cite{BPP}, two of us and Y. Pautrat discuss the implications of an assumption akin to the left-hand side of \eqref{eq:BJPPP_implication} on the large deviations of heat currents in autonomous out of equilibrium open systems. In particular, we prove a translation symmetry of the cumulant generating function, first proposed in \cite{andrieux2009fluctuation}. We relate it to the large deviations of the conservation of heat currents and derive a part of the fluctuation-dissipation theorem under a time-reversal invariance assumption.

\medskip
Both implications \eqref{eq:BJPPP_implication} and \eqref{eq:implication_moments} establish sufficient conditions for the control of the tails of $\P_t$. In the third part of the paper, starting in Section \ref{sec:models}, we establish an essential necessity of these conditions for different usual models of quantum statistical mechanics, with $V$ bounded or unbounded. Our first model corresponds to a fermionic impurity interacting with a quasi-free Fermi gas at equilibrium. The second model is the bosonic counterpart of the first one and is similar to the interaction of an harmonic oscillator with a quasi-free Bose gas. The third model is a van Hove Hamiltonian which corresponds to a quasi-free Bose gas at equilibrium interacting with prescribed classical charges.
The classical counterparts of these two last models of bosons are discussed in Sections~\ref{sec:model-classic_quad} and~\ref{sec:class-linear} respectively.

In the first model, a quasi-free Fermi gas, $V$ is bounded and, in Theorem~\ref{thm:fermions-equiv-moments}, we show the equivalence
\begin{equation}
\label{eq:equiv_moments_fermions_intro}
\begin{split}
   V\in\Dom \delta^n
      &\iff \exists\ t_1<t_2\mbox{ : }\int_{t_1}^{t_2} \ee_t(\Delta Q^{2n+2}) \d t<\infty \\
      &\iff \sup_{t\in\rr}\ee_t (\Delta Q^{2n+2})<\infty.
\end{split}
\end{equation}
In this model, the assumption $V\in\Dom \delta^n$ is equivalent to $\int_{\rr_+} e^{2n}|f(e)|^2\d e<\infty$ with $e\mapsto |f(e)|^2$ the form factor of the perturbation~$V$, where $e\in\rr_+$ is the energy.
Similarly, the assumption $V\in\Dom \e^{\i\frac12\gamma \delta}\cap \Dom \e^{-\i\frac12\gamma \delta}$ is equivalent to $\int_{\rr_+} \e^{\gamma e}|f(e)|^2\d e<\infty$.
Hence the left-hand side of \eqref{eq:BJPPP_implication} and \eqref{eq:implication_moments} indeed correspond to UV regularity conditions.
It is with similar models in mind that we choose to refer to the left-hand sides of \eqref{eq:BJPPP_implication} and \eqref{eq:implication_moments} as UV regularity conditions.
Note however that for quantum spin systems, these conditions are typically the consequence of some locality assumptions for $V$ and $H_0$; see \cite[\S 6.2]{BR2} and \cite[\S 4.1]{BPP}.

In the second and third models, $V$ is unbounded and $e\mapsto |f(e)|^2$ is again its form factor. For both these models, we prove the equivalence
\begin{equation}
\label{eq:equiv_moments_bosons_intro}
\begin{split}
   \int_{\rr_+}e^{2n}|f(e)|^2\d e<\infty
      &\iff \exists\ t_1<t_2\mbox{ : }\int_{t_1}^{t_2} \ee_t(\Delta Q^{2n+2}) \d t<\infty\\
      &\iff \sup_{t\in\rr}\ee_t (\Delta Q^{2n+2})<\infty.
\end{split}
\end{equation}
For the first bosonic model, we prove an implication similar to \eqref{eq:BJPPP_implication}, namely
\begin{equation}\label{eq:implication_fourier_bosons_intro}
\int_{\rr_+}\e^{\gamma e}|f(e)|^2\d e<\infty\implies \exists\ \gamma'\in(0,\gamma)\mbox{ : } \sup_{t\in\rr}\ee_t(\e^{\gamma' |\Delta Q|})<\infty.
\end{equation}
For the second bosonic model, we prove the equivalence
\begin{equation}\label{eq:equiv_exp_bosons_intro}
\begin{split}
   \int_{\rr_+}\e^{\gamma e}|f(e)|^2\d e<\infty
      &\iff \exists\ t_1<t_2\mbox{ : }\int_{t_1}^{t_2}\ee_t(\e^{\gamma|\Delta Q|})\d t<\infty \\
      &\iff \sup_{t\in\rr}\ee_t(\e^{\gamma |\Delta Q|})<\infty.
\end{split}
\end{equation}
For this last model, we moreover prove that~$\P_t$ is the law of an inhomogeneous Poisson process for every~$t$, while its classical counterpart is the law of a Gaussian random variable.

\medskip
The existence of these equivalences highlights the contrast between classical statistical mechanics and the quantum TTM framework. For a quantum system, if one aims at controlling the tails of the heat variation, before discussing the value of~$\lambda$, one has to assume that the interaction has sufficient UV regularity. Typically, if one uses a cutoff $f(e)=0$ for~$e$ large enough, completely preventing the contribution of energy scales at which the validity of the model is no longer guaranteed, then the left-hand side of \eqref{eq:BJPPP_implication} holds for any $\gamma>0$. This discussion is not necessary for classical systems.

Heuristically speaking, the physical picture can be understood in terms of the Fermi golden rule. According to this rule, the transition rates between different energy levels $E$ and $E'$ is given roughly by $T(E',E):=|\langle E', V E\rangle|^2$ with $\{|E\rangle\}_E$ the energy eigenstates of~$H_0$.
The underlying physical intuition for the left-hand side of~\eqref{eq:BJPPP_implication} [resp. \eqref{eq:implication_moments}] is roughly a condition controlling $T(E',E)\e^{\gamma |E'-E|}$ [resp. $T(E',E)|E'-E|^{2n}$] for large values of $|E-E'|$.
Hence, these assumptions control the decay of the transition rates as $|E'-E|$ goes to~$\infty$.

Note that, while important for the control of the tails at finite $\lambda$, the UV regularisation assumptions become irrelevant in the limit $t\to\infty$ and then $\lambda\to0$, if one assumes return to equilibrium. Indeed, adapting \cite{JPPP}, it can be shown that in this limit $\P_t$ converges weakly to a Dirac measure in $0$.

\medskip
As pointed out in the conclusion of \cite{BFJP1}, though $\P_t$ is generally thought not directly accessible as it involves the projective measurement of non-local quantities, proposals have emerged allowing for a sampling of~$\P_t$ using an interaction with an auxiliary qubit~\cite{DCH, MDCP, CBK, RCP}. The proposal of~\cite{DCH} involves only a local interaction between the qubit and the system and therefore appears more appropriate for thermodynamic systems. Hence,~$\P_t$ may be experimentally accessible indirectly and a phenomenon akin to heavy tails may be observed.

\paragraph{Structure.}
The paper is organised  as follows. In Section~\ref{sec:class-desc}, we discuss shortly classical systems with emphasis on two models with~$V$ unbounded whose quantum counterpart will be studied in Section~\ref{sec:models}.
In Section~\ref{sec:sufficient_cond}, we introduce the $C^*$-algebraic formalism of quantum statistical mechanics and prove~\eqref{eq:BJPPP_implication} and~\eqref{eq:implication_moments}.
In Section~\ref{sec:models}, we introduce the three models for which we test the necessity of the UV regularity conditions.
The proof of the implication~\eqref{eq:implication_fourier_bosons_intro} and the equivalences~\eqref{eq:equiv_moments_fermions_intro}, \eqref{eq:equiv_moments_bosons_intro} and~\eqref{eq:equiv_exp_bosons_intro} for these models are postponed to Section~\ref{sec:models-proofs}.
Appendix~\ref{app:self-adjoint} contains technical results on the self-adjointness of some operators involved in the definition of~$\P_t$ for unbounded~$V$.
Finally, in Appendix~\ref{app:TL}, we show that the measure~$\P_t$ defined in our models through the algebraic formalism indeed emerges as the thermodynamic limit of measures describing two-time measurement protocols for some finite dimensional systems.

\paragraph{Acknowledgements.} We would like to thank Vojkan Jak\v{s}i\'{c} for informative discussions and suggestions about this project, and Claude-Alain Pillet and Yan Pautrat for useful comments about our work.
The research of T.B. has been supported by ANR-11-LABX-0040-CIMI within the program ANR-11-IDEX-0002-02 and ANR project StoQ (ANR-14-CE25-0003-01).
The research of A.P. was partially supported by ANR project SQFT (ANR-12-JS01-0008-01) and ANR grant  NONSTOPS (ANR-17-CE40-0006-01, ANR17-CE40-0006-02,
ANR-17-CE40-0006-03).
The research of {R.R.} was partially supported by NSERC and FRQNT. {R.R.} would like to thank the {Laboratoire de physique th\' eorique at Universit\'{e} Paul Sabatier, where part of this research was conducted, for its support and hospitality.

\section{A short classical detour}
	\label{sec:model-gauss}\label{sec:classic}
		\label{sec:class-desc}
\newcommand{\freq}{\nu}
To give some perspective on our results we first make a short detour through classical statistical mechanics. We refer the reader to \cite[\S 3]{thirringbook} for a detailed definition of classical Hamiltonian dynamical systems and to~\cite{ChMa} for an overview of the possible problems arising when considering an infinite-dimensional phase space.

Let $(K,\varpi)$ be a connected smooth symplectic manifold. The manifold~$K$ is endowed with the meaning of a classical system phase space. Since we are concerned with a thermodynamic setting, the manifold~$K$ is not assumed to be finite dimensional.
The unperturbed Hamiltonian of the system is a continuously differentiable function $H_0:K\to\rr$. The perturbation is a second continuously differentiable function $V:K\to\rr$, giving rise to the perturbed Hamiltonian~$H:=H_0+V$.

Under minimal technical assumptions on~$H$, it defines a continuous Hamiltonian flow~$(\Xi^t)_{t \in \rr}$ defined on~$K$, and the evolution of any sufficiently regular function~$g$ satisfies
$$
	\od{}{t}(g \circ \Xi^t) = \{g , H\} \circ \Xi^t,
$$
where~$\{{\,\cdot\,},{\,\cdot\,}\}$ is the Poisson bracket induced by~$\varpi$; see the general conservation theorem in~\cite{ChMa}.

We define the heat variation between times~$0$ and~$t$ as the difference between the unperturbed energy at times~$t$ and~$0$:
$
	\Delta Q:=H_0\circ\Xi^t - H_0
$.
We give some arguments for this choice in Remark~\ref{rk:heat_def}. Since $H\circ\Xi^t=H$ and $H=H_0+V$, equivalently,
\begin{equation}\label{eq:def_DQ_classic}
\Delta Q:=V-V\circ\Xi^t.
\end{equation}
Note that under minimal technical assumptions\footnote{See the conservation theorems in~\cite{ChMa}.} on~$V$,
$$\Delta Q = -\int_0^t\frac{\d}{\d s}V\circ\Xi^s \d s=-\int_0^t \{V,H_0\}\circ\Xi^s \d s.$$
This equality is the expression of the first law of thermodynamics for a closed system, the right-hand side being the work.

This definition through the work and the first law is useful when one is concerned with thermodynamic systems. Indeed, in some standard examples, the phase space needs to be enlarged to $\bar K\supset K$, a measurable space in which~$K$ is dense (see for example~\cite{JPActa,LSJSP}). Then, the functions $H_0$ and $V$, and the flow $\Xi^t$ are typically extended by continuity to $\bar K$. Though~$H_0$ may be infinite at some points in~$\bar K$, we assume that the perturbation~$V$ remains finite everywhere on $\bar K$. The function~$\Delta Q$ on~$K$ is then extended by continuity to a function on~$\bar K$ using the extensions of~$V$ and~$\Xi^t$.

With these definitions in mind, we introduce the statistical aspect of the model: the initial state of the system is a probability measure~$\mu$ on the phase space~$\bar K$ and the heat variation~$\Delta Q$ becomes a random variable. We are interested in characterizing the law~$\P_t$ of~$\Delta Q$ with respect to~$\mu$.

\begin{remark}
	We denote $\ee_t$ the expectation with respect to~$\P_t$. The random variable~$\Delta Q$ is denoted without index $t$ to highlight that we do not study a stochastic process but only a family~$(\P_t)_{t \in \rr}$ of measures on~$\rr$.
\end{remark}

\begin{remark}
	Whenever the initial definition $\Delta Q =H_0 \circ \Xi^t - H_0$ makes sense,~$\P_t$ has characteristic function
	\begin{equation}\label{eq:two_time_classic}
		\ee_t(\Exp{\i \alpha \Delta Q}) = \mu(\Exp{\i \alpha (H_0 \circ \Xi^t - H_0)}) = \mu(\Exp{\i \alpha H_0} \circ \Xi^t\  \Exp{-\i \alpha H_0}).
	\end{equation}
	 This formula parallels an expression we will encounter for quantum systems (see Remark~\ref{rk:finite_dim}).
\end{remark}

The following proposition is a trivial consequence of the definition of~$\Delta Q$.
\begin{proposition}\label{prop:classical_bounded_V}
	If $C:=\sup_{x\in\bar K}|V(x)|<\infty$, then  for all~$t\in\rr$, $|\Delta Q|<2C$ surely.
\end{proposition}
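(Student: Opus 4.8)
The plan is to read the bound directly off the second expression for the heat variation, equation~\eqref{eq:def_DQ_classic}, rather than off the defining expression $\Delta Q = H_0\circ\Xi^t - H_0$. The reason for this choice is exactly the hypothesis of the proposition: $V$ is assumed bounded on $\bar K$, whereas $H_0$ need not be (indeed, in the extension procedure described above, $H_0$ may take the value $+\infty$ at some points of $\bar K$). So the first step is to recall that, by construction of the extended dynamics, the flow $\Xi^t$ maps $\bar K$ into itself, so that for every $x\in\bar K$ the numbers $V(x)$ and $V(\Xi^t(x))$ are both well defined and, by the very definition of $C=\sup_{\bar K}|V|$, satisfy $|V(x)|\le C$ and $|V(\Xi^t(x))|\le C$.

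The second step is just the triangle inequality applied to $\Delta Q = V - V\circ\Xi^t$: for every $x\in\bar K$,
\[
   |\Delta Q(x)| = |V(x) - V(\Xi^t(x))| \le |V(x)| + |V(\Xi^t(x))| \le 2C ,
\]
and since this holds at every point of $\bar K$ it holds $\mu$-almost surely\,---\,indeed surely\,---\,for any initial probability measure $\mu$, and uniformly in $t\in\rr$. To obtain the strict inequality $|\Delta Q|<2C$ one notes that equality $|\Delta Q(x)|=2C$ would require both $|V(x)|=|V(\Xi^t(x))|=C$ and $V(x)V(\Xi^t(x))\le 0$; this degenerate possibility is excluded, for instance, whenever the supremum defining $C$ is not attained, which is the generic situation in the thermodynamic settings we have in mind, and in any case the non-strict bound $|\Delta Q|\le 2C$ is all that is used in the sequel.

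I do not expect any real obstacle here: the whole content of the statement is the identity~\eqref{eq:def_DQ_classic}, which is nothing but the conservation law $H\circ\Xi^t=H$ rewritten via $H=H_0+V$, after which the estimate is the triangle inequality together with the definition of $C$. The only point worth flagging\,---\,and the reason the proposition is nontrivial in spirit even though its proof is a single line\,---\,is that it is the bounded function $V$, and not the energy $H_0$ itself, that controls $\Delta Q$; this is the classical fact whose quantum analogue fails and which motivates the rest of the paper.
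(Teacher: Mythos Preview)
Your approach is exactly what the paper intends: it does not give a separate proof but simply states that the proposition ``is a trivial consequence of the definition of~$\Delta Q$,'' meaning equation~\eqref{eq:def_DQ_classic} followed by the triangle inequality, precisely as you wrote. Your observation that the argument only yields $|\Delta Q|\le 2C$ rather than the strict inequality is fair; the paper does not address this point, and as you note the non-strict bound is all that matters.
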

As soon as the perturbation is bounded, the heat fluctuations are surely uniformly controlled in time. We will see in Section \ref{sec:model-fermions} that this implication is false in the context of quantum two-time measurements.

\begin{remark}\label{rk:heat_def}
The separation between work and heat involves a certain amount of arbitrariness. More precisely, the choice of $H_0$ and $V$ such that $H=H_0+V$ is partly arbitrary. There is no reason another couple of functions would not be preferred. For example, if $W:K\to \rr$ is a continuously differentiable function that remains finite on $\bar K$, then $H_0+W$ and $V-W$ could also be a suitable couple of functions. The choice of $H_0$ in the definition of heat must therefore be motivated by other arguments that are a priori model dependent.

Nevertheless, since~$H$ is not time dependent, a quite interesting generic argument is given by the entropy balance equation. Assume, if it exists, that~$\mu$ is the Gibbs measure at inverse temperature~$\beta>0$ for the Hamiltonian~$H_0$. Then the choice of the splitting $H=H_0+V$ for the definition of~$\Delta Q$ is motivated by the $\mu$-almost sure validity of the entropy balance equation
\begin{equation} \label{eq:ent-bal-classic}
		\beta \Delta Q=-\log \frac{\d \mu_{-t}}{\d\mu},
\end{equation}
where $\mu_s := \mu \circ \Xi^{-s}$ is the time evolution of the measure $\mu$ induced by the flow~$(\Xi^{s})_{s\in\rr}$. Averaging, we recover the mean entropy balance equation
\begin{equation} \label{eq:ent-bal-classic-avg}
	\beta \ee_t(\Delta Q) = -\int \log \frac{\d \mu_{-t}}{\d\mu} \d\mu =: S(\mu|\mu_{-t})
\end{equation}
where $S(\mu|\nu)$ is the relative entropy of $\mu$ with respect to $\nu$.
A similar argument holds for composite systems with different parts at different inverse temperatures.

If another choice of splitting of $H$ is made, Equation~\eqref{eq:ent-bal-classic} does not hold exactly and some boundary terms may have non-trivial asymptotic contributions; see \cite{JPSHarmonic}.
\end{remark}

To the authors' knowledge, if $V$ is unbounded, the existence of the moments or the extension to an open set of $\cc$ of the Fourier transform of $\P_t$  can only be discussed on a model-by-model basis. In the next two subsections, we discuss two models with unbounded~$V$ that can be studied using only properties of Gaussian random variables.

\subsection{Harmonic systems}\label{sec:model-classic_quad}
Let~$Y$ be a real separable Hilbert space. Let~$B$ be a normal operator on~$Y$ with domain $\Dom B$ and trivial kernel. Let~$X$ be the completion of~$\Dom B$ with respect to the norm~$\|B{\,\cdot\,}\|_Y$. We also denote the extension of~$B$ to~$X$ by the letter~$B$. We interpret the vector space $K=Y\oplus X$ as the phase space of a
collection of harmonic oscillators. We endow~$K$ with the Hilbert space structure given by the inner product
$$
	\langle \pi\oplus \phi, \pi'\oplus \phi'\rangle:=\langle \pi,\pi'\rangle_Y + \langle B \phi,B \phi'\rangle_Y.
$$
Let~$\varpi$ be the symplectic bilinear form on~$K$ defined by
$
	\varpi(x,y)=\langle\mathcal{L}_0^{-1} x,y\rangle,
$
where
$$
	\mathcal{L}_0
		:=\begin{pmatrix}
				0 & -B^* B\\ \one & 0
			\end{pmatrix}.
$$
The operator $\mathcal{L}_0$ is skew-adjoint with domain
$\Dom(\mathcal{L}_0)=\{(\pi,\phi)\in K\ : \pi\in \Dom B, B\phi \in \Dom B^*\}.$
The space $(K,\varpi)$ is our symplectic (Hilbert) manifold.

Let the unperturbed Hamiltonian be defined on~$K$ by
\begin{align*}
	H_0 :
		x &\mapsto \tfrac{1}{2} \|x\|^2.
\end{align*}

Let $v$ be a trace class self-adjoint operator on $K$ such that $\|\cL_0 v\|<\infty$. The perturbation $V$ is then defined on~$K$ as
\begin{align*}
 V:
  	x&\mapsto \tfrac12\langle x,v x\rangle,
\end{align*}
and $H(x) = \frac12\langle x,(\one+v) x\rangle$.

Let $\mathcal{L}=\mathcal{L}_0(\one+v)$. Since $\mathcal{L}_0$ is skew-adjoint and  $\|\mathcal{L}_0 v\|<\infty$, perturbation theory implies that~$(\e^{t\mathcal{L}})_{t \in \rr}$ is a semigroup of bounded operators on~$K$. It is easy to check that the flow~$(\Xi^t)_{t\in\rr}$ associated to~$H$ is then given by
$\Xi^t (x) = \e^{t \mathcal{L}}x$
for any~$x\in K$.
Moreover, using the inequality $\|A^*TA\|_{\tr}\leq \|A\|^2\|T\|_{\tr}$ for $A\in\cB(K)$ and $T$ trace class, $v_t:=\e^{t\cL^*}v\e^{t\cL}$ remains trace class for any $t\in\rr$. Here, $\|{\,\cdot\,}\|_{\tr}$ denotes the trace norm.

\medskip
We turn to the construction of Gaussian states. First,~$K$ needs to be enlarged. Since it is separable by assumption, there exists an orthonormal basis~$(u_i)_{i\in\nn}$ of~$K$. Any such countable orthonormal basis (ONB) induces an isomorphism of $K$ into $\ell^2(\nn; \rr)$: $K\ni x\mapsto (\langle u_i,x\rangle)_{i\in\nn}$.
Let $(l_i)_{i\in\nn}$ be a sequence of strictly positive numbers such that $\sum_{i\in\nn} l_i=1$. Let $\langle{\,\cdot\,},{\,\cdot\,} \rangle_l$ be the inner product on~$K$ defined by
\begin{equation*}
	\braket{x,y}_l := \sum_{i \in \nn} l_i \braket{u_i,x} \braket{u_i,y}
\end{equation*}
and let~$\bar K$ be the completion of~$K$ with respect to the norm induced by $\langle{\,\cdot\,},{\,\cdot\,} \rangle_l$.

Consider a positive definite bounded linear operator~$D$ on~$K$.
As pointed out for example in~\cite{JPR}, by Kolmogorov's extension Theorem, there is a unique measure~$\mu_D$ on~$\bar K$ whose restriction to $\operatorname{linspan}\{u_i\}_{i \in I}$, with $I \subset \nn$ a finite set, is the centered Gaussian measure with inverse covariance matrix~$[\braket{u_i, D u_j}]_{i,j \in I}$.
The fact that~$\mu_D$ is indeed concentrated on $\bar K$ is guaranteed by the estimate~$\sum_{n \in I} l_n \braket{u_n, D u_n} \leq \|D\|$. By uniqueness of the measure, the inclusion $\supp \mu_D \subseteq \bar K$  does not depend on the choice of $(u_i)_{i \in\nn}$ and $(l_i)_{i\in \nn}$.

The measure $\mu_D$ then has characteristic function
\begin{equation*}
\begin{split}
\xi_D: 		y&\mapsto \e^{-\frac12 \langle y,Dy\rangle},
\end{split}
\end{equation*}
defined on $\bar K^*\subset K$, the dual of $\bar K$.

Conversely, the measure~$\mu_D$ can equivalently be constructed directly using the above characteristic function and the Bochner--Minlos Theorem, see for example~\cite{JPActa,LSJSP}.

Since the choice of ONB $(u_i)_{i\in\nn}$ and sequence $(l_i)_{i\in\nn}$ is arbitrary, we can choose, for any self-adjoint trace class operator~$a$ with trivial kernel, $(u_i)_{i \in \nn}$ an ONB diagonalizing~$|a|$, and $(l_i)_{i\in\nn}$ its eigenvalues divided by $\tr|a|$. We then have $\sup_{x\in\bar K}|\langle x, a x\rangle|/\|x\|_{\bar K}<\tr|a|$.
Hence $x\mapsto \braket{x,ax}$ extends by continuity to a finite function on $\supp \mu_D\subset \bar K$. This procedure is easily adapted for~$a$ any trace class operator.

Since $v$ is trace class, $V$ can be extended by continuity to $\supp \mu_D \ni x\mapsto \langle x,vx\rangle\in \rr$. Similarly $\supp\mu_D\ni x\mapsto \langle x,v_t x\rangle\in\rr$ is extended by continuity from $K$ to $\supp \mu_D$. On the contrary, the extension of the unperturbed Hamiltonian can take infinite values on~$\supp \mu_D$.

\begin{definition}\label{def:heat_classical_quadratic}
	The probability distribution~$\P_t$ describing the \emph{heat variation} is the law of the random variable~$\Delta Q$ defined in Equation~\eqref{eq:def_DQ_classic} with respect to~$\mu_D$. Namely, $\P_t$ is the law of
	$$
		x \mapsto \braket{x,vx} - \braket{\e^{t\mathcal{L}}x,v\e^{t\mathcal{L}}x},
	$$
	with respect to~$\mu_{D}$.
\end{definition}

\begin{remark}
	Because $v$ and $\e^{t\mathcal{L}^*}v\e^{t\mathcal{L}}$ are trace class, this random variable is $\mu_D$-integrable.
\end{remark}

In this quadratic model, the Fourier transform of $\P_t$ extends analytically to an open neighborhood of $\rr$ in $\cc$. Hence, all the moments of $\Delta Q$ exist. A finer study of the following proposition, including a proof of a large deviation principle with linear rate function can be found in \cite{BJPHarmonic}.
\begin{proposition}\label{prop:class-exp}
	Given $\P_t$ of Definition \ref{def:heat_classical_quadratic}, for all $t\in\rr$, there exists $\gamma_t>0$ such that
	\begin{equation}\label{eq:exponential_bound_classical}
		\ee_t [\e^{\gamma_t |\Delta Q|}] < \infty.
	\end{equation}
	Moreover, if $-1\not\in\sp v$, then there exists $\gamma>0$ such that,
	\begin{equation}\label{eq:uniform_exponential_bound_classical}
		\sup_{t\in\rr_+} \ee_t [\e^{\gamma |\Delta Q|}] < \infty.
	\end{equation}
\end{proposition}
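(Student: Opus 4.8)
The plan is to reduce both statements to an elementary moment generating function estimate for a trace-class quadratic functional of a Gaussian vector. Set $v_t:=\e^{t\cL^*}v\e^{t\cL}$ and $w_t:=v-v_t$, so that, by Definition~\ref{def:heat_classical_quadratic}, $\Delta Q=\langle x,w_tx\rangle$. As recalled in the construction above, $v$ and $v_t$ are self-adjoint and trace class, hence so is $w_t$, with
$$
\|w_t\|_{\tr}\le\|v\|_{\tr}+\|v_t\|_{\tr}\le\bigl(1+\|\e^{t\cL}\|^2\bigr)\|v\|_{\tr}.
$$
I would also record that conservation of $H$ along the flow gives $\langle\e^{t\cL}x,(\one+v)\e^{t\cL}x\rangle=\langle x,(\one+v)x\rangle$ for every $x\in K$; since both sides are bounded quadratic forms, polarization turns this into the operator identity $\e^{t\cL^*}(\one+v)\e^{t\cL}=\one+v$, equivalently $w_t=\e^{t\cL^*}\e^{t\cL}-\one$. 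This form is not needed for the first assertion, but it exhibits the energy interpretation $\Delta Q\propto H_0\circ\Xi^t-H_0$ and it will supply the uniform propagator bound below.

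For the Gaussian computation, put $\Gamma_t:=|w_t|^{1/2}D|w_t|^{1/2}$, a positive trace-class operator (as $D$ is bounded and $w_t$ trace class) with $\tr\Gamma_t=\tr(D|w_t|)\le\|D\|\,\|w_t\|_{\tr}$. Recalling that $\mu_D$ is the centered Gaussian with characteristic function $\Exp{-\frac12\langle y,Dy\rangle}$, the standard diagonalisation of a trace-class quadratic functional of a Gaussian vector shows that $\langle x,|w_t|x\rangle$ has, under $\mu_D$, the law of $\sum_k\kappa_k(t)Z_k^2$, where the $\kappa_k(t)\ge0$ are the eigenvalues of $\Gamma_t$ (so $\sum_k\kappa_k(t)=\tr\Gamma_t$) and the $Z_k$ are i.i.d.\ standard Gaussian. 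Since $-|w_t|\le w_t\le|w_t|$ gives $|\Delta Q|\le\langle x,|w_t|x\rangle$ pointwise on $\supp\mu_D$, for any $\gamma>0$ with $2\gamma\,\tr\Gamma_t<1$ we get, using $\ee[\e^{aZ^2}]=(1-2a)^{-1/2}$ and $-\log(1-u)\le u/(1-u)$ on $[0,1)$,
$$
\ee_t\bigl[\e^{\gamma|\Delta Q|}\bigr]\ \le\ \ee\bigl[\e^{\gamma\sum_k\kappa_k(t)Z_k^2}\bigr]\ =\ \prod_k\bigl(1-2\gamma\kappa_k(t)\bigr)^{-1/2}\ \le\ \exp\!\left(\frac{\gamma\,\tr\Gamma_t}{1-2\gamma\,\tr\Gamma_t}\right).
$$

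The first assertion follows by fixing $t$: if $w_t=0$ then $\Delta Q=0$ and there is nothing to prove, and otherwise any $\gamma_t\in(0,(2\tr\Gamma_t)^{-1})$ makes the right-hand side finite. For the uniform bound, I would show that $-1\notin\sp v$ forces $R:=\sup_{t\in\rr}\|\e^{t\cL}\|<\infty$. Since $v$ is self-adjoint and compact, $-1\notin\sp v$ makes $\one+v$ boundedly invertible; in the relevant regime, where the perturbed energy is bounded below (so $\one+v\ge0$, as is implicit for a collection of harmonic oscillators), this upgrades to $\one+v\ge c\,\one$ for some $c>0$, and then the conservation identity gives $\e^{t\cL^*}\e^{t\cL}\le c^{-1}\e^{t\cL^*}(\one+v)\e^{t\cL}=c^{-1}(\one+v)\le c^{-1}\|\one+v\|\,\one$, whence $R\le\sqrt{\|\one+v\|/c}$. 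Therefore $\sup_{t}\tr\Gamma_t\le\|D\|(1+R^2)\|v\|_{\tr}=:N<\infty$, and taking $\gamma:=(4N)^{-1}$ makes the displayed estimate uniform in $t$: $\sup_{t\in\rr_+}\ee_t[\e^{\gamma|\Delta Q|}]\le\e^{1/2}$.

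The whole argument is soft once the uniform propagator bound $\sup_t\|\e^{t\cL}\|<\infty$ is available, and this is the only genuine obstacle. It does not follow from $-1\notin\sp v$ read literally: an ``inverted'' direction --- an eigenvalue of $v$ strictly below $-1$ --- produces exponential growth of $\|\e^{t\cL}\|$, hence of $\tr\Gamma_t$, and then no $\gamma$ works uniformly. What rescues the uniform bound is the positivity of $\one+v$ (the perturbed energy being bounded below) together with conservation of $H$; this is the point in the statement that has to be read with care, and it is where I would concentrate the rigorous work.
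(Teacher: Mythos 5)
Your argument is essentially the paper's: reduce $\ee_t[\e^{\gamma|\Delta Q|}]$ to a Gaussian moment-generating-function estimate for a trace-class quadratic form, and for the uniform bound use conservation of $H$ to control $\sup_t\|\e^{t\cL}\|$. The minor differences are cosmetic: you dominate $|\Delta Q|\le\langle x,|w_t|x\rangle$ and diagonalise $\Gamma_t=|w_t|^{1/2}D|w_t|^{1/2}$ to get an explicit bound $\exp\bigl(\gamma\tr\Gamma_t/(1-2\gamma\tr\Gamma_t)\bigr)$, while the paper splits $\e^{\gamma|\Delta Q|}\le\e^{\gamma\Delta Q}+\e^{-\gamma\Delta Q}$ and invokes ``properties of Gaussian measures''; both routes rest on the same trace-class input.

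Your most valuable contribution is the observation about the propagator bound. You are right that the chain $\e^{t\cL^*}(\one+v)\e^{t\cL}=\one+v\;\Rightarrow\;\|\e^{t\cL}\|^2\le\|\one+v\|\,\|(\one+v)^{-1}\|$ does not follow from invertibility of $\one+v$ alone: conjugation preserves the indefinite quadratic form $\langle\,\cdot\,,(\one+v)\,\cdot\,\rangle$ but gives no a priori control on the Euclidean norm unless $\one+v>0$. A two-dimensional example with $v=\operatorname{diag}(-2,0)$ and $\cL_0$ the standard rotation generator has $-1\notin\sp v$ but $\cL=\cL_0(\one+v)$ with real nonzero eigenvalues, so $\|\e^{t\cL}\|$ grows exponentially. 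The paper's proof is thus only correct under the additional (physically natural, but unstated) positivity $\one+v\ge c\,\one>0$, i.e.\ $\sp v\subset(-1,\infty)$. You correctly locate the fix (boundedness below of $H$) and this is where the rigour should indeed be spent.

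Two small points to tighten in your write-up. First, the reduction to the law of $\sum_k\kappa_k(t)Z_k^2$ implicitly uses that $\langle x,|w_t|x\rangle$ extends by continuity to $\supp\mu_D$; this is exactly the construction the paper gives for trace-class quadratic forms, and it is worth citing explicitly rather than treating as obvious. Second, in the product-to-exponential step you use $\max_k\kappa_k(t)\le\tr\Gamma_t$; state it, since it is what makes the single scalar $\tr\Gamma_t$ control the whole product.
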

\begin{proof}
Since $\ee_t[\e^{\gamma |\Delta Q|}]\leq \ee_t[\e^{\gamma \Delta Q}]+\ee_t[\e^{-\gamma \Delta Q}]$, the first bound \eqref{eq:exponential_bound_classical} follows directly from properties of Gaussian measures and the fact that $v-\e^{t\mathcal{L}^*}v\e^{t\mathcal{L}}$ is trace class.
The uniform bound \eqref{eq:uniform_exponential_bound_classical} follows from the fact that $-1\not \in \sp v$ implies $v-\e^{t\mathcal{L}^*}v\e^{t\mathcal{L}}$ is uniformly bounded with respect to $t$. Indeed, in that case, $\one+v$ is invertible with bounded inverse so that $\e^{t\mathcal L^*}(\one+v)\e^{t\mathcal{L}}=\one+v$ implies $\|\e^{t\mathcal L}\|^2\leq \|\one+v\|\|(\one+v)^{-1}\|$.
\end{proof}
\begin{remark}
	Note that the size of the interval of~$\gamma\in\rr_+$ for which~\eqref{eq:uniform_exponential_bound_classical} holds is typically decreasing in $\|v\|$. This model then illustrates the expectation stated in the Introduction that the heat fluctuations are controlled by~$\lambda$ when~$V$ is unbounded. Further discussion of the relationship between~$\lambda$ and~$\gamma$ for such harmonic models can be found in~\cite{BJPHarmonic}.
\end{remark}
\begin{corollary}\label{prop:class-moments}
	Given $\P_t$ of Definition \ref{def:heat_classical_quadratic},
	\[
		\ee_t [|\Delta Q|^{k}]<\infty
	\]
	for all $k\in\nn$ and any $t\in\rr_+$. Moreover, if $-1\not\in\sp v$, then
	\[
		\sup_{t \in \rr} \ee_t [|\Delta Q|^{k}] <\infty.
	\]
	for all $k\in\nn$,
\end{corollary}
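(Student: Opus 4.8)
The plan is to read off the corollary directly from Proposition~\ref{prop:class-exp}, using only the elementary fact that exponential growth dominates polynomial growth: for every $k\in\nn$ and every $\gamma>0$ one has $|s|^{k}\le \tfrac{k!}{\gamma^{k}}\,\e^{\gamma|s|}$ for all $s\in\rr$, since $\e^{\gamma|s|}$ exceeds the degree-$k$ term $\gamma^{k}|s|^{k}/k!$ of its own Taylor expansion. So the whole argument amounts to two short invocations of this bound.

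For the first assertion, I would fix $t\in\rr_+$, take the constant $\gamma_t>0$ furnished by~\eqref{eq:exponential_bound_classical}, substitute $s=\Delta Q$ in the inequality above, and integrate against $\mu_D$ to get
\[
   \ee_t[|\Delta Q|^{k}]\le \frac{k!}{\gamma_t^{k}}\,\ee_t\bigl[\e^{\gamma_t|\Delta Q|}\bigr]<\infty .
\]
For the second assertion, I would instead assume $-1\notin\sp v$, take a single $\gamma>0$ for which the uniform bound~\eqref{eq:uniform_exponential_bound_classical} holds, and apply the same pointwise inequality with this $t$-independent $\gamma$, obtaining
\[
   \sup_{t\in\rr}\ee_t[|\Delta Q|^{k}]\le \frac{k!}{\gamma^{k}}\,\sup_{t\in\rr}\ee_t\bigl[\e^{\gamma|\Delta Q|}\bigr]<\infty .
\]

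I do not expect any genuine obstacle: the corollary is a formal consequence of Proposition~\ref{prop:class-exp}, whose proof already contains the substantive input, namely that $v-v_t$ is trace class for each $t$ and that $\sup_{t}\|v-v_t\|_{\tr}<\infty$ when $-1\notin\sp v$. A self-contained alternative, which I would mention only as a remark, is to bypass the exponential bounds entirely and compute each moment by Wick's theorem, using that $\Delta Q=\langle x,(v-v_t)x\rangle$ is a quadratic form in the Gaussian vector $x\sim\mu_D$; this route reproduces exactly the same trace-class conditions but is more laborious than the one-line argument above.
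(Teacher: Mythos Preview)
Your proof is correct and matches the paper's intent: the paper states this corollary without proof, treating it as an immediate consequence of Proposition~\ref{prop:class-exp}, which is precisely what you spell out via the elementary bound $|s|^k\le (k!/\gamma^k)\e^{\gamma|s|}$.
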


We will see in Section~\ref{sec:model-bosons} that, for the quantum harmonic oscillator in the context of quantum two-time measurement, further assumptions on~$V$ are needed to prove analogues of Proposition~\ref{prop:class-exp} and Corollary~\ref{prop:class-moments}.
\begin{example}
	We give an example of this abstract setting that can be compared with the quantum harmonic oscillator of Section \ref{sec:model-bosons}. We consider a single harmonic oscillator interacting with a Gaussian 1-dimensional thermal bath at inverse temperature $\beta > 0$. The oscillator has Hilbert space $K_\text{osc} = \rr \oplus \rr$ with inner product
	$$
		\braket{(p,q),(p',q')}_\text{osc} = p p' +  q q'.
	$$
 Let $\dot H^1(\rr)$ be the completion of $H^1(\rr)$ with respect to the norm $\|\nabla{\,\cdot\,}\|$. Then the bath Hilbert space is the vector space $K_\text{bath}:=L^2(\rr) \oplus \dot H^1(\rr)$ equipped with the inner product
	$$
		\braket{(\pi,\phi),(\pi',\phi')}_\text{bath} = \braket{\pi,\pi'}_{L^2} + \braket{\nabla \phi, \nabla \phi'}_{L^2}.
	$$
	The Hilbert space for the compound system is then $K = K_\text{osc} \oplus K_\text{bath}$ with unperturbed Hamiltonian
	\begin{align*}
		H_0 : K_\text{osc} \oplus K_\text{bath} &\to \rr \\
			(p,q,\pi,\phi) &\mapsto \tfrac{1}{2}(\|(p,q)\|_\text{osc}^2 + \|(\pi,\phi)\|_\text{bath}^2) .
	\end{align*}
											
	We take as an initial state the measure $\mu_D$ for the bounded operator $D = \beta^{-1} \one$ on $K$. With $\ones = (0,1,0,0)$ and $\f = (0,0,0,f)$ for some non-zero $f \in H^2(\rr)$, we consider
	$$
		v = \ones \braket{\f,{\cdot}} + \f \braket{\ones, {\cdot}}.
	$$
						The corresponding perturbed Liouvillean is then
	$$
		\cL =
			\begin{pmatrix}
				0 & -1 & 0 & - \braket{f, {\cdot}}_{H^1} \\
				1 & 0 & 0 & 0 \\
				0 & \Delta f & 0 & \Delta \\
				0 & 0 & \one & 0
			\end{pmatrix}.
	$$
	The assumption $-1\not\in \sp v$ is satisfied if and only if $\|\nabla f\|_{L^2} \neq 1$.
\end{example}

\subsection{Linear perturbations}\label{sec:class-linear}
Consider the setting of the previous subsection but let $V:K\mapsto \rr$ instead be a real linear form defined by
$$
	V(x):=\langle f,x\rangle
$$
with $f\in K$. The perturbed Hamiltonian on~$K$ is then
\begin{align*}
	H:
				x&\mapsto \tfrac12 \|x\|^2+\langle f,x\rangle.
\end{align*}
\begin{remark}
With an appropriate choice of gauge, these definitions correspond to the Hamiltonian description of the electromagnetic field $x$ in presence of a charge current encoded in~$f$. The hypothesis $f\in K$ corresponds to a mild UV regularity condition that prevents the apparition of infinite electrostatic energy. It is necessary to properly define the model. Physically, this condition is justified by taking into account that charges are not point like but have some ``volume''~\cite[\S 2.3]{spohn2004dynamics}.

The technical condition $f\in K$ here is analogous to the assumption $f\in\Dom\hat e$ in the quantum counter part to this model discussed in Section~\ref{sec:model-bosons-lin}, and is significantly weaker than the essentially necessary UV conditions that are used to control the tails of~$\P_t$ there.
\end{remark}

It is again easy to show that the flow defined by the perturbed Hamiltonian~$H$ is given by
$$
	\Xi^t(x)= \e^{t\mathcal{L}_0}x+(\e^{t\mathcal{L}_0}-1)f
$$
for any $t\in\rr$ and $x\in K$.
\begin{definition}\label{def:heat_classical_linear}
	The probability distribution~$\P_t$ describing the \emph{heat variation} is the law of the random variable $\Delta Q$ defined in Equation~\eqref{eq:def_DQ_classic} with respect to~$\mu_D$. Namely, $\P_t$ is the law of the random variable
	$$
		x\mapsto \langle f,(1-\e^{t\mathcal{L}_0})(f+x)\rangle
	$$
	with respect to the measure $\mu_D$.
\end{definition}

The following propositions are straightforward consequences of the properties of Gaussian random variables and show that the fluctuations of heat are in some sense trivial in this model.

\begin{proposition}\label{prop:class-linear}
	For any $t\in\rr$, $\P_t$ is a Gaussian probability measure with mean
				$\langle f,(1-\e^{t\mathcal{L}_0})f\rangle$
	and variance
	$\| D^{\frac12}(1-\e^{-t\mathcal{L}_0})f\|^2$.
	\end{proposition}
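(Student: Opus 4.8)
The plan is to use that the random variable defining $\P_t$ in Definition~\ref{def:heat_classical_linear} is an affine functional of the Gaussian-distributed vector $x$, together with the elementary fact that an affine image of a Gaussian measure is Gaussian with readily computable mean and variance. First I would rewrite
\begin{equation*}
   \langle f, (1-\e^{t\mathcal{L}_0})(f+x)\rangle = \langle f, (1-\e^{t\mathcal{L}_0})f\rangle + \langle (1-\e^{-t\mathcal{L}_0})f, x\rangle ,
\end{equation*}
using that $\mathcal{L}_0$ is skew-adjoint, so that $(\e^{t\mathcal{L}_0})_{t\in\rr}$ is a strongly continuous unitary group on $K$ by Stone's theorem and $(\e^{t\mathcal{L}_0})^\ast = \e^{t\mathcal{L}_0^\ast} = \e^{-t\mathcal{L}_0}$. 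Setting $g := (1-\e^{-t\mathcal{L}_0})f \in K$ and $m := \langle f,(1-\e^{t\mathcal{L}_0})f\rangle$, the random variable is the constant $m$ plus the linear functional $x\mapsto\langle g,x\rangle$.

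Next I would identify the law of $x\mapsto\langle g,x\rangle$ under $\mu_D$. Since $D$ is bounded and $g\in K$, one has $\langle g,Dg\rangle<\infty$, and the construction of $\mu_D$ shows that this functional is well defined $\mu_D$-almost everywhere on $\bar K$ with law the centered Gaussian of variance $\langle g,Dg\rangle = \|D^{\frac12}g\|^2$; equivalently, its characteristic function is $\alpha\mapsto\e^{-\frac{\alpha^2}{2}\langle g,Dg\rangle}$, which is the natural extension of $\xi_D$ evaluated at $\alpha g$ (see the last paragraph). Combining this with the factorization above,
\begin{equation*}
   \ee_t(\e^{\i\alpha\Delta Q}) = \e^{\i\alpha m}\,\e^{-\frac{\alpha^2}{2}\langle g,Dg\rangle} = \exp\!\Big(\i\alpha\,\langle f,(1-\e^{t\mathcal{L}_0})f\rangle - \tfrac{\alpha^2}{2}\,\|D^{\frac12}(1-\e^{-t\mathcal{L}_0})f\|^2\Big),
\end{equation*}
which is the characteristic function of the Gaussian measure with mean $\langle f,(1-\e^{t\mathcal{L}_0})f\rangle$ and variance $\|D^{\frac12}(1-\e^{-t\mathcal{L}_0})f\|^2$; the claim then follows from uniqueness of characteristic functions.

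The only nontrivial point\,---\,and the main obstacle\,---\,is the passage from $K$ to the larger space $\bar K$: the functional $\langle g,\,\cdot\,\rangle$ with $g\in K$ is in general not continuous on $\bar K$ (that is, $g$ need not belong to $\bar K^\ast$), so one cannot simply extend it by continuity, as was possible for the quadratic perturbation in Section~\ref{sec:model-classic_quad}. Instead one defines $\langle g,\,\cdot\,\rangle$ as an element of $L^2(\bar K,\mu_D)$, for instance as the $L^2(\mu_D)$-limit of the continuous finite-rank functionals $\langle g_N,\,\cdot\,\rangle$ with $g_N := \sum_{i\leq N}\langle u_i,g\rangle\,u_i$, using that $\|\langle g_N,\,\cdot\,\rangle-\langle g_M,\,\cdot\,\rangle\|_{L^2(\mu_D)}^2 = \langle g_N-g_M,D(g_N-g_M)\rangle\to 0$ and that each finite-dimensional marginal of $\mu_D$ is Gaussian, so that the limit is Gaussian with variance $\langle g,Dg\rangle$ and, in particular, $\int_{\bar K}\e^{\i\langle h,x\rangle}\,\d\mu_D(x) = \e^{-\frac12\langle h,Dh\rangle}$ for all $h\in K$. (Alternatively one may invoke directly the Bochner--Minlos description of $\mu_D$.) Everything else is a routine manipulation of characteristic functions of Gaussian measures.
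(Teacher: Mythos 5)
Your proof is correct and follows essentially the same route as the paper: decompose $\Delta Q$ into a deterministic part $\langle f,(1-\e^{t\mathcal{L}_0})f\rangle$ plus the linear functional $x\mapsto\langle (1-\e^{-t\mathcal{L}_0})f,\,x\rangle$ (using skew-adjointness of $\mathcal{L}_0$), and then read off the Gaussian law with variance $\langle g,Dg\rangle$ from the structure of $\mu_D$. You supply a bit more detail than the paper's one-line argument, notably by flagging and resolving the extension of $\langle g,\,\cdot\,\rangle$ with $g\in K$ to a $\mu_D$-a.e.\ defined random variable on $\bar K$ via an $L^2(\mu_D)$-limit, a point the paper leaves implicit; you also correctly pin down the relevant one-dimensional subspace as spanned by $(1-\e^{-t\mathcal{L}_0})f$, consistent with the stated variance (the paper's proof writes the vector with a $+t$, which appears to be a sign typo).
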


\begin{proof}
	By definition $\mu_D$ is such that any projection of $x$ on a finite dimensional subspace of $K$ is a centered Gaussian random vector with same dimension as the subspace and covariance given by the projection of $D$ on this subspace. Hence, considering the projection of $x$ on the one dimensional linear subspace spanned by $(\one-\e^{t\cL_0})f$, for any $\alpha\in\rr$,
	$$
		\ee_t(\e^{\i\alpha \Delta Q})=\e^{\i \alpha \langle f,(1-\e^{t\mathcal{L}_0})f\rangle - \frac12 \alpha^2\| D^{\frac12}(1-\e^{-t\mathcal{L}_0})f\|^2}.
	\eqno\qedhere
	$$
\end{proof}

\begin{remark}
	If $D\mathcal{L}_0=\mathcal{L}_0D$, the variance is $2\langle D^{\frac12}f,(1-\operatorname{cosh}(t\mathcal{L}_0))D^{\frac12} f\rangle$.
\end{remark}

\begin{corollary}\label{cor:class-linear}
For any $\gamma>0$,
$$\sup_{t\in\rr} \mathbb E_t(\e^{\gamma|\Delta Q|})<\infty.$$
\end{corollary}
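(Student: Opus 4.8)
The plan is to combine Proposition~\ref{prop:class-linear} with the elementary formula for the exponential moments of a Gaussian law, and then to use the unitarity of the unperturbed symplectic flow to make the resulting bound uniform in~$t$. By Proposition~\ref{prop:class-linear}, under~$\P_t$ the variable~$\Delta Q$ is Gaussian with mean $m_t:=\langle f,(1-\e^{t\mathcal{L}_0})f\rangle$ and variance $\sigma_t^2:=\|D^{\frac12}(1-\e^{-t\mathcal{L}_0})f\|^2$. For such a law one has $\ee_t(\e^{\gamma|\Delta Q|})\le\ee_t(\e^{\gamma\Delta Q})+\ee_t(\e^{-\gamma\Delta Q})=2\e^{\frac12\gamma^2\sigma_t^2}\cosh(\gamma m_t)$, so it suffices to bound $|m_t|$ and $\sigma_t^2$ uniformly in~$t$.

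The key observation is that, since $\mathcal{L}_0$ is skew-adjoint, $(\e^{t\mathcal{L}_0})_{t\in\rr}$ is a strongly continuous group of \emph{unitary} operators on~$K$ (Stone's theorem); equivalently, the unperturbed flow $x\mapsto\e^{t\mathcal{L}_0}x$ preserves $H_0(x)=\tfrac12\|x\|^2$ and is therefore isometric. In particular $\|\e^{\pm t\mathcal{L}_0}\|=1$, whence $\|(1-\e^{\pm t\mathcal{L}_0})f\|\le 2\|f\|$ for every~$t$. Cauchy--Schwarz then gives $|m_t|\le\|f\|\,\|(1-\e^{t\mathcal{L}_0})f\|\le 2\|f\|^2$, and the boundedness of~$D$ gives $\sigma_t^2\le\|D^{\frac12}\|^2\,\|(1-\e^{-t\mathcal{L}_0})f\|^2\le 4\|D\|\,\|f\|^2$. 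Both bounds are independent of~$t$, so $\sup_{t\in\rr}\ee_t(\e^{\gamma|\Delta Q|})\le 2\e^{2\gamma^2\|D\|\,\|f\|^2}\cosh(2\gamma\|f\|^2)<\infty$ since $\cosh$ is even and nondecreasing on $\rr_+$.

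There is essentially no real obstacle in this argument; the one point worth emphasising is that the uniformity in~$t$ hinges entirely on the relevant flow being the unperturbed, hence isometric, one. This is exactly the feature that can fail in the quadratic model of Section~\ref{sec:model-classic_quad}, where the flow is generated by $\mathcal{L}=\mathcal{L}_0(\one+v)$ and $\|\e^{t\mathcal{L}}\|$ need not stay bounded unless $-1\notin\sp v$ --- which is precisely the extra hypothesis appearing in Proposition~\ref{prop:class-exp}.
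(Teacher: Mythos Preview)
Your proof is correct and follows essentially the same approach as the paper's: both use $\e^{|x|}\le\e^{x}+\e^{-x}$ together with the Gaussian moment formula from Proposition~\ref{prop:class-linear}, and both bound the mean and variance uniformly via the unitarity of $\e^{t\mathcal{L}_0}$ (skew-adjointness of $\mathcal{L}_0$) to get $|m_t|\le 2\|f\|^2$ and $\sigma_t\le 2\|D^{1/2}\|\,\|f\|$. Your version is simply more explicit, with a quantitative final bound and a helpful remark contrasting this with the quadratic case.
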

\begin{proof}
The probability measure $\P_t$ being Gaussian, the proposition follows from the fact that the Fourier transform of $\P_t$ is entire analytic, and from the inequalities $|\braket{f,(1-\e^{t\mathcal{L}_0})f}|\leq 2\|f\|^2$, $\|D^{\frac12}(1-\e^{-t\mathcal{L}_0})f\|\leq 2\|D^{\frac12}\|\|f\|$  and $\e^{|x|}\leq \e^{x}+\e^{-x}$.
\end{proof}

\begin{remark}
Even if the Fourier transform of~$\P_t$ is analytic on~$\cc$, the variance is quadratic in~$\|f\|$. Hence this model illustrates again the control of the heat fluctuations by~$\lambda$ as written in the Introduction.
\end{remark}

In Section~\ref{sec:model-bosons-lin}, we will see that both Proposition~\ref{prop:class-linear} and Corollary~\ref{cor:class-linear} do not hold for the similar quantized model in the two-time measurement framework. In the quantum case, $\P_t$ is the probability measure of an inhomogeneous Poisson process, not of a Gaussian random variable. The existence of a uniform bound similar to the one obtained in Corollary~\ref{cor:class-linear}, depends then on the properties of the Poisson process' intensity.

\section{Control of the tails for bounded perturbations}
	\label{sec:sufficient_cond}
	\subsection{Setup}
	\label{sec:fcs-setup}
	  \label{sec:Cstar-desc}

Let us first briefly introduce the operator algebraic definition of quantum dynamical systems that we adopt. We refer the reader to \cite[\S 2--3]{BR1} for a thorough exposition of this mathematical formalism.

A $C^\ast$-dynamical system is a triplet~$(\cO, \tau, \omega)$, where~$\cO$ is a unital $C^*$-algebra, $(\tau^s)_{s \in \rr}$ is a strongly continuous one-parameter group of {$*$-}{auto\-mor\-phisms} of~$\cO$, and~$\omega$ is a state, {i.e.} a positive linear functional on~$\cO$ satisfying $\omega(\one) = 1$.
We furthermore assume~$\omega$ to be faithful and $\tau$-invariant.\footnote{We restrict ourselves to a faithful $\omega$ for simplicity. Up to some technical details, this assumption can be lifted.} Namely, we require that $\omega(A^*A)=0$ implies~$A=0$, and that
$$
	\omega = \omega \circ \tau^s
$$
for all $s \in \rr$. We denote the generator of~$(\tau^s)_{s \in \rr}$ by~$\delta$. From the representation theory of $C^*$-algebras, there exists a Hilbert space~$\cH$, a~$*$-{iso\-mor\-phism} $\pi : \cO \to \cB(\cH)$, and a unit vector~$\Omega \in \cH$ such that
\begin{equation}
	\omega(A) = \braket{\Omega, \pi(A) \Omega}
\end{equation}
for all~$A \in \cO$ and such that the vector~$\Omega$ is cyclic for~$\pi(\cO)$.
This triple $(\cH, \pi, \Omega)$ is called a \emph{GNS representation} of~$\cO$ associated to the state~$\omega$ and is unique up to unitary equivalence; see for example \cite[\S 2.3.3]{BR1}.
On this Hilbert space~$\cH$, there exists a unique self-adjoint operator~$L$ satisfying
\begin{equation}
	L \Omega = 0 \label{eq:LOmega}
\end{equation}
and
\begin{equation}
	\pi(\tau^s(A)) = \Exp{\i s L} \pi(A) \Exp{-\i s L} \label{eq:Ldef}
\end{equation}
	for all~$A \in \cO$ and $s \in \rr$. The self-adjoint operator~$L$ is referred to as the $\omega$-Liouvillean of~$(\tau^s)_{s \in \rr}$, or simply the \emph{Liouvillean}, and is typically not bounded, nor semi-bounded, for thermodynamic systems.

Finally, to a self-adjoint element~$V$ of~$\cO$ we associate a \emph{perturbed dynamics}~$(\tau_V^s)_{s \in \rr}$ generated by~$\delta_V := \delta + \i[V,{\,\cdot\,}]$. Both~$\delta$ and $\delta_V$ are derivations on~$\cO$. We refer to~$V$ as a (bounded) \emph{perturbation} and associate to it a measure that is the central object of this paper. Note that, in the GNS representation, the perturbed dynamics is implemented by $L + \pi(V)$ in the sense that
$$
	\pi(\tau_V^s(A)) = \Exp{\i s (L + \pi(V))} \pi(A) \Exp{-\i s (L + \pi(V))}
$$
for all $A \in \cO$ and all~$s \in \rr$.

\begin{definition}\label{def:EFS}
	The probability distribution~$\P_t$ describing the \emph{heat variation} between times~$0$ and~$t$ associated to~$(\cO, \tau, \omega)$ and the self-adjoint perturbation~$V \in \cO$ is the spectral measure of the operator
	\begin{equation}\label{semi-L}
		L + \pi(V) - \pi(\tau^t_V(V))
	\end{equation}
	with respect to the vector~$\Omega$. Equivalently, it is the unique probability measure with characteristic function
	\begin{equation}\label{eq:def-in-terms-of-cht}
		\cht(\alpha)
		= \braket{\Omega, \Exp{\i \alpha (L + \pi(V) - \pi(\tau^t_V(V)))} \Omega}.
	\end{equation}
	For a non-negative Borel function $g$, we will use $\ee_t(g(\Delta Q))$ to denote the integral
	$
		\int_\rr g(\Delta Q) \d\P_t(\Delta Q)
	$
	with value in~$[0,\infty]$.
\end{definition}

\begin{remark}\label{rk:finite_dim}
	If $\cO$ is finite dimensional, $\P_t$ is the probability distribution of the heat variation as defined by the two-time measurement protocol outlined in the introduction.

	To see this, let $\cO = M_{n\times n}(\cc)$, with an unperturbed dynamics defined in the Heisenberg picture by $\tau^t(A) = \Exp{\i t H_0} A \Exp{-\i t H_0}$ for all $A \in \cO$ and where the Hamiltonian~$H_0$ is a self-adjoint element of~$M_{n\times n}(\cc)$.
	Let~$\omega(A):=\tr(A\hat \omega)$ with $\hat\omega$ a full-rank density matrix that commutes with~$H_0$.
	
	As presented in \cite[\S 4.3.11]{JOPP}\footnote{Section 4.3.11. \textit{The standard representations of}~$\cO$ of the published version corresponds to Section~2.11 of the version available on the \textit{arXiv}.}, an associated GNS representation for this system is known as the  \emph{standard} GNS representation: it has Hilbert space $\cH = M_{n\times n}(\cc)$ equipped with the inner-product $\braket{A,B} = \tr(A^* B)$ and the representation~$\pi$ is given by left matrix multiplication: $(\pi(A))(B)=AB$. The vector representative~$\Omega$ of~$\omega$ is the matrix~$\hat\omega^{1/2}$.
	Then, the Liouvillean is $L = [H_0, {\,\cdot\,}]$
		and Definition~\ref{def:EFS}, for a self-adjoint perturbation $V  \in M_{n\times n}(\cc)$, reads
	\begin{align*}
		\cht(\alpha) &= \braket{\Omega, \Exp{\i \alpha (\pi(H_0) + \pi(V) - \pi(\tau^t_V(V)))} \Exp{-\i \alpha \pi(H_0)} \Omega}\\
			&= \tr( \hat\omega^{1/2} \e^{ \i t (H_0+V)} \e^{\i \alpha (H_0 + V - V)}\e^{-\i t (H_0+V)}\e^{-\i \alpha H_0} \hat\omega^{1/2} ) \\
			&= \tr( \e^{\i \alpha \e^{ \i t (H_0+V)}H_0\e^{-\i t (H_0+V)}}\e^{-\i \alpha H_0} \hat\omega ).
	\end{align*}
	In the first equality we used $[H_0,\hat\omega]=0$ to change $\hat\omega^{1/2}\Exp{-\i\alpha H_0}$ to $\Exp{-\i\alpha H_0}\hat\omega^{1/2}$. The last expression of the characteristic function is the quantum equivalent of \eqref{eq:two_time_classic} for classical systems.

	Using the spectral decomposition $H_0=:\sum_{j} \epsilon_j P_j$ where $\{P_j\}_j$ is a resolution of the identity, we obtain that
	\begin{align*}
		\cht(\alpha) &= \sum_{i,j} \tr(\Exp{\i \alpha \epsilon_j}P_j \e^{-\i t (H_0+V)}\Exp{-\i \alpha \epsilon_i}P_i \hat\omega  \e^{ \i t (H_0+V)}) \\
			&= \sum_{\Delta Q \in \sp H_0 - \sp H_0} \Exp{\i \alpha \Delta Q} \sum_{\substack{i,j \\ \epsilon_j - \epsilon_i = \Delta Q}}\tr( P_j \e^{-\i t (H_0+V)}P_i \hat\omega \e^{ \i t (H_0+V)} )
	\end{align*}
	is the characteristic function of the probability measure
	$$
		\P_t(\{\Delta Q\}) = \sum_{\substack{i,j \\ \epsilon_j - \epsilon_i = \Delta Q}} \tr( P_j \e^{-\i t (H_0+V)}P_i \hat\omega P_i \e^{ \i t (H_0+V)} P_j).
	$$
	Hence, $\P_t$ is indeed the probability measure of the heat variation defined as the result of a two-time measurement protocol of the Hamiltonian~$H_0$~\cite{Kur,Tas,JOPP,EHM,CHT}.

	Relevant models of infinitely extended quantum systems are obtained as the thermodynamic limit of a sequence of such confined systems. As conveyed {e.g.} in~\cite{JOPP},~\cite{JPPP} and~\cite{BJPPP}, mild assumptions ensure that the corresponding sequence of probability measures converges weakly to the measure $\P_t$ of Definition~\ref{def:EFS} on the limiting infinitely extended system. In Appendix \ref{app:TL}, we provide proofs of this weak convergence for the models studied in Section \ref{sec:models}.
\end{remark}

\begin{remark}\label{rem:t-t-theory}
	As we pointed out in Remark \ref{rk:heat_def} for classical systems, the choice of operator whose spectral measure defines $\P_t$ is partly arbitrary.
		But as in the classical case, the choice of $L+\pi(V)-\pi(\tau_V^t(V))$ can be motivated by the entropy balance equation when~$\omega$ is a $(\tau, \beta)$-KMS state.

	Indeed, in the framework of Tomita--Takesaki theory, if~$\omega$ is a $(\tau,\beta)$-KMS state, the Liouvillean is~$L =-\beta^{-1}\log \Delta_\omega$, where~$\Delta_\omega$ denotes the modular operator for the state~$\omega$.
	With $\Delta_{\omega_{-t} | \omega}$ denoting the relative modular operator (non-comutative analogue of the Radon--Nikodym derivative) between the states~$\omega_{-t} := \omega \circ \tau_V^{-t}$ and~$\omega$,
	one then has the identity~\cite{JPPP}
	\begin{equation}
		\beta(L + \pi(V) - \pi(\tau_V^t(V))) = -\log \Delta_{\omega_{-t} | \omega}.
	\end{equation}
	Hence, defining the distribution~$\P_t$ of heat variation as the spectral measure for the operator $L + \pi(V) - \pi(\tau_V^t(V))$ with respect to~$\Omega$ amounts to lifting the classical entropy balance equation~\eqref{eq:ent-bal-classic} to the level of the operators defining the distributions of heat on one side and entropy production on the other. Indeed, upon taking expectation with respect~$\Omega$, the above identity reduces to
	\begin{equation}
		\beta \ee_t(\Delta Q) = -\braket{\Omega, \log \Delta_{\omega_{-t} | \omega} \Omega},
	\end{equation}
	where the right-hand side is identified as the relative entropy between~$\omega$ and~$\omega_{-t}$, exactly as in~\eqref{eq:ent-bal-classic-avg}.
\end{remark}

\begin{remark}
	Definition~\ref{def:EFS} extends naturally to the case of $W^*$-dynamical system~$(\mathfrak{M}, \tau, \omega)$
		together with a bounded self-adjoint perturbation $V \in \mathfrak{M}$. Theorems~\ref{thm:diff} and \ref{thm:an} then also generalise to this case.
	When considering $V$ merely affiliated to the $W^*$-algebra $\mathfrak{M}$, one may in some cases mimic the construction Definition~\ref{def:EFS} and obtain results analogous to the ones for $C^*$-algebras.
		We do not discuss the general theory of heat variation for such unbounded perturbations, but treat some examples in Sections~\ref{sec:model-bosons} and~\ref{sec:model-bosons-lin}.
\end{remark}
 	\subsection{Results}
	\label{sec:results}
	  Our two main general results for $C^*$-dynamical systems with bounded perturbations describe the control of the tails of~$\P_t$. The first one gives a sufficient condition for the existence and uniform boundedness in time of even moments; the second, for the existence  and uniform boundedness of an analytic extension of the Fourier transform to a neighborhood of~$0$ in~$\cc$.

\begin{theorem}\label{thm:diff}
	Let $(\cO, \tau, \omega)$ be a $C^*$-dynamical system, $V \in \cO$ a self-adjoint perturbation, and~$\P_t$ the probability measure of Definition~\ref{def:EFS}. If $V \in \Dom \delta^n$, then
	\[
		\sup_{t \in \rr} \ee_t[\Delta Q^{2n+2}] <\infty.
	\]
\end{theorem}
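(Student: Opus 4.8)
The plan is to convert the moment bound into a statement about the domain and size of $L^{n+1}$ applied to the perturbed evolution of the GNS vector~$\Omega$, and then to use unitarity of that evolution to make the bound uniform in~$t$. Write $L_V := L + \pi(V)$, which is self-adjoint on $\Dom L_V = \Dom L$ since $\pi(V)$ is bounded, and which implements~$\tau_V$. Because $\pi(V) = L_V - L$ on $\Dom L$, because $\pi(\tau_V^t(V)) = \Exp{\i t L_V}\pi(V)\Exp{-\i t L_V}$, and because $\Exp{\pm\i t L_V}$ preserves $\Dom L_V = \Dom L$, one obtains the operator identity
\begin{equation*}
   L + \pi(V) - \pi(\tau_V^t(V)) = \Exp{\i t L_V}\,L\,\Exp{-\i t L_V}
\end{equation*}
on $\Dom L$. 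Hence, by the spectral theorem applied to the self-adjoint operator whose spectral measure at~$\Omega$ is $\P_t$ (Definition~\ref{def:EFS}), and with the convention that $\ee_t$ of a non-negative function may take the value $+\infty$,
\begin{equation*}
   \ee_t[\Delta Q^{2n+2}] = \bigl\|(L+\pi(V)-\pi(\tau_V^t(V)))^{n+1}\Omega\bigr\|^2 = \bigl\|L^{n+1}\,\psi_t\bigr\|^2, \qquad \psi_t := \Exp{-\i t L_V}\Omega,
\end{equation*}
both sides being $+\infty$ if $\psi_t \notin \Dom L^{n+1}$. It therefore suffices to show that $V \in \Dom\delta^n$ forces $\psi_t \in \Dom L^{n+1}$ with $\sup_t\|L^{n+1}\psi_t\| < \infty$.

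The tools are two elementary facts about the interplay of $L$ with $\pi(\cO)$, both obtained by differentiating $\pi(\tau^s(A)) = \Exp{\i s L}\pi(A)\Exp{-\i s L}$ at $s=0$ and using $\Exp{\i s L}\Omega = \Omega$: first, $A \in \Dom\delta$ implies $\pi(A)\Omega \in \Dom L$ with $L\pi(A)\Omega = -\i\pi(\delta A)\Omega$; second, the Leibniz-type rule $L(\pi(A)\phi) = -\i\pi(\delta A)\phi + \pi(A)L\phi$ for $A \in \Dom\delta$ and $\phi \in \Dom L$. Introducing $\psi_t(A) := \Exp{-\i t L_V}\pi(A)\Omega$ (so that $\psi_t = \psi_t(\one)$), unitarity of $\Exp{-\i t L_V}$ gives $\|\psi_t(A)\| = \|\pi(A)\Omega\| \le \|A\|$ uniformly in~$t$, and $\psi_t(A) \in \Dom L \iff A \in \Dom\delta$. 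Finally, using $L_V\Omega = \pi(V)\Omega$ (because $L\Omega = 0$) and $L = L_V - \pi(V)$ on $\Dom L$, a short computation yields the recursion
\begin{equation*}
   L\,\psi_t(A) = \psi_t(\mathcal{D}A) - \pi(V)\,\psi_t(A), \qquad \mathcal{D}A := -\i\delta A + VA, \quad A \in \Dom\delta.
\end{equation*}

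The heart of the argument is to iterate this recursion together with the Leibniz rule. One proves by induction on~$m$ that, whenever $V \in \Dom\delta^{\max(m-1,0)}$, the vector $\psi_t$ belongs to $\Dom L^m$ and $L^m\psi_t$ is a finite linear combination — with coefficients depending only on~$m$, not on~$t$ — of vectors $\pi(B)\,\psi_t(C)$, where each $B$ is a product of elements of $\{\one, V, \delta V, \dots, \delta^{m-1}V\}$ and each $C$ lies in $\{\mathcal{D}^k\one : 0 \le k \le m\}$; crucially, $B$ and $C$ are elements of $\cO$ manufactured from~$V$ alone and carry no time dependence. The inductive step only requires checking that each $\pi(B)\psi_t(C)$ lies in $\Dom L$ — which reduces to $B, C \in \Dom\delta$, hence to $V$ having the claimed regularity — and then applying the Leibniz rule and the recursion, which split $\pi(B)\psi_t(C)$ into terms of the form $\pi(\delta B)\psi_t(C)$, $\pi(B)\psi_t(\mathcal{D}C)$ and $\pi(BV)\psi_t(C)$, all again admissible. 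Given this, $\|L^m\psi_t\| \le \sum_\alpha |c_\alpha|\,\|B_\alpha\|\,\|C_\alpha\|$ is finite and independent of~$t$. Taking $m = n+1$, which is exactly where the hypothesis $V \in \Dom\delta^n$ is consumed, concludes the proof.

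The point I expect to require the most care — and the conceptual content of the theorem — is why this detour through the unitary $\Exp{-\i t L_V}$ is what delivers a bound uniform in~$t$. The naive route iterates $(L + \pi(V) - \pi(\tau_V^t(V)))\Omega = \pi(V - \tau_V^t(V))\Omega$ to get $(L + \pi(V) - \pi(\tau_V^t(V)))^{n+1}\Omega = \pi(A_t)\Omega$ with $A_t$ built from the elements $\delta^j(V - \tau_V^t(V))$; but the crude estimate $\|\pi(A_t)\Omega\| \le \|A_t\|$ is not uniform, because $\|\delta^j\tau_V^t(V)\|$ genuinely grows in~$t$ in general. Uniformity is restored only through cancellations — the iterated commutators conceal total-derivative structures — and rewriting everything in terms of $\psi_t(C) = \Exp{-\i t L_V}\pi(C)\Omega$, whose norm $\|\pi(C)\Omega\| \le \|C\|$ is manifestly time-independent, is precisely the device that makes those cancellations automatic. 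The only routine-but-necessary bookkeeping is that each application of $L$ in the recursion consumes one order of $\delta$-regularity of~$V$, which is why $\Dom\delta^n$ is the natural hypothesis for controlling the $(2n+2)$-nd moment.
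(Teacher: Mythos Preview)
Your proof is correct and takes a genuinely different route from the paper's. The paper works directly with $X_t := \pi(V - \tau_V^t(V))$: it expands $(L+X_t)^{n+1}\Omega$ via the commutator identity $L^kW = \sum_j\binom{k}{j}(\ad_L^j W)L^{k-j}$ together with $L\Omega=0$, reducing to a noncommutative polynomial in the bounded operators $\pi(\delta^k((V-\tau_V^t(V))^p))$ with $k+p\le n$, and then invokes a key lemma establishing $\sup_t\|\delta^n(\tau_V^t(V))\|<\infty$ whenever $V\in\Dom\delta^n$ (proved by induction from $\delta(\tau_V^t(V)) = \tau_V^t(\delta V) - \i[V,\tau_V^t(V)]$). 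Your conjugation by $\Exp{\i t L_V}$ instead reduces the moment to $\|L^{n+1}\Exp{-\i t L_V}\Omega\|^2$ and makes the $t$-uniformity automatic via unitarity of $\Exp{-\i t L_V}$, bypassing that lemma entirely. Your route is slicker for this single statement; the paper's route isolates the intermediate estimate $\sup_t\|\delta^k(V-\tau_V^t(V))\|<\infty$ as a standalone lemma, which it reuses later in the converse arguments for the quasi-free models.

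One correction to your closing commentary: the ``naive route'' you dismiss is exactly the paper's route, and your assertion that $\|\delta^j\tau_V^t(V)\|$ ``genuinely grows in~$t$ in general'' is false under the hypothesis $V\in\Dom\delta^n$ --- that uniform boundedness is precisely the content of the paper's key lemma, and no further cancellations are needed. A minor quibble: you state $\psi_t(A)\in\Dom L \iff A\in\Dom\delta$, but only the implication $A\in\Dom\delta\Rightarrow\psi_t(A)\in\Dom L$ is used (or justified) in your argument.
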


\begin{theorem}
		In the same setting, if $V \in \Dom\Exp{\i\frac{\gamma}{2}\delta}\cap\Dom\Exp{-\i\frac{\gamma}{2}\delta}$ for $\gamma > 0$, then
	\[
		\sup_{t \in \rr} \ee_t[\e^{\gamma |\Delta Q|}] <\infty.
	\]
	\label{thm:an}
\end{theorem}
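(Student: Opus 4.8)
The plan is to reduce the statement to a uniform-in-$t$ bound for $\|\e^{\pm\frac{\gamma}{2}K_t}\Omega\|$, where $K_t:=L+\pi(V)-\pi(\tau^t_V(V))$ is the self-adjoint operator whose spectral measure with respect to~$\Omega$ is~$\P_t$. Since $\e^{\gamma|x|}\le\e^{\gamma x}+\e^{-\gamma x}$ and, by the functional calculus for $K_t$, $\ee_t[\e^{\pm\gamma\Delta Q}]=\int_\rr\e^{\pm\gamma x}\,\d\P_t(x)=\|\e^{\pm\frac{\gamma}{2}K_t}\Omega\|^2\in[0,\infty]$, it is enough to show that $\Omega\in\Dom\e^{\frac{\gamma}{2}K_t}$ with $\sup_{t\in\rr}\|\e^{\frac{\gamma}{2}K_t}\Omega\|<\infty$, together with the analogous statement for $-\gamma$. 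I will only discuss the former; the latter is obtained symmetrically, invoking $V\in\Dom\e^{\i\frac{\gamma}{2}\delta}$ where the former uses $V\in\Dom\e^{-\i\frac{\gamma}{2}\delta}$.

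The mechanism is the Dyson expansion of $\e^{zK_t}\Omega$ around the unperturbed Liouvillean $L$, combined with $L\Omega=0$. Writing $Q_t:=\pi(V)-\pi(\tau^t_V(V))$, which is bounded and self-adjoint, one expands $\e^{z(L+Q_t)}\Omega$, uses $\e^{sL}\Omega=\Omega$, and moves every propagator $\e^{rL}$ to the right through the factors $Q_t$ via $\e^{rL}\pi(A)\e^{-rL}=\pi(\tau^{-\i r}(A))$, arriving at
\[
   \e^{zK_t}\Omega=\sum_{m\ge0}\int_{0\le s_1\le\cdots\le s_m\le z}Q_t^{(z-s_m)}Q_t^{(z-s_{m-1})}\cdots Q_t^{(z-s_1)}\,\Omega\,\d s_1\cdots\d s_m ,
\]
where $Q_t^{(u)}:=\e^{uL}Q_t\e^{-uL}=\pi\big(\tau^{-\i u}(V)\big)-\pi\big(\tau^{-\i u}(\tau^t_V(V))\big)$ for $0\le u\le\tfrac{\gamma}{2}$. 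If one grants that $C:=\sup_{t\in\rr}\,\sup_{0\le u\le\gamma/2}\|Q_t^{(u)}\|<\infty$ (the crux, discussed below), then the $m$-th term has norm at most $C^m z^m/m!$, so the series converges absolutely and $\|\e^{zK_t}\Omega\|\le\e^{Cz}$ uniformly in $t$ for $0\le z<\tfrac{\gamma}{2}$; letting $z\uparrow\tfrac{\gamma}{2}$ (monotone convergence for the spectral integral, the boundary being covered by $\tau^{-\i\gamma/2}(V)\in\cO$) yields the claimed bound. I expect the rigorous justification of this expansion\,---\,the domains of $\e^{uL}$, the interchange of summation and integration, and the identification of the resummed series with $\e^{zK_t}\Omega$ by analytic continuation from the unitary case $z\in\i\rr$\,---\,to be routine but the most tedious part of the write-up; it is the ``all orders'' analogue of the proofs of~\eqref{eq:BJPPP_implication} and of Theorem~\ref{thm:diff}, in which only finitely many propagators $\e^{uL}$ are displaced.

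The genuine obstacle is the uniform bound $\sup_{t\in\rr}\sup_{0\le u\le\gamma/2}\|\tau^{-\i u}(\tau^t_V(V))\|<\infty$, i.e.\ that $\tau^t_V(V)$ is analytic for the free dynamics in the strip of half-width~$\tfrac{\gamma}{2}$ with bounds independent of~$t$. Expanding $\tau^t_V(V)$ in powers of $V$ over the free dynamics yields only an $\e^{c|t|}$ bound, so one must use that $\tau_V$ is itself a group of $*$-automorphisms. I would proceed in three steps. First, since $V$ is analytic for $\tau$ in the strip, the Dyson series expressing $\tau_V^z(V)$ through nested commutators of the $\tau^{w}(V)$'s and $\tau^z(V)$ converges for $|\Im z|\le\tfrac{\gamma}{2}$; hence $V$ is analytic for the \emph{perturbed} dynamics $\tau_V$ in that strip. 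Second, because $\tau^t_V$ is an isometric automorphism and $\tau_V^z(\tau^t_V(V))=\tau^t_V(\tau_V^z(V))$, the element $\tau^t_V(V)$ is analytic for $\tau_V$ in the strip with $\|\tau_V^z(\tau^t_V(V))\|=\|\tau_V^z(V)\|$ \emph{independent of $t$}. Third, $\tau$ is the perturbation of $\tau_V$ by $-V$, so the analogous Dyson series expressing $\tau^z(B)$ through nested commutators of the $\tau_V^{w}(V)$'s and $\tau_V^z(B)$, applied to $B=\tau^t_V(V)$, has every ingredient bounded uniformly in~$t$ (the $\tau_V^{w}(V)$ do not involve~$t$, and $\|\tau_V^z(\tau^t_V(V))\|=\|\tau_V^z(V)\|$ by the second step), so it converges with a $t$-independent bound on $\|\tau^{-\i u}(\tau^t_V(V))\|$. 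Combined with the trivial bound $\|\tau^{-\i u}(V)\|\le\sup_{|\Im z|\le\gamma/2}\|\tau^z(V)\|$, this gives $C<\infty$ and completes the proof. The main difficulty, then, is really this transfer of uniform analyticity between $\tau$ and $\tau_V$; everything else is bookkeeping of a kind already present in Theorem~\ref{thm:diff}.
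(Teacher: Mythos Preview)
Your proposal is correct, but it takes a longer route than the paper. You split $K_t=L+Q_t$ with $Q_t=\pi(V)-\pi(\tau_V^t(V))$ and Dyson-expand $\e^{zK_t}\Omega$ around $L$; this forces you to control $\tau^{-\i u}(\tau_V^t(V))$ uniformly in~$t$, which you then obtain by your three-step ``transfer of analyticity'' between $\tau$ and~$\tau_V$. That argument is sound and is essentially the observation that $E_V$ and its inverse intertwine $\tau$- and $\tau_V$-analytic vectors.

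The paper bypasses this crux entirely by using instead the conjugation identity
\[
   K_t \;=\; \e^{\i t(L+\pi(V))}\,L\,\e^{-\i t(L+\pi(V))},
\]
from which, together with $L\Omega=0$, one gets directly
\[
   \e^{-\i sK_t}\Omega \;=\; \tau_V^t\big(E_V(-s)^*\big)\,E_V(-s)\,\Omega,
   \qquad E_V(s):=\e^{\i s(L+\pi(V))}\e^{-\i sL}.
\]
Only the single Araki--Dyson series for $E_V$ in terms of $\tau^{s_j}(V)$ needs to be analytically continued to the strip, and the $t$-uniformity is then immediate because $\tau_V^t$ is an isometric $*$-automorphism for real~$t$, so $\|\tau_V^t(E_V(-s)^*)\|=\|E_V(-s)\|$. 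Your approach Dyson-expands $E_{Q_t}$ and must establish $\tau$-analyticity of $\tau_V^t(V)$ with $t$-independent bounds; the paper's approach Dyson-expands $E_V$ once and for all and factors out the $t$-dependence as a unitary conjugation. Both are valid; the paper's identity is shorter, avoids the nested Dyson series you need in steps~(a) and~(c), and yields the explicit bound $\ee_t[\e^{\gamma|\Delta Q|}]\le 2\e^{2\gamma v_0}$ with $v_0=\sup_{|\Im z|<\gamma/2}\|\tau^z(V)\|$, whereas your route produces a constant involving an iterated exponential of~$v_0$.
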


\begin{remark}
	A similar result was already presented in \cite{BJPPP}. It was proved via the thermodynamic limit. Below, we provide a proof in the present framework.
\end{remark}

\begin{remark}
The assumptions $V\in\Dom \delta^n(V)$ and $V\in\Dom \e^{\i\frac{\gamma}2\delta}\cap \Dom\e^{-\i\frac\gamma 2 \delta}$ can be reformulated in terms of regularity of the map $t\mapsto \tau^t(V)$. We have $V\in\Dom \delta^n(V)$ if and only if $t\mapsto \tau^t(V)$ is $n$ times norm differentiable and  $V\in\Dom \e^{\i\frac{\gamma}2\delta}\cap \Dom\e^{-\i\frac\gamma 2 \delta}$ if and only if $t\mapsto \tau^t(V)$ admits a bounded analytic extension to the strip $\{z\in\cc : |\Im z|<\frac\gamma 2\}$.
\end{remark}

\subsection{Proofs}

Recall  that, given $V\in \cO$, the one-parameter group $(\tau_V^t)_{t \in \rr}$ is the perturbed dynamics whose generator is given by $\delta_V=\delta+\i[{\cdot},V]$. We start with some useful properties of derivations, the first two being easily proved by induction.

\begin{lemma} \label{generalized-Leibniz-rule}
Let $\delta$ be a derivation on an algebra $\cO$ and
 $A,B\in \Dom(\delta^n)$, $n\in \nn$. Then\[
\delta^n(AB)=\sum_{k=0}^n {n \choose k}
 \delta^{n-k}(A)\delta^k(B).
 \]
\end{lemma}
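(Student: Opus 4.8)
The plan is to prove the identity by a straightforward induction on $n$, mirroring the proof of the classical general Leibniz formula. The base case $n=1$ is nothing but the defining property of a derivation, $\delta(AB)=\delta(A)B+A\delta(B)$, and $n=0$ is the trivial identity $AB=AB$. Before running the induction I would first record the standard structural facts that, for a derivation $\delta$, each $\Dom(\delta^m)$ is again a subalgebra of $\cO$ and that $\delta(XY)=\delta(X)Y+X\delta(Y)$ holds whenever $X,Y\in\Dom(\delta)$ (see \cite{BR1}); these are exactly what licenses the term-by-term differentiation performed in the inductive step.

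For the inductive step, I would assume the formula for some $n$ and take $A,B\in\Dom(\delta^{n+1})$. For every $k\in\{0,\dots,n\}$ one has $A\in\Dom(\delta^{n-k+1})$ and $B\in\Dom(\delta^{k+1})$, hence $\delta^{n-k}(A),\delta^k(B)\in\Dom(\delta)$ and their product lies in $\Dom(\delta)$. Applying $\delta$ to the induction hypothesis and distributing via the Leibniz rule on each summand gives
\[
\delta^{n+1}(AB)=\sum_{k=0}^n{n\choose k}\bigl(\delta^{n-k+1}(A)\delta^{k}(B)+\delta^{n-k}(A)\delta^{k+1}(B)\bigr).
\]
Shifting the index $k\mapsto k-1$ in the second sum and merging the two sums using Pascal's rule ${n\choose k}+{n\choose k-1}={n+1\choose k}$ (together with the matching of the boundary terms) produces $\delta^{n+1}(AB)=\sum_{k=0}^{n+1}{n+1\choose k}\delta^{n+1-k}(A)\delta^k(B)$, which is the claimed identity at level $n+1$ and closes the induction.

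I do not expect a genuine obstacle here: the content is purely the noncommutative analogue of the scalar Leibniz formula. The only point that calls for a little care — and the closest thing to a ``hard part'' — is the bookkeeping of domains, i.e. verifying at each stage that every summand $\delta^{n-k}(A)\delta^k(B)$ lies in $\Dom(\delta)$ so that applying $\delta$ term by term is legitimate; this is handled once and for all by the elementary fact, recalled above, that $\Dom(\delta^m)$ is a subalgebra on which the Leibniz rule holds.
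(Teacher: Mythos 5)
Your proof is correct and follows exactly the approach the paper has in mind: the paper simply notes that this lemma is "easily proved by induction" without spelling it out, and your induction with the domain bookkeeping and Pascal's rule is precisely that argument.
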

\begin{lemma} \label{lemma-Vp}
	Let $\delta$ be a derivation on a $C^*$-algebra $\cO$ and let
	$V_t\in \Dom(\delta^n)$ and $\sup_t \norme{\delta^n(V_t)}<\infty$. Then   $V_t^p\in \Dom(\delta^n)$ for all $p\in \nn$ and
	$
		\sup_t \norme{\delta^n(V^p_t)}<\infty.
	$
\end{lemma}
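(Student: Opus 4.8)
The plan is to induct on $p$, with the generalized Leibniz rule of Lemma~\ref{generalized-Leibniz-rule} as the workhorse and the elementary inclusions $\Dom(\delta^n)\subseteq\Dom(\delta^k)$, $0\le k\le n$, ensuring that products of domain elements remain in the domain.

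Before starting the induction, I would record the preliminary fact that one actually has $\sup_t\|\delta^k(V_t)\|<\infty$ for \emph{every} $0\le k\le n$, not merely for $k=n$. In the situations where this lemma is invoked, $V_t=\tau_V^t(V)$, so $\|V_t\|=\|V\|$ is constant in $t$; together with the hypothesis on $\delta^n(V_t)$, a Landau--Kolmogorov / Kallman--Rota interpolation inequality of the form $\|\delta^k(V_t)\|\le C_{n,k}\,\|V_t\|^{1-k/n}\,\|\delta^n(V_t)\|^{k/n}$, valid because $\delta$ generates a one-parameter group of isometric $*$-automorphisms, yields all the intermediate bounds. (Equivalently, these bounds may simply be read as part of the standing hypotheses.) Write $M_k:=\sup_t\|\delta^k(V_t)\|<\infty$ for $0\le k\le n$.

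The induction on $p$ then proceeds as follows. For $p=1$ there is nothing to add. Supposing $V_t^p\in\Dom(\delta^n)$ with $M_k^{(p)}:=\sup_t\|\delta^k(V_t^p)\|<\infty$ for all $0\le k\le n$, I would apply Lemma~\ref{generalized-Leibniz-rule} to the product $V_t^{p+1}=V_t\cdot V_t^p$, concluding $V_t^{p+1}\in\Dom(\delta^n)$ and, for each $0\le k\le n$,
\[
\delta^k(V_t^{p+1})=\sum_{j=0}^{k}\binom{k}{j}\,\delta^{k-j}(V_t)\,\delta^{j}(V_t^p),
\]
whence, uniformly in $t$,
\[
\|\delta^k(V_t^{p+1})\|\le\sum_{j=0}^{k}\binom{k}{j}\,M_{k-j}\,M_j^{(p)}=:M_k^{(p+1)}<\infty .
\]
Specializing to $k=n$ gives exactly $V_t^p\in\Dom(\delta^n)$ and $\sup_t\|\delta^n(V_t^p)\|<\infty$ for all $p\in\nn$, as claimed.

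I expect the only step with real content to be the preliminary one: a bound on the top derivative $\|\delta^n(V_t)\|$ alone does not close the Leibniz expansion, since $\delta^k(V_t^{p+1})$ genuinely mixes in every lower-order derivative $\delta^j(V_t)$ with $j\le k$ — a simple example with $V_t=V_0+t\one$ shows the conclusion can fail if $\|V_t\|$ is not controlled. Passing from the single top-order hypothesis to uniform control of all intermediate orders is where the isometry of $*$-automorphisms and the interpolation inequality do the work; everything downstream is bookkeeping with binomial coefficients.
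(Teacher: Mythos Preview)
Your induction on $p$ via the generalized Leibniz rule is precisely the argument the paper has in mind --- the paper gives no proof beyond the remark that the lemma is ``easily proved by induction.'' Where you go further than the paper is in correctly identifying that the stated hypothesis $\sup_t\|\delta^n(V_t)\|<\infty$ alone is not enough to close the Leibniz expansion: your example $V_t=V_0+t\one$ is a genuine counterexample to the lemma as literally stated.

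In the paper's only application (the proof of Theorem~\ref{thm:diff}), one has $V_t=V-\tau_V^t(V)$, and the proof of Lemma~\ref{lem:unif-bound-delta-n} proceeds by induction on the order and in fact establishes $\sup_t\|\delta^k(\tau_V^t(V))\|<\infty$ for every $k\le n$, not just $k=n$. So the intermediate bounds you need are available in context, and the intended reading of Lemma~\ref{lemma-Vp} is evidently with the full family $\sup_t\|\delta^k(V_t)\|<\infty$, $0\le k\le n$, as hypotheses. Your Kallman--Rota interpolation is a valid alternative route to those bounds (given $\sup_t\|V_t\|<\infty$, which also holds here), but it is more than the paper needs or intends.
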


\begin{lemma} \label{lem:unif-bound-delta-n}
  If $V \in \Dom \delta^n$, then $V - \tau_V^t(V) \in \Dom \delta^n$ and
  $$
	  \sup_{t \in \rr} \|\delta^n(V - \tau_V^t(V))\| < \infty.
  $$
\end{lemma}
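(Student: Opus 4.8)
The plan is to prove, by induction on $n$, the following slightly stronger statement: \emph{if $V \in \Dom\delta^n$, then for every $A \in \Dom\delta^n$ one has $\tau_V^t(A) \in \Dom\delta^n$ for all $t\in\rr$ and $\sup_{t\in\rr}\|\delta^n(\tau_V^t(A))\| < \infty$.} The lemma is then the case $A = V$, since $\delta^n(V-\tau_V^t(V)) = \delta^n(V) - \delta^n(\tau_V^t(V))$.

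The engine is the commutation identity
\begin{equation*}
   \delta(\tau_V^t(A)) = \tau_V^t(\delta(A)) - \i\big[\,R_t\,,\,\tau_V^t(A)\,\big], \qquad R_t := V - \tau_V^t(V),
\end{equation*}
valid for every $A \in \Dom\delta$ (recall that $\delta_V = \delta + \i[V,{\,\cdot\,}]$ is a bounded perturbation of $\delta$, hence generates $\tau_V$ with $\Dom\delta_V = \Dom\delta$; see \cite{BR1}). I would obtain it from the Duhamel-type computation $\tfrac{\d}{\d t}\big(\tau_V^{-t}\circ\delta\circ\tau_V^t\big) = \tau_V^{-t}\circ[\delta,\delta_V]\circ\tau_V^t = \i\big[\,\tau_V^{-t}(\delta(V))\,,\,{\,\cdot\,}\,\big]$, integrating in $t$, using $\delta_V(V) = \delta(V)$ together with $\int_0^t \tau_V^{-s}(\delta(V))\,\d s = V - \tau_V^{-t}(V)$ to identify the resulting element, and conjugating back by $\tau_V^t$. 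The two facts that keep all subsequent estimates uniform in $t$ are the a priori bound $\|R_t\| \le 2\|V\|$ and the isometry $\|\tau_V^t(B)\| = \|B\|$.

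For the inductive step, assume the statement for all orders $< n$, and let $V,A \in \Dom\delta^n$. Since $\delta(A), A, V$ all lie in $\Dom\delta^{n-1}$, the inductive hypothesis yields $\tau_V^t(\delta(A)), \tau_V^t(A), \tau_V^t(V) \in \Dom\delta^{n-1}$; hence $R_t \in \Dom\delta^{n-1}$, and Lemma~\ref{generalized-Leibniz-rule} gives $[R_t,\tau_V^t(A)] \in \Dom\delta^{n-1}$, so the commutation identity shows $\delta(\tau_V^t(A)) \in \Dom\delta^{n-1}$, i.e.\ $\tau_V^t(A) \in \Dom\delta^n$. Applying $\delta^{n-1}$ to the identity and expanding the commutator with Lemma~\ref{generalized-Leibniz-rule} gives
\begin{equation*}
   \delta^n(\tau_V^t(A)) = \delta^{n-1}\!\big(\tau_V^t(\delta(A))\big) - \i\sum_{k=0}^{n-1}\binom{n-1}{k}\big[\,\delta^{n-1-k}(R_t)\,,\,\delta^{k}(\tau_V^t(A))\,\big].
\end{equation*}
The first term is uniformly bounded in $t$ by the inductive hypothesis applied to $\delta(A)$. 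In each summand, $\|\delta^{k}(\tau_V^t(A))\|$ and $\|\delta^{n-1-k}(R_t)\| \le \|\delta^{n-1-k}(V)\| + \|\delta^{n-1-k}(\tau_V^t(V))\|$ are uniformly bounded in $t$ — either by the inductive hypothesis (since $k, n-1-k \le n-1$) or, in the boundary cases, directly from $\|R_t\|\le 2\|V\|$ and $\|\tau_V^t(A)\| = \|A\|$. Thus the right-hand side is uniformly bounded in $t$, closing the step; the base case $n=0$ is just $\|\tau_V^t(A)\| = \|A\|$.

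The crux — and where a direct attack fails — is the uniformity in $t$. Bounding $\delta^n(\tau_V^t(V))$ by unwinding $\tau_V^t$ into its Dyson series and applying Lemma~\ref{generalized-Leibniz-rule} term by term only yields a bound growing like $\e^{c|t|}$; likewise the integral representation $R_t = -\int_0^t \tau_V^s(\delta(V))\,\d s$ loses all control after a single differentiation. The commutation identity is arranged precisely so that each application of $\delta$ produces exactly one extra factor, which is either a $\tau_V^t$-image of a fixed element (norm constant in $t$) or a strictly lower-order $\delta^k(R_t)$ — never re-growing the full group in a way that accumulates with $t$ — so that $\|R_t\|\le 2\|V\|$ and the isometry of $\tau_V^t$ propagate through all $n$ derivatives. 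The one thing to handle carefully is the domain bookkeeping: $\tau_V^t(V) \in \Dom\delta^n$ is not automatic and genuinely uses $V \in \Dom\delta^n$; the induction is organised so that this membership is verified in lockstep with the bound.
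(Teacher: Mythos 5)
Your argument is correct and follows essentially the same route as the paper: induction on $n$ using the perturbation relation $\delta = \delta_V - \i[V,{\,\cdot\,}]$ to commute $\delta$ past $\tau_V^t$, with the resulting commutator terms controlled by the Leibniz rule (Lemma~\ref{generalized-Leibniz-rule}), the isometry $\|\tau_V^t({\,\cdot\,})\|=\|{\,\cdot\,}\|$, and the boundedness of $V$. Your formulation is marginally tidier in that you prove the stronger statement for arbitrary $A \in \Dom\delta^n$ and use the single identity $\delta(\tau_V^t(A)) = \tau_V^t(\delta(A)) - \i[R_t,\tau_V^t(A)]$, which handles the domain bookkeeping in lockstep with the norm bound, whereas the paper disposes of the domain membership separately via a Dyson-expansion remark and runs the estimate with a somewhat less transparent expansion involving $\delta_V^l$.
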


\begin{proof}
  Using a Dyson expansion of $V-\tau_V^t(V)$, it is easy to show that $V-\tau_V^t(V)\in\Dom(\delta^n)$. It remains to show that the norm of $\delta^n(V-\tau_V^t(V))$ is uniformly bounded in $t$.

  Since $\|\delta^n(V)\| < \infty$ is independent of~$t$, it suffices to show uniform boundedness of~$\|\delta^n(\tau_V^t(V))\|$. We proceed by induction on~$n$, noting that because $\delta$ is a strongly continuous group generator, $V \in \Dom \delta^n$ implies $V \in \Dom \delta^k$ for $k = 1, \dotsc, n-1, n$. For convenience let $c_n := \sup_{t \in \rr} \|\delta^n(\tau_V^t(V))\|$.

  For $n = 1$, by definition of $\tau_V^t$ from $\delta_V = \delta + \i [V,{\cdot}]$,
  \begin{align*}
	  \delta^1(\tau_V^t(V))
	  		   								&
		  = \tau_V^t(\delta(V)) - \i [V,\tau_V^t(V)]
  \end{align*}
  and thus
  \begin{align*}
	  \|\delta^1(\tau_V^t(V))\|
		  &\leq \|\tau_V^t(\delta(V))\| + \|[V,\tau_V^t(V)]\|
		  \leq \|\delta(V)\| + 2\| V\|^2.
  \end{align*}
Hence $c_1\leq  \|\delta(V)\| + 2\| V\|^2$.

  Now suppose that the statement holds for $k$ and that $V \in \Dom \delta^{k+1}$. Then,
  \begin{align*}
	  \delta^{k+1}(\tau_V^t(V))
				  &= \tau_V^t(\delta_V^{k+1}(V)) - \i \sum_{l=0}^k \delta_V^l [V,\delta^{k-l}(\tau_V^t(V))]
  \end{align*}
  and thus, using Lemma~\ref{generalized-Leibniz-rule},
  \begin{align*}
	  \|\delta^{k+1}(\tau_V^t(V))\|
		  &\leq \|\delta_V^{k+1}(V)\| + \sum_{l=0}^k \sum_{j=0}^l {l \choose j} 2\|\delta_V^j(V)\| \| \delta_V^{l-j} \delta^{k-l} (\tau_V^t(V)) \|.
  \end{align*}
  By repeated application of the product rule of Lemma~\ref{generalized-Leibniz-rule}, this may in turn be bounded uniformly in~$t$ by a combination of powers of $c_1, \dotsc, c_k$, $\|V\|, \|\delta(V)\|, \dotsc, \|\delta^k(V)\|$ and $\|\delta^{k+1}(V)\|$, which are all finite by induction hypothesis.
\end{proof}

In the following, $L$ and~$W$ are operators on a Hilbert space, with $L$ possibly unbounded. We denote by~$\ad_L(W)$ the commutator~$[L,W]$. We set $\ad^0_L W=W$. The following formula can be easily proven by induction. We will then be ready to prove Theorems~\ref{thm:diff} and~\ref{thm:an}.

\begin{lemma} \label{lemma-LW}
Let $L$ and $W$ be two linear operators on a Hilbert space. Assume that $\ad^k_L(W)$ extends to a bounded operator for all $k$ with $0\leq k \leq n $. Then,
\[
L^n W= \sum_{k=0}^n \begin{pmatrix}
				n \\
				k\\
			\end{pmatrix}
 \ad_L^{k}(W)L^{n-k}.
\]
\end{lemma}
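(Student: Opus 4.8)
The plan is a straightforward induction on~$n$, the only subtlety being that~$L$ is unbounded so the identity must be read as an equality of operators on a common invariant domain, say~$\Dom L^n$; the boundedness hypothesis on the iterated commutators~$\ad_L^k(W)$ for $0\le k\le n$ is exactly what guarantees that every term appearing in the induction is well defined and that $W$ (and each $\ad_L^k(W)$) maps $\Dom L^m$ into $\Dom L^m$ for the relevant $m$. The base case $n=0$ is the trivial identity $L^0W=W=\ad_L^0(W)L^0$, and the case $n=1$ is just the definition $LW=\ad_L(W)+WL$.

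For the inductive step, I would first record the elementary commutation relation
\[
L\,\ad_L^{k}(W)=\ad_L^{k+1}(W)+\ad_L^{k}(W)\,L,
\]
valid on $\Dom L$ for each $k$ with $0\le k\le n$ (here again boundedness of $\ad_L^{k}(W)$ and $\ad_L^{k+1}(W)$ is used). Assuming the claimed formula for $n$, I would apply $L$ on the left, distribute it across the sum, and replace each $L\,\ad_L^{k}(W)L^{n-k}$ using the commutation relation. This yields
\[
L^{n+1}W=\sum_{k=0}^{n}\binom{n}{k}\ad_L^{k+1}(W)L^{n-k}+\sum_{k=0}^{n}\binom{n}{k}\ad_L^{k}(W)L^{n+1-k}.
\]
Re-indexing the first sum ($k\mapsto k-1$) and combining with the second via Pascal's rule $\binom{n}{k-1}+\binom{n}{k}=\binom{n+1}{k}$, together with the boundary terms $\binom{n}{n}\ad_L^{n+1}(W)L^{0}$ and $\binom{n}{0}\ad_L^{0}(W)L^{n+1}$, gives exactly $\sum_{k=0}^{n+1}\binom{n+1}{k}\ad_L^{k}(W)L^{n+1-k}$, completing the induction.

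The main (and only) obstacle is bookkeeping the domains: one must check that $\Dom L^{n+1}\subseteq\Dom L^{n}$, that $\ad_L^{k}(W)$ leaves these domains invariant, and that the rearrangement of the finite sum is justified termwise on $\Dom L^{n+1}$. All of this is immediate from the standing assumption that $\ad_L^{k}(W)$ extends to a bounded operator for $0\le k\le n$ (and, at step $n+1$, for $0\le k\le n+1$), so no analytic input beyond this is needed.
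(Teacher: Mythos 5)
Your proof is correct and matches the paper's approach, which itself simply remarks that the identity ``can be easily proven by induction'' without writing out the details. Your inductive step, based on $L\,\ad_L^k(W)=\ad_L^{k+1}(W)+\ad_L^k(W)L$, re-indexing, and Pascal's rule, is the standard computation the authors have in mind; the accompanying domain remarks are reasonable caveats consistent with how the lemma is used (in the proof of Theorem~\ref{thm:diff}, $W$ is the bounded operator $\pi(X_t)$ and the iterated commutators are images under $\pi$ of elements of $\Dom\delta^k$, so the domain preservation you invoke indeed holds).
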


\begin{proof}[Proof of Theorem~\ref{thm:diff}]
Set $X_t:=\pi(V-\tau_V^t(V))$. By definition of $\mathbb{P}_t$ as a spectral measure, and since $L\Omega=0$, we have
\[
	\int_\rr |\Delta Q|^{2n+2}\d {\mathbb P}_t(\Delta Q) =\norme{(L+ X_t)^{n+1}\Omega}^2 =\norme{\left(L+X_t\right)^{n}X_t\Omega}^2
\]
Now,
\[
	(L+X_t)^nX_t=\sum_{A_i=L,X_t} A_1 A_2 \cdots A_n X_t.
\]
Each term in the sum can be written as
\[
	L^{\alpha_1}X_t^{\beta_1}\cdots L^{\alpha_m}X_t^{\beta_m}X_t
\]
with $\sum_i \alpha_i+\sum_i \beta_i=n$.
Using repeatedly Lemma \ref{lemma-LW} and $L\Omega=0$,
$$(L+X_t)^nX_t\Omega=P_t X_t \Omega$$
with $P_t$ a non-commutative polynomial of variables $\ad_L^kX_t^p$ with $k+p\leq n$.

By Lemmas \ref{lemma-Vp} and \ref{lem:unif-bound-delta-n},  $\norme{\delta^k((V-\tau_V^t(V))^p)}$ is uniformly bounded for all $p$ and all $k\leq n$.
Since $\ad_L^k(X_t^p)=\pi(\delta^k((V-\tau_V^t(V))^p))$, it implies $\ad_L^k(X_t^p)$ is uniformly bounded. Hence $\sup_t\norme{P_t}<\infty$, and $(L+X_t)^{n+1}\Omega=P_t\Omega$ proves the theorem.
\end{proof}

\begin{proof}[Proof of Theorem~\ref{thm:an}]
	By hypothesis, for each $N \in \nn$, the function of $s\in\rr$ defined by the truncated Araki--Dyson series
	\begin{align}
		E_V^{(N)}(s):=\one + \sum_{n =1}^N \i^n \int_0^s  \dotsi \int_0^{s_{n-1}} \pi(\tau^{s_n} (V)) \dotsb \pi(\tau^{s_1} (V))  \d s_n \dotsb \d s_1	\label{eq:Dyson-cstar}
	\end{align}
	has an analytic extension to the complex open neighborhood $S_0 :=\{z \in \cc : |{\Im z}| < \frac{1}{2}\gamma\}$. Moreover,
	$$
		\sup_{N\in\nn} \| E_V^{(N)}(z)\|\leq \Exp{|z|v_0}
	$$
	with $v_0:=\sup_{z\in S_0}\|\tau^z(V)\|$.
	Hence, by the Vitali--Porter convergence theorem and equivalence of the weak and strong analyticity, \eqref{eq:Dyson-cstar} converges on~$S_0$ to a norm-analytic $\pi(\cO)$-valued function $S_0 \ni z \mapsto E_V(z)$ satisfying
	\begin{align*}
		\|E_V(z)\| \leq \Exp{|z|v_0}.
	\end{align*}
	On the other hand, the truncated Araki--Dyson series~\eqref{eq:Dyson-cstar} converges to $\Exp{\i s (L + \pi(V))} \Exp{-\i s L}$ for all $s \in \rr$.
	Then we have
	\begin{align*}
		\Exp{-\i s(L + \pi(V) - \pi(\tau_V^t(V)))}\Omega
			&=  \Exp{-\i s(L + \pi(V) - \Exp{\i t (L + \pi(V))}\pi(V)\Exp{-\i t (L + \pi(V))})} \Omega \\
			&= \Exp{\i t (L + \pi(V))} \Exp{-\i s L} \Exp{\i s(L+\pi(V))} \Exp{-\i t (L + \pi(V))}  \Exp{-\i s (L+\pi(V))}\Omega \\
			&= \Exp{\i t (L + \pi(V))} E_V(-s)^* \Exp{-\i t (L + \pi(V))} E_V(-s) \Omega,
	\end{align*}
	with the operators on the right-hand side admitting analytic extensions as functions of~$s$ to the set~$S_0$.
	Therefore,
	\begin{align*}
		\ee_t[\Exp{ \gamma |\Delta Q|}] &\leq  \ee_t[\Exp{ -\gamma \Delta Q}] + \ee_t[\Exp{\gamma \Delta Q}] \\
			&= \|E_V(-\tfrac\i2\gamma)^* \Exp{-\i t (L + \pi(V))} E_V(-\tfrac\i2\gamma) \Omega\|^2
				\\&\qquad
					+ \|E_V(\tfrac\i2\gamma_0)^* \Exp{-\i t (L + \pi(V))} E_V(\tfrac\i2\gamma) \Omega \|^2 \\
			&\leq  \|\tau_V^t(E_V(-\tfrac{\i}{2}\gamma)^*)\|^2 \|E_V(-\tfrac{\i}{2}\gamma) \|^2 \|\Omega\|^2
				\\&\qquad
			 		+ \|\tau_V^t(E_V(\tfrac{\i}{2}\gamma)^*)\|^2\|E_V(\tfrac{\i}{2}\gamma) \|^2   \|\Omega\|^2  \\
			&\leq 2\Exp{2 \gamma v_0}. \qedhere
	\end{align*}
\end{proof}

\section{Heavy tails in some quasi-free gas models}
   \label{sec:models}
	In the present section, we study some models with $V$ bounded and unbounded where the assumptions of Theorems~\ref{thm:diff} and~\ref{thm:an} are related to explicit UV regularity conditions. We moreover prove some converse implications in each model. The results presented in this section are proved in Section \ref{sec:proofs-results}.

	In Section \ref{sec:model-fermions}, in a model of quasi-free fermions, we show that the $2n+2^{\text{nd}}$ moment of~$\P_t$ exists essentially if and only if $V \in \Dom \delta^n$.

	In Section \ref{sec:model-bosons}, we study a model that can be seen as the bosonic analogue of the previous one or as a quantization of the classical model of Section~\ref{sec:model-classic_quad}. We show that, even if~$V$ is unbounded in this model, results similar to Theorems~\ref{thm:diff} and~\ref{thm:an} hold, with a converse to Theorem~\ref{thm:diff}.

	In Section \ref{sec:model-bosons-lin}, we study a model that can be seen as a quantization of the classical model of Section \ref{sec:class-linear}. First we show that~$\P_t$ is the law of an inhomogeneous Poisson process, in sharp contrast with its classical counterpart for which $\P_t$ is the law of a Gaussian random variable. Building on that result, we again show that, even if~$V$ is unbounded in this model, analogues of Theorems~\ref{thm:diff} and \ref{thm:an}~hold. Moreover, we prove that the implications in both of these theorems have converses.

	\medskip

	Before we give the details on these three models, we introduce common notations. Given a Hilbert space $\ch$, we denote by $\Ga(\ch)$  [resp. $\Gs(\ch)$] the anti-symmetric [resp. symmetric] Fock space
	associated to~$\ch$; by~$a$ and ~$a^*$, the usual annihilation and creation operators there; and by~$\dG(b)$ the second quantization of the one-particle operator~$b$ on~$\ch$. Since it is clear from the context, we use the same letter for the fermionic and the bosonic ones. The expression $a^{\sharp}$  stands either for the creation~$a^*$ or for the annihilation~$a$.
   We denote by~$\varphi$ the corresponding field operators
	$
		\varphi(\psi) := \frac{1}{\sqrt 2}(a(\psi) + a^*(\psi)).
	$

	\subsection{Impurity in a quasi-free Fermi gas}
		\label{sec:model-fermions}

We consider a fermionic impurity interacting with a quasi-free Fermi gas. The corresponding unital $C^*$-algebra $\cO$ is generated by $\{a(\phi) : \phi\in\h\}$ with the one-particle Hilbert space $\h = \cc \oplus L^2(\rr_+, \d e)$. The unperturbed dynamic is given by the extension of
\begin{equation}\label{eq:fermions-dyn}
     \tau^s(a^\sharp(\phi)) = a^\sharp(\Exp{\i s h_0} \phi)
  \end{equation}
for~$\phi \in \h$, where
$
   h_0:=\eno \oplus \hat e
$
on~$\h$, to a group of $*$-automorphisms of $\cO$. Here and in what follows, $(\hat e\phi)(e)=e\phi(e)$ and $\eno > 0$.

The intial state~$\omega$ is taken to be the quasi-free state on $\cO$ generated by the Fermi--Dirac density $T = (1+\Exp{\beta h_0})^{-1}$ for some inverse temperature~$\beta > 0$. Then, $\omega$ is a $(\tau,\beta)$-KMS state. We refer the reader to \cite[\S5]{BR2} or \cite[\S X.7]{RS2} for more details.

The impurity--gas interaction is given by the  bounded perturbation
$$
   V = a^*(\f) a(\ones) + a^*(\ones)a(\f),
$$ where $\f = 0 \oplus f \in \cc \oplus L^2(\rr_+, \d e)$ and $\ones = 1 \oplus 0 \in \cc \oplus L^2(\rr_+, \d e)$. Note that~$V = \dG(v)$ for the one-particle rank two operator $$v = \ones\braket{\f, {\cdot}} + \f\braket{\ones, {\cdot}}.$$

\begin{remark}
   The space $L^2(\rr_+, \d e)$ with the unperturbed one-particle Hamiltonian $\hat e$ is chosen for simplicity. This choice captures the essential features of our problem. Our proofs can be adapted to more general separable Hilbert spaces as long as the unperturbed one-particle Hamiltonian is lower bounded and one works in the representation where it is a multiplication operator.
\end{remark}

\begin{remark}
	If there exists $\gamma>0$ such that $f\in\Dom \e^{\frac12\gamma\hat e}$, the norm equality $\|a(\phi)\|=\|\phi\|$ implies that Theorem~\ref{thm:an} holds for said~$\gamma$.
   If one defines~$V$ using a sharp UV cutoff, namely if there exists $\Lambda>0$ such that $f(e)=0$ for all $e>\Lambda$, then $f\in \Dom \e^{\frac12\gamma}$ for any $\gamma\in\rr_+$ and Theorem \ref{thm:an} holds for arbitrarily large $\gamma$. Next theorem shows that UV regularization conditions are not only sufficient but necessary for the existence of moments of $\P_t$.
\end{remark}

\begin{theorem} \label{thm:fermions-equiv-moments}
	Let $\P_t$ be the probability measure of Definition \ref{def:EFS} for $(\cO,\tau, \omega)$ and $V$ defined in this section. Then, for any $n\in\nn$, the following are equivalent
	\begin{enumerate}[label=(\roman*)]
		\item $ \sup_{t \in \rr} \ee_t[\Delta Q^{2n+2}] <\infty \text{;}$\label{it:unif_bound_moment_fermions}
		\item there exists $ t_1<t_2$ such that
			 $ \int_{t_1}^{t_2} \ee_t[\Delta Q^{2n+2}] \d t <\infty \text{;}$\label{it:integrable_moment_fermions}
		\item  $f\in\Dom \hat e^n$.\label{it:diff_function_fermions}
	\end{enumerate}
\end{theorem}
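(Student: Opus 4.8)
The plan is to prove the three statements equivalent through the cycle \ref{it:diff_function_fermions} $\Rightarrow$ \ref{it:unif_bound_moment_fermions} $\Rightarrow$ \ref{it:integrable_moment_fermions} $\Rightarrow$ \ref{it:diff_function_fermions}, the first two implications costing almost nothing. An elementary induction shows that $\ad_{h_0}^k(v)$ is a rank-two operator built from the form factor $(\eno - \hat e)^k f$, so $v \in \Dom \ad_{h_0}^n$ if and only if $(\eno - \hat e)^n f \in L^2(\rr_+,\d e)$, i.e.\ (since $\eno$ is fixed) if and only if $f \in \Dom \hat e^n$; and since $V = \dG(v)$ is bounded with $\tau^s(V) = \dG(\e^{\i s h_0}v\e^{-\i s h_0})$, this is in turn equivalent to $V \in \Dom \delta^n$. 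Thus \ref{it:diff_function_fermions} $\iff V \in \Dom \delta^n$, so Theorem~\ref{thm:diff} gives \ref{it:diff_function_fermions} $\Rightarrow$ \ref{it:unif_bound_moment_fermions}, while \ref{it:unif_bound_moment_fermions} $\Rightarrow$ \ref{it:integrable_moment_fermions} follows from $\int_{t_1}^{t_2}\ee_t[\Delta Q^{2n+2}]\,\d t \le (t_2 - t_1)\sup_t \ee_t[\Delta Q^{2n+2}]$. The real content is the implication \ref{it:integrable_moment_fermions} $\Rightarrow$ \ref{it:diff_function_fermions}, which I would prove by contraposition.

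Because $V = \dG(v)$ with $v$ of finite rank, the perturbed dynamics is again quasi-free, generated at the one-particle level by $h_\lambda := h_0 + v$. Lifting the finite-dimensional formula of Remark~\ref{rk:finite_dim} through the quasi-free identity $\omega(\Gamma(U)) = \det_\h(1 + T(U-1))$ gives, for real $\alpha$,
\[
\cht(\alpha) = \det\nolimits_\h\!\bigl(1 + T\bigl(\e^{\i t h_\lambda}\e^{\i \alpha h_0}\e^{-\i t h_\lambda}\e^{-\i \alpha h_0} - 1\bigr)\bigr),
\]
the operator inside being trace class since $h_\lambda - h_0 = v$ has finite rank. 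Equivalently — and more convenient here — I would pass to the Araki--Wyss GNS representation, in which $\Omega$ is a Fock vacuum and $L$, $\pi(V)$, $\pi(\tau_V^t(V))$ are all quadratic in the doubled creation and annihilation operators; then, with $Y_t := L + \pi(V) - \pi(\tau_V^t(V))$ and $L\Omega = 0$,
\[
\ee_t[\Delta Q^{2n+2}] = \|Y_t^{n+1}\Omega\|^2 = \bigl\|Y_t^n\,\pi(V - \tau_V^t(V))\Omega\bigr\|^2 \in [0,\infty],
\]
equal to $+\infty$ exactly when $\Omega \notin \Dom Y_t^{n+1}$, and otherwise evaluable sector by sector.

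The heart of the argument is a lower bound for this quantity. The vector $\pi(V - \tau_V^t(V))\Omega = \pi(\dG(v - v_t))\Omega$, with $v_t := \e^{\i t h_\lambda}v\e^{-\i t h_\lambda}$, has nonzero components only in the zero- and two-particle sectors, its two-particle part being an explicit square-integrable function built from the matrix elements of $v - v_t$ weighted, mode by mode, by $\sqrt T$ or $\sqrt{1-T}$. In that sector $L$ acts as multiplication by a difference of energies, and $\pi(V)$, $\pi(\tau_V^t(V))$ couple the impurity mode ($\eno$) to the bath modes ($e \in \rr_+$); tracking this, the two-particle component of $Y_t^n\pi(V - \tau_V^t(V))\Omega$ contains a summand whose squared norm is a positive constant times
\[
\int_0^\infty (\eno - e)^{2n}\,|f(e)|^2\,(1 - T(e))\,|h_t(e)|^2\,\d e ,
\]
where $h_t(e) = 1 - \e^{\i t(\eno - e)}$ up to a correction of relative size $O(1/e)$ (coming from the difference between $\e^{\pm\i t h_\lambda}$ and $\e^{\pm\i t h_0}$ on the form factors), and $1 - T(e) \to 1$ as $e \to \infty$. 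Since impurity and bath modes are mutually orthogonal, the various two-particle contributions are orthogonal, so $\ee_t[\Delta Q^{2n+2}]$ dominates a positive multiple of the displayed integral. Establishing this explicit two-particle wavefunction — in particular controlling $\e^{\pm\i t h_\lambda}$ through Duhamel expansions, which is where the finite rank of $v$ (as in Lemma~\ref{lem:unif-bound-delta-n}) re-enters — is the step I expect to be the main obstacle.

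To conclude, suppose $f \notin \Dom \hat e^n$, i.e.\ $\int_0^\infty e^{2n}|f(e)|^2\,\d e = \infty$, and fix $t_1 < t_2$. By Tonelli's theorem (all integrands nonnegative) and the lower bound,
\[
\int_{t_1}^{t_2} \ee_t[\Delta Q^{2n+2}]\,\d t \;\ge\; c \int_0^\infty (\eno - e)^{2n}\,|f(e)|^2\,(1 - T(e))\Bigl(\int_{t_1}^{t_2} |h_t(e)|^2\,\d t\Bigr)\d e .
\]
Now $\int_{t_1}^{t_2} |1 - \e^{\i t(\eno - e)}|^2\,\d t = 2(t_2 - t_1) - \tfrac{2}{\eno - e}\bigl(\sin(t_2(\eno - e)) - \sin(t_1(\eno - e))\bigr) \to 2(t_2 - t_1) > 0$ as $e \to \infty$, and the $O(1/e)$ correction to $h_t$ cannot spoil this after the same averaging (crudely bound its contribution by its supremum), so $\int_{t_1}^{t_2}|h_t(e)|^2\,\d t \ge c' > 0$ for all large $e$; as $1 - T(e) \to 1$ too, the right-hand side above is bounded below by a positive multiple of $\int_{\{e \text{ large}\}} e^{2n}|f(e)|^2\,\d e = \infty$. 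Hence $\int_{t_1}^{t_2}\ee_t[\Delta Q^{2n+2}]\,\d t = \infty$ for every $t_1 < t_2$, contradicting \ref{it:integrable_moment_fermions}. This proves \ref{it:integrable_moment_fermions} $\Rightarrow$ \ref{it:diff_function_fermions} and closes the cycle.
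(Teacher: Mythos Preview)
Your overall architecture matches the paper's: \ref{it:diff_function_fermions} $\Rightarrow$ \ref{it:unif_bound_moment_fermions} via Theorem~\ref{thm:diff}, \ref{it:unif_bound_moment_fermions} $\Rightarrow$ \ref{it:integrable_moment_fermions} trivially, and the substance lives in \ref{it:integrable_moment_fermions} $\Rightarrow$ \ref{it:diff_function_fermions}, where one passes to the Araki--Wyss representation, projects onto the two-particle sector, isolates the contribution carrying $h_0^n\f$, and time-averages to kill the oscillation. That is exactly what the paper does.

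The gap is your orthogonality step. The sentence ``since impurity and bath modes are mutually orthogonal, the various two-particle contributions are orthogonal'' is not correct, and the argument hinges on it. Writing out $L^n\pi(X_t)\Omega$ explicitly, one finds eight wedge terms (the paper's displayed lines in the proof), and several of them are \emph{not} orthogonal to the term you want to keep: for instance the piece $(\sqrt{1-T}\,\f\oplus 0)\wedge(0\oplus \sqrt{T}\,h_0^n\ones)$ pairs nontrivially with $(\sqrt{1-T}\,h_0^n\f\oplus 0)\wedge(0\oplus\sqrt{T}\,\ones)$, and the perturbed evolutions $\e^{\i t h}\ones$, $\e^{-\i t\bar h}\ones$ leak into the bath sector, so the decomposition $\h\otimes\h=(\cc\oplus L^2)^{\otimes 2}$ does not separate the contributions. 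Without orthogonality you are left with a reverse triangle inequality, and for that you must \emph{bound} the non-dominant terms.

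This is precisely where the paper's proof differs from your plan: it runs by \emph{induction on~$n$}. Hypothesis~\ref{it:integrable_moment_fermions} at level~$n$ implies, by Jensen, the same at level~$n-1$; the induction hypothesis then gives $f\in\Dom\hat e^{\,n-1}$. That regularity is exactly what is needed to (a)~write $(L+\pi(X_t))^{n+1}\Omega=L^n\pi(X_t)\Omega+Q_t\Omega$ with $Q_t$ a polynomial in $\pi(\delta^p(X_t))$, $p\le n-1$, uniformly bounded, (b)~bound the terms involving $h_0^n\e^{\i t h}\ones$ via Lemma~\ref{lemma:sup-t-power-p}, and (c)~replace $h_0^n\e^{\i t h}\f$ by $h_0^n\e^{\i t h_0}\f$ via Lemma~\ref{lem:powers-n-to-diff}, which is your ``$O(1/e)$ correction'' and which \emph{itself requires} $f\in\Dom\hat e^{\,n-1}$. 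Your contrapositive formulation does not avoid this bootstrap: if $f\notin\Dom\hat e^{\,n-1}$ then several of the ``other'' terms are infinite as well and the decomposition is indeterminate; you are forced back to the case $n-1$, i.e.\ to induction. Once you insert the induction and drop the orthogonality claim, your sketch becomes the paper's proof.
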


\begin{remark}\label{rem:reg-fermions-quad}
	In this setup, since $\delta(a^\sharp(\phi))=a^\sharp(\i h_0\phi)$, and $\|a(\phi)\|=\|\phi\|$,  $f\in\Dom \hat e^{n}$ implies $V\in\Dom \delta^n$.
	 Hence the condition $V\in\Dom \delta^n$ is equivalent to an UV regularity condition on the form factor $f$.
\end{remark}

\begin{remark}
	If there exists $n\in\nn$ such that $f\not\in \Dom \hat e^n$, then the $2n+2^{\text{nd}}$ moment of $\P_t$ essentially never exists	{} and $\P_t$ is then heavy-tailed.
\end{remark}

	\subsection{Quantum Open Harmonic Oscillator}
		\label{sec:model-bosons}
		We consider a quantum harmonic oscillator coupled to a gas of quasi-free scalar bosons. Our model is similar to the one studied in \cite{arai1981model,davies1973harmonic}. It is the bosonic analogue of the model of the previous section and the quantum analogue of the classical model of Section~\ref{sec:model-classic_quad}.

Consider the Hilbert  space $$\Gs(\cc)\otimes \Gs(L^2(\rr_+, \d e)) \cong L^2(\rr)\otimes \Gs (L^2(\rr_+, \d e))$$ with unperturbed Hamiltonian
$$
	\dG(\eno)\otimes \one+\one \otimes \dG(\hat{e}).
$$
The oscillator--gas interaction is given by the unbounded perturbation
$$
	a^*(1)\otimes a(f)+  a(1)\otimes a^*(f)
$$
with $f\in \Dom\hat e^{-1/2}\cap \Dom\hat e$.\footnote{The assumption $f\in \Dom\hat e^{-1/2}\cap\Dom \hat e$ is a technical IR/UV regularity condition needed to have a properly defined model.}

Again by the usual identification $\Gs(\cc)\otimes\Gs( L^2(\rr_+, \d e)) \cong \Gs(\cc\oplus L^2(\rr_+, \d e))$, we obtain unitarily equivalents $\dG(h_0)$ and $V := a^*(\f) a(\ones) + a^*(\ones)a(\f)$ on~$\Gs(\h)$ where $\f = 0 \oplus f \in \cc \oplus L^2(\rr_+, \d e)$ and $\ones = 1 \oplus 0 \in \cc \oplus L^2(\rr_+, \d e)$.

Let $\rho = (\Exp{\beta h_0}-1)^{-1}$ be the Bose--Einstein density for some inverse temperature~$\beta > 0$. Let us introduce a representation of the \emph{canonical commutation relations} (CCR) \emph{algebra}  over $\Dom \rho^{1/2}$ known as the (glued left) Araki--Woods representation, which in its earliest form goes back to \cite{AWo}. This representation is determined  by
\begin{align}
	W^{\text{AW}} : \Dom \rho^{1/2} &\to \B(\Gs(\h \oplus \h)) \notag \\
	\phi &\mapsto W(\sqrt{1+\rho}\,\phi  \oplus \sqrt{\rho}\,\overline{\phi}), \label{eq:AWo-rep}
\end{align}
where $W$ denotes the Weyl operators on the symmetric Fock space $\Gs(\h \oplus \h)$ and~$\bar{\phi}$ is the complex conjugation of~$\phi$. These operators satisfy the Weyl form of the CCR:
\begin{equation} \label{eq:weyl}
	W^{\text{AW}}(\phi) W^{\text{AW}}(\psi) = \Exp{-\frac{\i}{2} \Im \braket{\phi, \psi}} W^{\text{AW}}(\phi+\psi) \end{equation}
for all $\phi$ and~$\psi$ in~$\Dom \rho^{1/2}$. Also, we have the evaluation identity
\begin{equation} \label{eq:weyl-eval}
	\braket{\Omega, W^{\text{AW}}(\phi) \Omega} = \Exp{-\frac{1}{4} \braket{\phi, \phi} - \frac{1}{2}\braket{\phi, \rho\phi}}
\end{equation}
for all $\phi$ in~$\Dom \rho^{1/2}$.

\medskip
Let $\mathfrak{M}^{\text{AW}} \subseteq \B(\Gs(\h \oplus \h))$ be the von Neumann algebra generated by $$\{W^{\text{AW}}(\phi) : \phi \in \Dom \rho^{1/2}\}.$$ The maps defined by
\begin{equation}\label{eq:bosons-dyn}
	\tau^s(W^{\text{AW}}(\phi)) = W^{\text{AW}}(\Exp{\i s h_0} \phi)
\end{equation}
for all $\phi \in \Dom \rho^{1/2}$ extend to a $W^*$-dynamics $(\tau^s)_{s \in \rr}$ on $\mathfrak{M}^{\text{AW}}$ {i.e.} a $\sigma$-weakly continuous one-parameter group of $*$-automorphisms of $\mathfrak{M}^\text{AW}$.
We consider the state $\omega : A \mapsto \braket{\Omega, A \Omega}$ on $\mathfrak{M}^{\text{AW}}$ where~$\Omega$ is the vacuum in~$\Gs(\h \oplus \h)$. Then, $\omega$ is a $(\tau,\beta)$-KMS state on $\mathfrak{M}^{\text{AW}}$ and the corresponding Liouvillean is~$L = \dG(h_0 \oplus -h_0)$.
For proofs of these facts, we refer the reader to~\cite[\S 17]{DG} and \cite[\S 9]{DE}.

For each $\phi \in \Dom \rho^{1/2}$, the map $\rr \ni s \mapsto W^\text{AW}(s\phi)$ is strongly continuous and we will denote its generator by~$\varphi^{\text{AW}}(\phi)$. It is explicitly given on the Fock space~$\Gs(\h \oplus \h)$ by
\begin{equation*}
	 \varphi^{\text{AW}}(\phi)
		= \varphi(\sqrt{1+\rho}\,\phi  \oplus \sqrt{\rho}\,\overline{\phi}).
\end{equation*}

The affiliated creation and annihilation operators, ${a^{\text{AW}}}^*(\phi)$ and $a^{\text{AW}}(\phi)$, are related to $\varphi^{\text{AW}}(\phi)$ by the formula $\varphi^{\text{AW}}(\phi) = \tfrac{1}{\sqrt{2}}({a^{\text{AW}}}^*(\phi) + a^{\text{AW}}(\phi))$ and $[a^{\text{AW}}(\phi),{a^{\text{AW}}}^*(\psi)] =\langle \phi,\psi\rangle$. They are explicitly given by the formula
\begin{gather*}
		a^{\text{AW}}(\phi) = a(\sqrt{1+\rho}\,\phi \oplus 0) + a^*(0 \oplus \sqrt{\rho}\,\overline{\phi}).
\end{gather*}

The perturbation~$V$ is represented by an operator~$V^{\text{AW}}$ affiliated to~$\mathfrak{M}^{\text{AW}}$ given by
$$
	V^{\text{AW}}  = {a^{\text{AW}}}^*(\ones) a^{\text{AW}}(\f) + {a^{\text{AW}}}^*(\f) a^{\text{AW}}(\ones),
$$
where $\f = 0 \oplus f \in \Dom \rho^{1/2} \subset \h$ since $f\in\Dom \hat e^{-1/2}$ and $\ones = 1 \oplus 0 \in \h$.

Since $f \in  \Dom \hat{e}^{-1/2} \cap\Dom \hat{e}$, the operator
\begin{equation}
\label{eq:def-L-wwA}
	L+V^{\text{AW}}
\end{equation}
is essentially self-adjoint on $\Dom L \cap \Dom V^{\text{AW}}$ by Lemma~\ref{lemma_selfa_impurity} and the family of maps $(\tau_V^s)_{s \in \rr}$ defined by
\begin{align*}
	\tau^s_V(W^\text{AW}(\phi)) &=  \Exp{\i t(L+V^{\text{AW}})} W^\text{AW}(\phi) \Exp{-\i t(L+V^{\text{AW}})}
\end{align*}
for $\phi \in \Dom \rho^{\frac 12}$ extends to a $W^*$-dynamics on $\mathfrak{M}^\text{AW}$~\cite{DJP}. Again with the one-particle operator $v = \ones\braket{\f, {\cdot}} + \f\braket{\ones, {\cdot}}$, we have
$$
	\tau^s_V(W^\text{AW}(\phi)) = W^\text{AW}(\Exp{\i s (h_0 + v)} \phi)
$$
for all $\phi \in \Dom \rho^{1/2}$.

The operator
$$
	L+V^{\text{AW}}- \Exp{\i t(L+V^{\text{AW}})} V^{\text{AW}}\Exp{-\i t(L+V^{\text{AW}})}
$$
is well-defined and self-adjoint. Its self-ajointness follows from Lemma \ref{lemma_selfa_impurity} and the observation
$$
	L+V^{\text{AW}}- \Exp{\i t(L+V^{\text{AW}})} V^{\text{AW}}\Exp{-\i t(L+V^{\text{AW}})}=\Exp{\i t(L+V^{\text{AW}})} L\Exp{-\i t(L+V^{\text{AW}})}.
$$
Hence the following definition is sound.
\begin{definition}\label{def:FCS-OH}
The distribution~$\P_t$ of heat variation is the spectral measure of
$$ L+V^{\text{AW}}- \Exp{\i t(L+V^{\text{AW}})} V\Exp{-\i t(L+V^{\text{AW}})} $$
with respect to~$\Omega$.
\end{definition}

\begin{theorem} \label{thm:bosons-quad-equiv-moments}
	Assume $f \in \Dom\hat e^{-1/2}\cap \Dom\hat e$ and $\|\hat e^{-1/2} f \| \neq \eno^{1/2}$. Let $\P_t$ be the probability measure of Definition \ref{def:FCS-OH}.	Then, for any $n\in\nn$ the following are equivalent
	\begin{enumerate}[label=(\roman*)]
		\item $ \sup_{t \in \rr} \ee_t[\Delta Q^{2n+2}] <\infty \text{;}$\label{it:unif_bound_moments_bosons_quad}
		\item there exist $ t_1<t_2$ such that $ \int_{t_1}^{t_2} \ee_t[\Delta Q^{2n+2}] \d t <\infty \text{;}$\label{it:integrability_moments_bosons_quad}
		\item  $f\in\Dom \hat e^n$. \label{it:diff_function_bosons_quad}
	\end{enumerate}
\end{theorem}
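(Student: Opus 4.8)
The plan is the following. The implication $(i)\Rightarrow(ii)$ is trivial, so the content is $(iii)\Rightarrow(i)$ and, contrapositively, $\neg(iii)\Rightarrow\neg(ii)$; in both cases I would use the quasi-free structure to rewrite the even moments of~$\P_t$ through explicit one-particle data. Writing $M_t:=\e^{\i t(L+V^{\text{AW}})}L\,\e^{-\i t(L+V^{\text{AW}})}=L+X_t$ with $X_t:=V^{\text{AW}}-\tau^t_V(V^{\text{AW}})$, the identity recorded just before Definition~\ref{def:FCS-OH} and $L\Omega=0$ give $\ee_t[\Delta Q^{2n+2}]=\|(L+X_t)^{n+1}\Omega\|^2$. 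Since $\tau^t_V$ acts on the Araki--Woods creation and annihilation operators through the one-particle propagator $\e^{\i s(h_0+v)}$ of the Friedrichs operator $h_0+v$, the vector $X_t\Omega$ is (a scalar multiple of~$\Omega$ plus) a two-particle vector whose form factors are among $\sqrt{1+\rho}\,\ones$, $\sqrt{1+\rho}\,\f$, $\sqrt\rho\,\overline\ones$, $\sqrt\rho\,\overline\f$ and their images under $\e^{\i t(h_0+v)}$. Expanding $(L+X_t)^{n+1}\Omega$ by moving powers of~$L$ through the $X_t$'s and using $L\Omega=0$ (as in the proof of Theorem~\ref{thm:diff}, but now with unbounded~$X_t$, justified on the appropriate domains as long as $f$ is regular enough) produces a finite sum of products of Araki--Woods operators applied to~$\Omega$, with form factors of the form $g(\hat e)f$ or $g(\hat e)\ones$, where $g$ is a bounded function of~$\hat e$ built from powers $e^k$ ($k\le n$), phases $\e^{\i s\hat e}$, nonnegative powers of $\rho$ and $1+\rho$, and the bounded scattering coefficients of $h_0+v$; a counting argument shows that the only term carrying the maximal power $\hat e^n$ on an $f$-form-factor is $L^nX_t\Omega=L^n(V^{\text{AW}}-\tau^t_V(V^{\text{AW}}))\Omega$, every other term carrying at most $\hat e^{n-1}$.

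For $(iii)\Rightarrow(i)$: if $f\in\Dom\hat e^n$ then $f\in\Dom\hat e^k$ for $0\le k\le n$, and together with the standing hypothesis $f\in\Dom\hat e^{-1/2}$ (which tames the infrared $1/(\beta e)$ behaviour of $\rho$ and $1+\rho$) every inner product in the expansion is finite, so $\ee_t[\Delta Q^{2n+2}]<\infty$. For the supremum, the $t$-dependence is carried only by unitaries $\e^{\i s\hat e}$ and by $\e^{\i s(h_0+v)}$; here the assumption $\|\hat e^{-1/2}f\|\neq\eno^{1/2}$ enters, ruling out a threshold resonance of $h_0+v$ and thereby making the boundary values of $(h_0+v-e\mp\i 0)^{-1}$, and hence the form factors $\hat e^k\e^{\i s(h_0+v)}\f$ and the scattering coefficients, bounded uniformly in~$s$. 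This is the exact analogue of the role played by $-1\notin\sp v$ in the classical Proposition~\ref{prop:class-exp}, and it yields $\sup_{t\in\rr}\ee_t[\Delta Q^{2n+2}]<\infty$.

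For $\neg(iii)\Rightarrow\neg(ii)$: I would in fact show that $\ee_t[\Delta Q^{2n+2}]=+\infty$ for \emph{every} $t\neq0$, which clearly implies $\int_{t_1}^{t_2}\ee_t[\Delta Q^{2n+2}]\,\d t=+\infty$ for all $t_1<t_2$. Let $m\le n$ be minimal with $f\notin\Dom\hat e^m$; since $\Dom M_t^{n+1}\subseteq\Dom M_t^{m+1}$, it suffices to show $\Omega\notin\Dom M_t^{m+1}$. In the expansion of $(L+X_t)^{m+1}\Omega$, the only term with a single factor of~$X_t$ surviving $L\Omega=0$ is $L^m X_t\Omega$, while the terms with at least two factors of~$X_t$ carry at most $\hat e^{m-1}$ on their form factors and hence define genuine finite-norm vectors because $f\in\Dom\hat e^{m-1}$ by minimality of~$m$. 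A direct computation of $L^m(V^{\text{AW}}-\tau^t_V(V^{\text{AW}}))\Omega$ then shows that its component in the ``one particle in the high-energy continuum, one particle in the $\eno$-mode'' block of the two-particle sector has, for large energy~$e$, amplitude equivalent to $\sqrt{\rho_\eno}\,(1-\e^{\i te}\,\overline{\braket{\ones,\e^{\i t(h_0+v)}\ones}})\,e^m f(e)$. Since $(h_0+v)\ones=\eno\ones+\f$ with $\f\neq0$, the vector $\ones$ is not an eigenvector of $h_0+v$, so $|\braket{\ones,\e^{\i t(h_0+v)}\ones}|<1$ for every $t\neq0$; hence this amplitude has modulus bounded below by a positive multiple of $e^m|f(e)|$ at large~$e$, and $f\notin\Dom\hat e^m$ forces the corresponding $L^2$-norm to diverge. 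Therefore $(L+X_t)^{m+1}\Omega$ is not in $\Gs(\h\oplus\h)$, i.e.\ $\ee_t[\Delta Q^{2m+2}]=+\infty$, and a fortiori $\ee_t[\Delta Q^{2n+2}]=+\infty$, for all $t\neq0$.

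The main obstacle is the one-particle bookkeeping. On the one hand, one must justify the expansion for the unbounded perturbation~$V^{\text{AW}}$ — that the vectors obtained by applying up to~$n$ factors of~$L$ interspersed with $X_t$'s to~$\Omega$ are well defined exactly when $f\in\Dom\hat e^n$, and that the CCR manipulations are legitimate — which leans on the self-adjointness results of Appendix~\ref{app:self-adjoint}. On the other hand, one must carry out the two asymptotic computations that underlie the sharp claims: the uniform-in-$t$ resolvent estimate in $(iii)\Rightarrow(i)$, where the nonresonance hypothesis $\|\hat e^{-1/2}f\|\neq\eno^{1/2}$ is indispensable, and the large-energy expansion of $L^m(V^{\text{AW}}-\tau^t_V(V^{\text{AW}}))\Omega$ in $\neg(iii)\Rightarrow\neg(ii)$, where the key non-degeneracy input is that $\ones$ is not an eigenvector of $h_0+v$.
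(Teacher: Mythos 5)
Your overall architecture matches the paper's: close the cycle $(iii)\Rightarrow(i)\Rightarrow(ii)\Rightarrow(iii)$, run an induction on~$n$ via the minimal $m$ with $f\notin\Dom\hat e^m$, reduce to the two-particle component of $L^m X_t\Omega$, and use the nonresonance hypothesis only in the positive direction. But the negative direction is handled genuinely differently. The paper does \emph{not} prove pointwise-in-$t$ divergence: it integrates over $(t_1,t_2)$ and invokes Lemma~\ref{lem:osc-bosons}, which bounds
\[
\inf_{e\in\rr_+}\int_{t_1}^{t_2}\|A(1-\e^{-\i t(\bar h-e)})\ones\|^2\,\d t
\]
below by a positive constant through a continuity-on-compacts plus dominated-convergence argument. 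Your pointwise claim ($\ee_t[\Delta Q^{2n+2}]=\infty$ for every $t\neq 0$) may well be true for this specific model because the coefficient $\|\rho^{1/2}(\one-\e^{-\i t(\bar h-e)})\ones\|^2$ has no zeros in $e$ once $|\braket{\ones,\e^{\i t\bar h}\ones}|<1$, but it is a strictly stronger statement than what the paper proves, and the paper's remark at the end of Section~\ref{sec:models-proofs} deliberately warns that for these theorems one cannot in general replace the interval condition by finiteness at a single $t$ (and exhibits this failure for the van Hove analogue, whose coefficient $\frac{1-\cos(et)}{e^2}$ does vanish along an arithmetic progression in $e$). Your argument therefore stands or falls on a model-specific nonvanishing fact that you should prove rather than assert.

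That model-specific fact is also where your proof as written has a gap. You deduce $|\braket{\ones,\e^{\i t h}\ones}|<1$ for all $t\neq0$ from ``$\ones$ is not an eigenvector of $h_0+v$.'' That implication is false in general: if $\ones$ were a combination of two eigenvectors of $h$ with eigenvalue gap $\lambda_1-\lambda_2$, then $\e^{\i t h}\ones$ would be a scalar multiple of~$\ones$ for every $t\in\frac{2\pi}{\lambda_1-\lambda_2}\zz$. What you actually need is that the spectral measure of $\ones$ with respect to $h$ has a nontrivial absolutely continuous part; this holds because $f\not\equiv0$ and the Friedrichs-model formula gives $\Im\braket{\ones,(\bar h-e-\i 0)^{-1}\ones}=\pi|f(e)|^2/|D(e+\i 0)|^2$, but this is an additional spectral-theoretic argument, not a consequence of ``not an eigenvector'' alone. (The paper sidesteps the issue by working with the integrated quantity, where ``not an eigenvector'' genuinely suffices, see the proof of Lemma~\ref{lem:osc-bosons}.) A second, smaller gap is the domain step you label ``justified on the appropriate domains'': to extract $X_t\Omega\in\Dom L^m$ from $\Omega\in\Dom(L+X_t)^{m+1}$ you need the spectral-projection argument of the paper's \textbf{Claim~A} (bounding $\|P([-\Lambda,\Lambda])(L+X_t)^{m+1}\Omega\|$ uniformly in $\Lambda$), which is not trivial for the unbounded $X_t$ and should be written out or at least cited.

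For $(iii)\Rightarrow(i)$ your scattering-theoretic framing (``boundary values of $(h_0+v-e\mp\i 0)^{-1}$'' and ``scattering coefficients'') is heavier than what the paper uses: Lemmas~\ref{lemma:sup-t-power-p} and~\ref{lemma:sup-t-power-half} work directly with boundedness of $h_0^{-1/2}h^{1/2}$ (which is where $\|\hat e^{-1/2}f\|\neq\eno^{1/2}$ enters, equivalently $\ker h=\{0\}$) and of $h_0^n-h^n$, so the uniform-in-$t$ control of $\|h_0^\alpha\e^{\i t h}\phi\|$ and hence of $\|Q_t\Omega\|$ is obtained by purely algebraic operator estimates on Fock space, without stationary scattering theory. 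The underlying mechanism is the same, but you should be aware that the proof need not pass through any resolvent boundary-value analysis.
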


\begin{remark}
As one can see in the proof, without the assumption $\|\hat e^{-1/2} f \| \neq \eno^{1/2}$, the second statement still implies the third one, and the third statement implies the second one for all time intervals~$(t_1,t_2)$.
\end{remark}

\begin{proposition} \label{prop:bosons-quad-an}
	Let $\P_t$ be the probability measure of Definition \ref{def:FCS-OH}. If there exists $\gamma > 0$ such that $f \in \Dom \Exp{\frac 12 \gamma \hat e}$, then there exists $\gamma'\in (0,\gamma)$ such that
	\begin{equation}
		\sup_{t\in\rr}\ee_t[\e^{\gamma' |\Delta Q|}] < \infty.
	\end{equation}
\end{proposition}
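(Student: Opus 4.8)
The plan is to mimic the proof of Theorem~\ref{thm:an} but in the bosonic, unbounded-perturbation setting, exploiting the quasi-free structure so that all computations reduce to the one-particle Hilbert space~$\h$. The key observation is the identity
$$
   L+V^{\text{AW}}- \Exp{\i t(L+V^{\text{AW}})} V^{\text{AW}}\Exp{-\i t(L+V^{\text{AW}})}=\Exp{\i t(L+V^{\text{AW}})} L\Exp{-\i t(L+V^{\text{AW}})},
$$
already recorded in the excerpt, together with $L\Omega = 0$. Writing $U_V^t := \Exp{\i t(L+V^{\text{AW}})}$ and $K_t := U_V^t L (U_V^t)^*$ for the operator whose spectral measure is~$\P_t$, the characteristic function is $\cht(\alpha)=\braket{\Omega, \Exp{\i\alpha K_t}\Omega}=\braket{(U_V^t)^*\Omega, \Exp{\i\alpha L}(U_V^t)^*\Omega}$; but one still needs to conjugate back through $V^{\text{AW}}$. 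The cleaner route, parallel to the proof of Theorem~\ref{thm:an}, is to write the exponential generating function as
$$
   \ee_t[\e^{\gamma'|\Delta Q|}]\le \ee_t[\e^{\gamma'\Delta Q}]+\ee_t[\e^{-\gamma'\Delta Q}]
$$
and to express $\ee_t[\e^{\pm\gamma'\Delta Q}]$ as the norm-squared of a vector obtained by acting on~$\Omega$ with an analytic continuation (in the parameter~$s$, evaluated at $s=\mp\frac{\i}{2}\gamma'$) of $\Exp{-\i s K_t}\Omega$. Since $\tau^s_V(W^{\text{AW}}(\phi))=W^{\text{AW}}(\Exp{\i s(h_0+v)}\phi)$, the Araki--Dyson cocycle $E_V(s)$ acts on Weyl operators through the one-particle cocycle $\Exp{\i s(h_0+v)}\Exp{-\i s h_0}$, and all operator norms get replaced by computations with $\Exp{\i s(h_0+v)}$ on~$\h$.

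Concretely, the steps I would carry out are: (1) Express $\ee_t[\e^{\pm\gamma'\Delta Q}]$ via the spectral theorem as $\|\,\text{(analytic continuation of }\Exp{-\i s K_t}\Omega)\,\|^2$ evaluated at $s=\mp\frac{\i}{2}\gamma'$, justifying the continuation by the essential self-adjointness from Lemma~\ref{lemma_selfa_impurity} and a density argument, exactly as $E_V(z)$ was handled in the proof of Theorem~\ref{thm:an}. (2) Decompose $\Exp{-\i s K_t}=U_V^t\Exp{-\i s L}(U_V^t)^*$ and use the Araki--Dyson expansion to write $(U_V^t)^*=E_V(-t)$ (interpreted via the cocycle relation $\tau_V^s = \mathrm{Ad}(E_V(s))\circ\tau^s$), so that the relevant vector is a product of operators each of which is a bounded function of $L$, of $V^{\text{AW}}$, and of $E_V(\pm\frac{\i}{2}\gamma')$, acting on~$\Omega$. (3) Reduce the norm of $E_V(\pm\frac{\i}{2}\gamma')\Omega$ and of $V^{\text{AW}}E_V(\pm\frac{\i}{2}\gamma')\Omega$ to one-particle quantities: the Weyl operator evaluation identity $\braket{\Omega, W^{\text{AW}}(\phi)\Omega}=\Exp{-\frac14\|\phi\|^2-\frac12\braket{\phi,\rho\phi}}$ turns these into expressions involving $\Exp{\frac12\gamma'\hat e}$ applied to $f$, $\ones$ and their images under $\Exp{\i s(h_0+v)}$. (4) Establish the uniform-in-$t$ bound on the one-particle cocycle $\Exp{\i t(h_0+v)}$ analytically continued in the strip $|\Im t|<\frac12\gamma$: since $v$ is a finite-rank perturbation of $h_0$ built from $f\in\Dom\Exp{\frac12\gamma\hat e}$ and $\ones$, one controls $\Exp{\pm\frac12\gamma'\hat e}\Exp{\i t(h_0+v)}\Exp{\mp\frac12\gamma'\hat e}$ uniformly for $t\in\rr$ and $\gamma'<\gamma$ via a Dyson-series/resolvent estimate, which also explains why one must shrink $\gamma$ to some $\gamma'$.

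The main obstacle I expect is step~(4): unlike the bounded case of Theorem~\ref{thm:an}, where $\sup_{z\in S_0}\|\tau^z(V)\|$ is automatically finite, here one must show that the analytic continuation in $s$ of $\Exp{-\i s K_t}\Omega$ exists in a full strip and, crucially, that it is \emph{uniformly} bounded in~$t$; this requires quantitative control of $\Exp{\i t(h_0+v)}$ conjugated by the unbounded weight $\Exp{\frac12\gamma'\hat e}$, using that $h_0$ is lower bounded, that $v$ has range in $\Dom\Exp{\frac12\gamma\hat e}$, and a Neumann-series argument valid only for $\gamma'$ strictly below $\gamma$ — the loss $\gamma'<\gamma$ is genuinely needed because the weighted perturbation $\Exp{\frac12\gamma'\hat e}v\Exp{-\frac12\gamma'\hat e}$ must remain bounded with norm small enough relative to the gap provided by $\gamma-\gamma'$. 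A secondary technical point is the unboundedness of $V^{\text{AW}}$: one must check that $V^{\text{AW}}$ maps the relevant analytic vectors into~$\Gs(\h\oplus\h)$ and commute the various continuations carefully, which is where the IR condition $f\in\Dom\hat e^{-1/2}$ and Lemma~\ref{lemma_selfa_impurity} enter. Everything else is bookkeeping with Dyson series and Gaussian (Weyl) integrals already illustrated in the proofs of Proposition~\ref{prop:class-exp} and Theorem~\ref{thm:an}.
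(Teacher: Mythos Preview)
Your overall strategy is close to the paper's, and your step~(4) is exactly the content of the paper's Lemma~\ref{lemma:max-exp-x}: one shows that $\Exp{\pm\frac12\gamma h_0}\Exp{\mp\frac12\gamma h}$ is bounded via a one-particle Dyson series, and hence $\sup_t\|\rho^{1/2}\Exp{\gamma'h_0}\Exp{\i th}\psi\|<\infty$ for $\psi=\f,\ones$. However, two points deserve correction. First, your step~(3) invokes the Weyl evaluation identity, but $V^{\text{AW}}$ is \emph{quadratic} in $a^{\sharp}$, so the Dyson-series terms $\tau^{\alpha_n}(X_t)\cdots\tau^{\alpha_1}(X_t)\Omega$ live in the $2n$-particle sector and are handled by Wick's theorem, not by the Gaussian formula~\eqref{eq:weyl-eval}; the paper uses Wick's theorem to analytically continue each term of the truncated characteristic function $F_{t,N}$ and then invokes the Vitali--Porter theorem.

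Second, and more substantively, you misidentify the source of the loss $\gamma'<\gamma$. The one-particle Neumann series for $\Exp{\frac12\gamma'\hat e}v\Exp{-\frac12\gamma'\hat e}$ requires no smallness: as the paper's lemma shows, $\|\Exp{\pm\frac12\gamma h_0}\Exp{\mp\frac12\gamma h}\|\le\Exp{\frac12\gamma R(\gamma)}$ for the full~$\gamma$. The genuine constraint on~$\gamma'$ comes from the \emph{Fock-space} Dyson expansion of $\Exp{\i\alpha(L+X_t)}\Omega$: since each $\tau^{\alpha_i}(X_t)$ is a degree-two polynomial in $a^{\sharp}$, the bound~\eqref{eq:bound_vector_aOmega} gives the $n$-th term a factor $\sqrt{(2n+1)!}/n!\sim 2^n$, so the series converges only for $|\alpha|<(32m^2)^{-1}$, where $m$ is the supremum over the one-particle set~$\mathcal{G}_{\gamma'}$. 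This bosonic combinatorial growth is absent in the bounded-$V$ proof of Theorem~\ref{thm:an} and is precisely what your ``bookkeeping'' comment underestimates; it, together with the IR weight $\rho^{1/2}$ in Lemma~\ref{lemma:max-exp-x}, is what forces $\gamma'$ to be strictly smaller than~$\gamma$ (and in fact possibly much smaller).
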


	\subsection{van Hove Hamiltonian}
		\label{sec:model-bosons-lin}
		We consider a quasi-free gas of scalar bosons interacting with a prescribed classical external scalar field. This model is the quantum counterpart to the classical model of Section~\ref{sec:class-linear}.  It typically models the interaction of an electromagnetic field with some prescribed classical charges.

The one-particle Hilbert space is~$\h = L^2(\rr_+, \d e)$ and the one-particle Hamiltonian is~$h_0 = \hat{e}$. The unperturbed Hamiltonian is~$\dG(\hat e)$ on the Fock space~$\G(L^2(\rr_+, \d e))$. The prescribed classical external field induces a perturbation $V:=\varphi(f)$. The perturbated Hamiltonian is
$$
   \dG(\hat{e})+\varphi(f)
$$
with $f\in\Dom \hat e\cap \hat e^{-1/2}$.

Hamiltonians of this type are often called \emph{van Hove Hamiltonians}~\cite{derezinski2003van}. We cast this into the $W^*$-algebraic framework again through the Araki--Woods representation.

Let $\rho$ be the Bose--Einstein density at inverse temperature~$\beta>0$, $\rho:=(\Exp{\beta h_0}-1)^{-1}$. In the corresponding Araki--Woods representation~\eqref{eq:AWo-rep}, the dynamics~$\tau$ is again given by~\eqref{eq:bosons-dyn} and we consider the quasi-free state~$\omega$ associated to~$\rho$. Recall that $\omega$ is a $(\tau,\beta)$-KMS state.
The CCR~\eqref{eq:weyl} and the evaluation identity~\eqref{eq:weyl-eval} still hold.

Since  $f \in \Dom \hat{e}\cap \Dom \hat{e}^{-1/2}$, the van Hove perturbation
\begin{equation}
\label{eq:def-L-vH}
L+\varphi^{AW}(f),
\end{equation}
of the Liouvillean is essentially self-adjoint on $\Dom L \cap \Dom \varphi^{AW}(f)$
by Lemma \ref{lemma_selfa_linear}. The maps
$$
	\tau_V^s(W^\text{AW}(\phi)) = \Exp{\i s (L +  \varphi^\text{AW}(f))} W^\text{AW}(\phi) \Exp{-\i s (L +  \varphi^\text{AW}(f))}
$$
on $\{ W^\text{AW}(\phi) : \phi \in \Dom \rho^{1/2}\}$ extend to a $W^*$-dynamics on $\mathfrak{M}^\text{AW}$.

Then, as in the previous subsection, we can consider the operator
\begin{equation}
L + \varphi^\text{AW}(f) - \Exp{\i t(L + \varphi^\text{AW}(f))}\varphi^\text{AW}(f)  \Exp{-\i t(L + \varphi^\text{AW}(f))},
\end{equation}
and Lemma \ref{lemma_selfa_linear} ensures its essential self-adjointness and soundess of the following definition of the distribution of heat variation.

\begin{definition}\label{def:FCS-vanHove}
The distribution~$\P_t$ of heat variation is the spectral measure of
$$ L + \varphi^\text{AW}(f) - \Exp{\i t(L + \varphi^\text{AW}(f))}\varphi^\text{AW}(f)  \Exp{-\i t(L + \varphi^\text{AW}(f))}$$
with respect to~$\Omega$.
\end{definition}

\begin{remark}\label{rem:reg-bosons-lin}
This system has the particularity that it admits a closed form expression for the characteristic function~$\cht$ of~$\P_t$, which allows an in-depth study of the relation between the tails of~$\P_t$ and the regularity of the perturbation. As in the previous models,  this regularity is expressed in terms of the the map $\rr \ni s \mapsto \Exp{\i s \hat e} f \in L^2(\rr_+, \d e)$, which is $n$ times norm differentiable if and only if $f \in \Dom \hat{e} ^n$.
\end{remark}

\begin{theorem}\label{thm:vanHove_Poisson}
Assume $f \in \Dom \hat{e} \cap \Dom \hat{e}^{-1/2}$. Then, the probability measure~$\P_t$ of Definition \ref{def:FCS-vanHove} is the law of an inhomogeneous Poisson process on~$\rr$ with intensity
$$\d\nu_t(e):=\frac{(1-\cos(et))}{e^{2}}\frac{|f(|e|)|^2}{|1-\e^{-\beta e}|}\d e.$$
\end{theorem}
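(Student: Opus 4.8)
The plan is to evaluate the characteristic function $\cht$ of $\P_t$ in closed form and to recognise it as the characteristic function of a compound Poisson law. Working in the Araki--Woods representation, write $H := L + \varphi^{\text{AW}}(f)$ and $k := h_0 \oplus (-h_0)$, and note that $L = \dG(k)$ on $\Gs(\h \oplus \h)$ and that $H = \dG(k) + \varphi(g)$ is a van Hove Hamiltonian with form factor $g := \sqrt{1+\rho}\, f \oplus \sqrt{\rho}\, \overline{f} \in \h \oplus \h$, where $\varphi$ is the ordinary field on $\Gs(\h \oplus \h)$. First I would record the elementary reduction, identical to the one used for Theorem~\ref{thm:an} and in Section~\ref{sec:model-bosons}: since $H$ commutes with $\Exp{\i tH}$ and $\varphi^{\text{AW}}(f) = H - L$, the self-adjoint operator of Definition~\ref{def:FCS-vanHove} equals $\Exp{\i tH} L \Exp{-\i tH}$, so that
$$
\cht(\alpha) = \braket{\Omega, \Exp{\i tH}\Exp{\i \alpha L}\Exp{-\i tH}\Omega} = \braket{\Exp{-\i tH}\Omega,\ \Gamma(\Exp{\i \alpha k})\,\Exp{-\i tH}\Omega},
$$
using $\Exp{\i \alpha L} = \Gamma(\Exp{\i \alpha k})$. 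Thus everything reduces to identifying the single vector $\Exp{-\i tH}\Omega$.

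The key step is to show that $\Exp{-\i tH}\Omega$ is a coherent vector. Formally $H$ is diagonalised by the Weyl displacement $W(\xi)$ with $\xi := -\i k^{-1} g$, i.e.\ $H = W(\xi)^* \dG(k) W(\xi) - \tfrac12 \braket{g, k^{-1} g}$, which would give $\Exp{-\i tH}\Omega = \Exp{\i \theta(t)} W(\zeta_t)\Omega$ with an irrelevant real phase $\theta(t)$ and
$$
\zeta_t := (\Exp{-\i tk} - 1)\,\xi = -\i\,\frac{\Exp{-\i tk} - 1}{k}\,g .
$$
The obstacle I expect to be the hardest is that at positive temperature $k^{-1} g$ need not belong to $\h \oplus \h$ when one assumes only the infrared bound $f \in \Dom \hat e^{-1/2}$, so $W(\xi)$ itself may be undefined; by contrast $\zeta_t$ is well defined because $\frac{\Exp{-\i tk} - 1}{k}$ is bounded, and $\zeta_t \in \h \oplus \h$ precisely under the hypotheses $f \in \Dom \hat e^{-1/2}$ and $f \in \h$ (this is the only place the infrared condition enters). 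I would make the identity $\Exp{-\i tH}\Omega = \Exp{\i \theta(t)} W(\zeta_t)\Omega$ rigorous either by invoking the standard theory of van Hove Hamiltonians and their propagators~\cite{derezinski2003van,DG}, or by an infrared regularisation --- cut $f$ off near $e = 0$ so that $\xi$ is legitimate, carry out the displacement algebra, and pass to the limit using joint continuity of both sides in $\zeta_t \in \h \oplus \h$ --- or, as a variant, by verifying directly that $t \mapsto W(\zeta_t)\Omega$ solves the Schr\"odinger equation for $H$ up to a phase. Essential self-adjointness of all operators involved is supplied by Lemma~\ref{lemma_selfa_linear}.

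Granting this, the rest is routine Weyl calculus. The phase $\Exp{\i \theta(t)}$ cancels, $\Gamma(\Exp{\i \alpha k})$ fixes $\Omega$ and conjugates $W(\zeta_t)$ into $W(\Exp{\i \alpha k}\zeta_t)$, and the Weyl relations together with the vacuum evaluation $\braket{\Omega, W(\psi)\Omega} = \Exp{-\frac14 \norme{\psi}^2}$ give
$$
\cht(\alpha) = \braket{\Omega, W(-\zeta_t) W(\Exp{\i \alpha k}\zeta_t)\Omega} = \Exp{\frac{\i}{2}\Im\braket{\zeta_t,\,\Exp{\i \alpha k}\zeta_t}}\ \Exp{-\frac14\norme{(\Exp{\i \alpha k}-1)\zeta_t}^2}.
$$
Splitting $\zeta_t = \zeta_t^{(1)} \oplus \zeta_t^{(2)}$ and using that $k$ acts by multiplication by $e$ on the first summand and by $-e$ on the second, one finds $|\zeta_t^{(1)}(e)|^2 = \tfrac{2(1-\cos(et))}{e^2}(1+\rho(e))|f(e)|^2$ and $|\zeta_t^{(2)}(e)|^2 = \tfrac{2(1-\cos(et))}{e^2}\rho(e)|f(e)|^2$, so that, after collecting real and imaginary parts, the exponent above becomes
$$
\int_0^\infty (\Exp{\i \alpha e}-1)(1+\rho(e))\,\frac{1-\cos(et)}{e^2}|f(e)|^2\,\d e + \int_0^\infty (\Exp{-\i \alpha e}-1)\rho(e)\,\frac{1-\cos(et)}{e^2}|f(e)|^2\,\d e .
$$
The substitution $e \mapsto -e$ in the second integral, together with $1+\rho(e) = (1-\Exp{-\beta e})^{-1}$ for $e > 0$ and $\rho(-e) = |1-\Exp{-\beta e}|^{-1}$ for $e < 0$, turns this into $\int_\rr (\Exp{\i \alpha e}-1)\,\d\nu_t(e)$. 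Finally I would observe that $\nu_t$ is a \emph{finite} positive Borel measure on $\rr$ --- finiteness near $0$ is again the infrared condition $f \in \Dom \hat e^{-1/2}$, finiteness near $\pm\infty$ follows from $f \in \h$ --- so that $\cht(\alpha) = \exp\big(\int_\rr (\Exp{\i \alpha e}-1)\,\d\nu_t(e)\big)$ is exactly the characteristic function of $\int_\rr e\,N(\d e)$ for a Poisson random measure $N$ of intensity $\nu_t$, that is, of the inhomogeneous Poisson process on $\rr$ with intensity $\nu_t$. This identification of characteristic functions completes the proof.
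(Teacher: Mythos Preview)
Your proposal is correct and follows the same overall architecture as the paper: reduce $\cht(\alpha)$ to an inner product of Weyl-displaced vacua, apply the CCR and the vacuum evaluation identity, and identify the resulting exponent via L\'evy--Khintchine. The difference lies in how the coherent-vector identity $\Exp{-\i tH}\Omega = \Exp{\i\theta(t)}W(\zeta_t)\Omega$ is established. The paper applies the Lie--Trotter--Kato formula to $\Exp{-\i t(L+\varphi^{\text{AW}}(\Exp{\i\alpha h_0}f))}\Omega$, collects the Weyl factors, and recognises the resulting argument as a Riemann sum converging to $\zeta_t$; strong resolvent convergence (via convergence on a common core) then yields the limit. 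This has the advantage of never invoking the ill-defined displacement $W(k^{-1}g)$, so the infrared obstruction you correctly flag simply does not arise. Your route through the formal diagonalisation of the van Hove Hamiltonian is more conceptual and makes the coherent-state structure transparent, but it requires one of the detours you list (IR regularisation, direct verification of the Schr\"odinger equation, or citation of the standard van Hove propagator) to be made rigorous. Either approach works; the paper's Trotter--Kato computation is self-contained and avoids the regularisation step.
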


\begin{remark}
This distribution of the heat variation is in sharp contrast with the Gaussian heat variation of its classical analogue we discussed in Section~\ref{sec:class-linear}.
\end{remark}

\begin{theorem} \label{thm:bosons-lin-equiv-moments}
	Assume $f \in \Dom \hat{e} \cap \Dom \hat{e}^{-1/2}$. Let $\P_t$ be the probability measure of Definition~\ref{def:FCS-vanHove}.
	Then, for any $n\in\nn$ the following are equivalent
	\begin{enumerate}[label=(\roman*)]
		\item $ \sup_{t \in \rr} \ee_t[\Delta Q^{2n+2}] <\infty $;\label{it:uniform_bound_moments_bosons_lin}
		\item there exist $ t_1<t_2$ such that $\int_{t_1}^{t_2} \ee_t[\Delta Q^{2n+2}] \d t <\infty $;\label{it:integrable_moments_bosons_lin}
		\item $f\in\Dom \hat e^n$. \label{it:diff_fun_bosons_lin}
	\end{enumerate}
\end{theorem}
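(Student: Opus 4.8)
The plan is to read everything off the Poisson structure furnished by Theorem~\ref{thm:vanHove_Poisson}: $\P_t$ is the law of $\sum_i X_i$ for a Poisson point process $\{X_i\}$ on $\rr$ with intensity $\nu_t$. A preliminary estimate shows that $\nu_t$ is a \emph{finite} measure, with mass $\lambda_t:=\nu_t(\rr)\le\tfrac{t^2}{2}C_f$, where $C_f:=\int_\rr\frac{|f(|e|)|^2}{|1-\e^{-\beta e}|}\,\d e<\infty$: one uses $\frac{1-\cos(et)}{e^2}\le\frac{t^2}{2}$ together with the infrared bound $f\in\Dom\hat e^{-1/2}$ near $e=0$, and the factor $e^{-2}$ together with $f\in L^2$ (plus the exponential decay of $|1-\e^{-\beta e}|^{-1}$ as $e\to-\infty$) away from $0$. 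Consequently $\P_t$ is a compound Poisson distribution, realizing $\Delta Q$ as $\sum_{i=1}^{N_t}Y_i$ with $N_t\sim\mathrm{Poisson}(\lambda_t)$ and $(Y_i)$ i.i.d.\ of law $\lambda_t^{-1}\nu_t$, with characteristic function $\cht(\alpha)=\exp\!\bigl(\int_\rr(\e^{\i\alpha e}-1)\,\d\nu_t(e)\bigr)$ and cumulants (when they exist) $\kappa_{j,t}=\int_\rr e^j\,\d\nu_t(e)$. I will prove (iii)$\Rightarrow$(i)$\Rightarrow$(ii)$\Rightarrow$(iii), the middle implication being immediate.

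\emph{(iii)$\Rightarrow$(i).} Assume $f\in\Dom\hat e^n$, i.e.\ $\int_0^\infty e^{2n}|f(e)|^2\,\d e<\infty$. The same kind of split as for $\lambda_t$, now using $f\in\Dom\hat e^n$ for the large-$e$ part, gives $\int_\rr|e|^{2n+2}\,\d\nu_t<\infty$, so the moment--cumulant formula applies: $\ee_t[\Delta Q^{2n+2}]=\sum_\pi\prod_{B\in\pi}\kappa_{|B|,t}$, the sum running over set partitions $\pi$ of $\{1,\dots,2n+2\}$. It therefore suffices to bound $|\kappa_{j,t}|$ uniformly in $t$ for $1\le j\le 2n+2$. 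For $2\le j\le 2n+2$ the crude bound $1-\cos(et)\le 2$ gives $|\kappa_{j,t}|\le 2\int_\rr|e|^{j-2}\frac{|f(|e|)|^2}{|1-\e^{-\beta e}|}\,\d e$, which is finite and $t$-independent: near $e=0$ the integrand is $O(|e|^{j-3}|f(|e|)|^2)$, absorbed by $f\in L^2$ (and, for $j=2$, the infrared condition); for $e\to+\infty$ it is $O(e^{j-2}|f(e)|^2)$, integrable because $\Dom\hat e^n\subseteq\Dom\hat e^{(j-2)/2}$; and the $e<0$ contribution decays exponentially. The case $j=1$ is the delicate one: the term-by-term absolute estimate behaves, near $e=0$, like $e^{-2}|f(e)|^2$, which need not be integrable, so one must instead use the exact cancellation between $e>0$ and $e<0$. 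Substituting $e\mapsto-e$ in the $e<0$ half and using the identity $\frac{1}{1-\e^{-\beta e}}-\frac{1}{\e^{\beta e}-1}=1$ collapses $\kappa_{1,t}$ to $\int_0^\infty\frac{1-\cos(et)}{e}|f(e)|^2\,\d e\le 2\,\|\hat e^{-1/2}f\|^2$, uniformly in $t$. Hence $\sup_t\ee_t[\Delta Q^{2n+2}]<\infty$.

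\emph{(ii)$\Rightarrow$(iii).} The key is a lower bound obtained by conditioning on $\{N_t=1\}$, of probability $\lambda_t\e^{-\lambda_t}$, and discarding the other (non-negative) contributions: $\ee_t[\Delta Q^{2n+2}]\ge\e^{-\lambda_t}\int_\rr|e|^{2n+2}\,\d\nu_t(e)$ in $[0,\infty]$. On the bounded interval $[t_1,t_2]$ one has $\lambda_t\le\Lambda<\infty$, so by Tonelli and the explicit form of $\nu_t$,
\[
   \int_{t_1}^{t_2}\ee_t[\Delta Q^{2n+2}]\,\d t
   \ge \e^{-\Lambda}\int_\rr|e|^{2n}\Bigl(\int_{t_1}^{t_2}(1-\cos(et))\,\d t\Bigr)\frac{|f(|e|)|^2}{|1-\e^{-\beta e}|}\,\d e .
\]
Since $\int_{t_1}^{t_2}(1-\cos(et))\,\d t=(t_2-t_1)-\tfrac{\sin(et_2)-\sin(et_1)}{e}\ge\tfrac{t_2-t_1}{2}$ for $|e|\ge E_0:=\tfrac{4}{t_2-t_1}$, and $\tfrac{1}{1-\e^{-\beta e}}\ge1$ for $e>0$, the finiteness of the left-hand side forces $\int_{E_0}^\infty e^{2n}|f(e)|^2\,\d e<\infty$. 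Combined with $\int_0^{E_0}e^{2n}|f(e)|^2\,\d e<\infty$ (as $f\in L^2(\rr_+)$ and $e^{2n}$ is bounded on $(0,E_0)$), this yields $f\in\Dom\hat e^n$.

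The only genuinely delicate point is the $t$-uniform control of the first cumulant $\kappa_{1,t}$ in (iii)$\Rightarrow$(i): because the term-by-term absolute estimate is not integrable at the infrared end, one is forced to use the precise Bose--Einstein weight through the algebraic identity above, which is what collapses $\kappa_{1,t}$ to a manifestly bounded integral. Everything else — finiteness and the polynomial-in-$t$ bound on $\nu_t$, the moment--cumulant expansion, Tonelli's theorem, and the elementary bounds on $\frac{1-\cos(et)}{e^2}$ and $\int_{t_1}^{t_2}(1-\cos(et))\,\d t$ — is routine bookkeeping.
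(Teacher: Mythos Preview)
Your proof is correct and, like the paper's, rests entirely on the Poisson representation of Theorem~\ref{thm:vanHove_Poisson}. The implementation, however, differs in both nontrivial implications.

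For \ref{it:diff_fun_bosons_lin}$\Rightarrow$\ref{it:uniform_bound_moments_bosons_lin}, the paper argues tersely that \ref{it:diff_fun_bosons_lin} makes the $(2n{+}2)$th cumulant $\int_\rr e^{2n+2}\,\d\nu_t$ uniformly bounded and then invokes an ``equivalence between the existence of a uniform bound on the $(2n{+}2)$th cumulant and $(2n{+}2)$th moment''. You instead bound \emph{every} cumulant $\kappa_{j,t}$, $1\le j\le 2n{+}2$, uniformly in $t$, and then feed these into the full moment--cumulant expansion. This is more explicit, and your treatment of $\kappa_{1,t}$ via the Bose--Einstein identity $\frac{1}{1-\e^{-\beta e}}-\frac{1}{\e^{\beta e}-1}=1$ is a genuine extra observation: the paper does not isolate the first cumulant, whereas your argument makes clear why the infrared hypothesis $f\in\Dom\hat e^{-1/2}$ is exactly what controls it.

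For \ref{it:integrable_moments_bosons_lin}$\Rightarrow$\ref{it:diff_fun_bosons_lin}, the paper first passes from moments to cumulants (bounding each monomial $\prod_i\ee_t(\Delta Q^{p_i})$ with $\sum p_i=2n{+}2$ by $\ee_t(\Delta Q^{2n+2})$ via Jensen), obtains integrability of $\int_\rr e^{2n+2}\,\d\nu_t$ over $(t_1,t_2)$, and then applies Fubini together with the same lower bound on $\int_{t_1}^{t_2}(1-\cos(et))\,\d t$ that you use. Your route---conditioning on a single jump to get $\ee_t[\Delta Q^{2n+2}]\ge \e^{-\lambda_t}\int_\rr |e|^{2n+2}\,\d\nu_t$ directly---bypasses the moment-to-cumulant step entirely and is arguably cleaner; it does require the preliminary observation that $\nu_t$ is finite (so that the compound-Poisson picture with a Poisson number of i.i.d.\ jumps is available), which you establish at the outset and the paper does not need. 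Both arguments land on the same Tonelli computation and the same elementary lower bound on the time-averaged oscillation.
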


\begin{theorem} \label{thm:bosons-lin-equiv-an}
	Assume $f \in \Dom(\hat{e})\cap \Dom(\hat{e}^{-1/2})$. Let $\P_t$ be the probability measure of Definition~\ref{def:FCS-vanHove}. Then, for $\gamma > 0$, the following are equivalent
	\begin{enumerate}[label=(\roman*)]
		\item $ \sup_{t \in \rr}  \ee_t[\e^{\gamma |\Delta Q|}] <\infty $;\label{it:unif_bound_fourier_bosons_lin}
		\item there exist $ t_1<t_2$ such that
			 $ \int_{t_1}^{t_2} \ee_t[\e^{\gamma |\Delta Q|}] \d t <\infty $;\label{it:integrable_fourier_bosons_lin}
		\item  $f\in\Dom \e^{\frac12\gamma \hat e}$.\label{it:an_fun_bosons_lin}
	\end{enumerate}
\end{theorem}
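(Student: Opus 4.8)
The plan is to exploit the explicit Poisson structure furnished by Theorem~\ref{thm:vanHove_Poisson}. Since $f\in\Dom\hat e^{-1/2}$ makes $\nu_t$ a finite measure, $\P_t$ is a compound Poisson law and $\cht(\alpha)=\exp\big(\int_\rr(\e^{\i\alpha e}-1)\,\d\nu_t(e)\big)$; this identity persists for real exponents, so for $\gamma>0$
\[
\ee_t[\e^{\pm\gamma\Delta Q}]=\exp\Big(\int_\rr(\e^{\pm\gamma e}-1)\,\d\nu_t(e)\Big),
\]
with the convention that the right-hand side is $+\infty$ exactly when $\int_{e>1}\e^{\gamma e}\d\nu_t=+\infty$ (resp. $\int_{e<-1}\e^{-\gamma e}\d\nu_t=+\infty$). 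Using $\e^{\gamma|x|}\le\e^{\gamma x}+\e^{-\gamma x}$ and $\e^{\gamma|x|}\ge\max(\e^{\gamma x},\e^{-\gamma x})-1$, each of \ref{it:unif_bound_fourier_bosons_lin} and \ref{it:integrable_fourier_bosons_lin} is equivalent to the same statement with $\ee_t[\e^{\gamma|\Delta Q|}]$ replaced by $\ee_t[\e^{\gamma\Delta Q}]+\ee_t[\e^{-\gamma\Delta Q}]$, hence to finiteness, resp. local integrability in $t$, of the exponents $\int_\rr(\e^{\pm\gamma e}-1)\d\nu_t(e)$. The theorem thus reduces to controlling these two functions of $t$ by $\|\e^{\frac12\gamma\hat e}f\|^2=\int_0^\infty\e^{\gamma e}|f(e)|^2\d e$.

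The second step is a pair of estimates on the exponent, splitting the integral at $|e|=1$. For $|e|\le1$ I would pair the $e>0$ and $e<0$ halves and use the algebraic cancellation
\[
\frac{\e^{\gamma e}-1}{1-\e^{-\beta e}}+\frac{\e^{-\gamma e}-1}{\e^{\beta e}-1}=O(e)\qquad\text{as }e\to 0,
\]
so that the net density of $(\e^{\gamma e}-1)\d\nu_t$ plus its reflection near $0$ is $O\!\big(\tfrac{1-\cos(te)}{e}|f(|e|)|^2\big)$; since $1-\cos(te)\le 2$ and $f\in\Dom\hat e^{-1/2}$, the $|e|\le1$ part is bounded by a constant times $\|\hat e^{-1/2}f\|^2$, uniformly in $t$. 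For $|e|>1$ the factor $|1-\e^{-\beta e}|^{-1}$ is bounded above and below, so the $e<-1$ contribution to $\int(\e^{\gamma e}-1)\d\nu_t$ is uniformly bounded (bounded integrand, finite mass), while the $e>1$ part is comparable to $\int_1^\infty\e^{\gamma e}\tfrac{1-\cos(te)}{e^2}|f(e)|^2\d e$; the analysis for the $-\gamma$ exponent yields $\int_1^\infty\e^{(\gamma-\beta)s}\tfrac{1-\cos(ts)}{s^2}|f(s)|^2\d s$, which is dominated by the former because $\gamma-\beta<\gamma$. This leaves the single point: $\sup_t$, resp. $\int_{t_1}^{t_2}$, of $\int_1^\infty\e^{\gamma e}\tfrac{1-\cos(te)}{e^2}|f(e)|^2\d e$ is finite if and only if $f\in\Dom\e^{\frac12\gamma\hat e}$.

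The implication \ref{it:an_fun_bosons_lin}$\Rightarrow$\ref{it:unif_bound_fourier_bosons_lin} is then immediate from $1-\cos(te)\le 2$, and \ref{it:unif_bound_fourier_bosons_lin}$\Rightarrow$\ref{it:integrable_fourier_bosons_lin} is trivial. For the converse \ref{it:integrable_fourier_bosons_lin}$\Rightarrow$\ref{it:an_fun_bosons_lin} I would integrate in $t$, apply Tonelli, and use the elementary bound
\[
\int_{t_1}^{t_2}(1-\cos(te))\,\d t=(t_2-t_1)-\frac{\sin(et_2)-\sin(et_1)}{e}\ge\frac{t_2-t_1}{2}\qquad\text{for }e\ge\frac{4}{t_2-t_1},
\]
together with $\exp(x)\ge 1+x$ and the uniform control of the low-energy part from the previous paragraph, to turn the space-time integral into a lower bound of the form $c\int_{E}^\infty\e^{\gamma e}|f(e)|^2\d e$ for $E$ large; finiteness of this, plus $f\in L^2$ on $[0,E]$, gives $f\in\Dom\e^{\frac12\gamma\hat e}$.

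I expect the last step to be the main obstacle: one must recover the sharp weight $\e^{\gamma e}$ (not merely $\e^{\gamma e}e^{-2}$ or $\e^{\gamma'e}$ for $\gamma'<\gamma$) after averaging the oscillatory factor $1-\cos(te)$ in $t$, i.e. the polynomial weight coming from the $e^{-2}$ in $\nu_t$ must be absorbed without losing exponential decay. Once the correct splitting at $|e|=1$ and the cancellation identity at $e=0$ are in place, the remaining ingredients (finiteness and invariance of $\P_t$, the $e<0$ estimate via $\gamma-\beta<\gamma$, and the Tonelli/averaging step) are routine.
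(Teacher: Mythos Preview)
Your approach is essentially the paper's: both arguments pass through the explicit Poisson form of Theorem~\ref{thm:vanHove_Poisson}, use $\e^{\gamma|x|}\asymp\e^{\gamma x}+\e^{-\gamma x}$ to reduce to $\cht(\pm\i\gamma)$, and for \ref{it:integrable_fourier_bosons_lin}$\Rightarrow$\ref{it:an_fun_bosons_lin} invoke Jensen/Tonelli together with the lower bound on $\int_{t_1}^{t_2}(1-\cos(et))\,\d t$ for large $e$. Your treatment of \ref{it:an_fun_bosons_lin}$\Rightarrow$\ref{it:unif_bound_fourier_bosons_lin} is in fact more careful: the cancellation identity you use on $|e|\le 1$ genuinely produces a bound that is uniform in $t$, whereas the paper's one-line claim that both $e\mapsto (1-\cos(et))/e^2$ and $e\mapsto |e|/|1-\e^{-\beta e}|$ are ``bounded in $e\in\rr$'' is imprecise (the second function is unbounded for $e\to+\infty$, and the first is bounded only by $t^2/2$). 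Your pairing of the two half-lines is the clean way to get the uniform low-energy estimate.

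The gap you flag in your last paragraph is, however, real and is not resolved by your outline (nor by the paper, which simply writes ``repeating the same steps'' from the moment theorem). In the moment case the $(2n{+}2)$-th cumulant carries a factor $e^{2n+2}$ that exactly cancels the $e^{-2}$ in $\d\nu_t$, so Tonelli lands on $\int e^{2n}|f|^2$. Here there is no such cancellation: after your reductions and the time average, all you control is
\[
\int_1^\infty \e^{\gamma e}\,\frac{|f(e)|^2}{e^2}\,\d e,
\]
and this does \emph{not} force $\int_1^\infty \e^{\gamma e}|f(e)|^2\,\d e<\infty$. Concretely, take $\gamma=\beta$ and $|f(e)|^2=e\,\one_{(0,1)}(e)+\e^{-\gamma e}\,\one_{[1,\infty)}(e)$; then $f\in\Dom\hat e\cap\Dom\hat e^{-1/2}$, the quantity $\int_1^\infty \e^{\gamma e}\tfrac{1-\cos(te)}{e^2}|f(e)|^2\,\d e=\int_1^\infty\tfrac{1-\cos(te)}{e^2}\,\d e\le 2$ is uniformly bounded in $t$, and your low-energy cancellation together with the $e<-1$ estimate make both exponents $G_\pm(t)$ uniformly bounded, so \ref{it:unif_bound_fourier_bosons_lin} holds; yet $\int_1^\infty\e^{\gamma e}|f(e)|^2\,\d e=\int_1^\infty 1\,\d e=\infty$, so \ref{it:an_fun_bosons_lin} fails for this same $\gamma$. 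Thus the step ``the polynomial weight coming from the $e^{-2}$ in $\nu_t$ must be absorbed without losing exponential decay'' cannot be carried out, and the sharp implication \ref{it:integrable_fourier_bosons_lin}$\Rightarrow$\ref{it:an_fun_bosons_lin} (for the \emph{same} $\gamma$) does not follow from your argument as written.
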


\section{Proofs for quasi-free gas models}
	\label{sec:proofs-results}\label{sec:models-proofs}
	\subsection{One-particle space preliminaries}
		Some of the results of Section~\ref{sec:models} require technical lemmas on one-particle operators and vectors. Throughout this section, $h_0 = \eno \oplus \hat{e}$ on $\h = \cc \oplus L^2(\rr_+, \d e)$, and the one-particle perturbation~$v$ is of the form $\ones\braket{\f, {\,\cdot\,}} + \f\braket{\ones, {\,\cdot\,}}$ for $\f = 0 \oplus f \in \cc \oplus L^2(\rr_+, \d e)$ and $\ones = 1 \oplus 0 \in \cc \oplus L^2(\rr_+, \d e)$.

First note that~$\Dom h_0 = \Dom h$ and, if~$f \in \Dom \hat{e}^{n-1}$, $\Dom h^n = \Dom h_0^n$. Moreover, we have the following first two lemmas.

The first lemma is a somehow surprising, but quite straight forward, property of the comparison between perturbed and unperturbed one-particle unitary dynamical groups.
\begin{lemma}\label{lem:powers-n-to-diff}
	Let $n \in \nn^*$ and assume $f \in \Dom \hat{e}^{n-1}$. Then $h_0^n(\Exp{\i t h_0} - \Exp{\i t h})$ is bounded and  there exists an affine function $\rr_+\ni t \mapsto K_n(t)$ such that
	\begin{equation}
		\|h_0^n (\Exp{\i t h_0} - \Exp{\i t h})\| \leq K_n(|t|).
	\end{equation}
\end{lemma}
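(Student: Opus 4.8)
The plan is to run a Duhamel expansion but to use the explicit rank-two form of $v=\ones\braket{\f,{\,\cdot\,}}+\f\braket{\ones,{\,\cdot\,}}$ in order to avoid the naive requirement $f\in\Dom\hat e^{n}$, which is one derivative too strong. The starting point is the identity
\[
	\Exp{\i t h_0} - \Exp{\i t h} = -\i\int_0^t \Exp{\i(t-s)h_0}\,v\,\Exp{\i s h}\,\d s ,
\]
valid for every $t\in\rr$ (obtained by differentiating $s\mapsto\Exp{\i(t-s)h_0}\Exp{\i s h}\phi$ on $\Dom h_0=\Dom h$ and extending to $\h$ by density; for $t<0$ one reads $\int_0^t:=-\int_t^0$). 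Inserting the form of $v$, using $\braket{\f,\Exp{\i s h}\phi}=\braket{\Exp{-\i s h}\f,\phi}$ and $\braket{\ones,\Exp{\i s h}\phi}=\braket{\Exp{-\i s h}\ones,\phi}$, together with the eigenrelation $\Exp{\i(t-s)h_0}\ones=\Exp{\i(t-s)\eno}\ones$, one gets $(\Exp{\i t h_0}-\Exp{\i t h})\phi = T_1(\phi)+T_2(\phi)$ with
\[
	T_1(\phi):=-\i\,\ones\!\int_0^t \Exp{\i(t-s)\eno}\braket{\Exp{-\i s h}\f,\phi}\,\d s , \qquad T_2(\phi):=-\i\int_0^t \Exp{\i(t-s)h_0}\,\f\,\braket{\Exp{-\i s h}\ones,\phi}\,\d s .
\]
The term $T_1$ is harmless: $\ones$ is an $h_0$-eigenvector, so $h_0^nT_1(\phi)=\eno^nT_1(\phi)$ and $\|h_0^nT_1(\phi)\|\le\eno^n|t|\,\|f\|\,\|\phi\|$.

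All the difficulty is in $T_2$, and the idea is to extract $n-1$ powers of $h_0$ directly and the last one by an integration by parts in time. Using the hypothesis $f\in\Dom\hat e^{n-1}$, equivalently $\f\in\Dom h_0^{n-1}$, and the fact that $h_0^{n-1}$ is closed while $s\mapsto\Exp{\i(t-s)h_0}\f$ and $s\mapsto\Exp{\i(t-s)h_0}h_0^{n-1}\f$ are continuous on the interval of integration, one may pull $h_0^{n-1}$ inside the integral:
\[
	h_0^{n-1}T_2(\phi) = -\i\int_0^t \Exp{\i(t-s)h_0}\,g\,b(s)\,\d s , \qquad g:=h_0^{n-1}\f = 0\oplus\hat e^{n-1}f\in\h , \quad b(s):=\braket{\Exp{-\i s h}\ones,\phi} .
\]
Now $\ones\in\Dom h$ with $h\ones=\eno\ones+\f\in\h$, so $b$ is $C^1$ with $|b(s)|\le\|\phi\|$ and $|b'(s)|=|\braket{\Exp{-\i s h}h\ones,\phi}|\le\|h\ones\|\,\|\phi\|$, uniformly in $s$. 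Writing $h_0\Exp{\i(t-s)h_0}=\i\,\partial_s\Exp{\i(t-s)h_0}$ and integrating by parts in $s$ yields
\[
	h_0^{n}T_2(\phi) = g\,b(t) - \Exp{\i t h_0}\,g\,b(0) - \int_0^t \Exp{\i(t-s)h_0}\,g\,b'(s)\,\d s ,
\]
so $\|h_0^nT_2(\phi)\|\le\|\hat e^{n-1}f\|\bigl(2\|\phi\|+|t|\,\|h\ones\|\,\|\phi\|\bigr)$. Combining with the bound on $T_1$ shows that $\Ran(\Exp{\i t h_0}-\Exp{\i t h})\subseteq\Dom h_0^n$ and $\|h_0^n(\Exp{\i t h_0}-\Exp{\i t h})\|\le K_n(|t|)$ with $K_n(t):=2\|\hat e^{n-1}f\|+\bigl(\eno^n\|f\|+\|\hat e^{n-1}f\|\,\|h\ones\|\bigr)t$, which is affine as required (note $\|h\ones\|^2=\eno^2+\|f\|^2$).

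I expect the delicate point to be the integration by parts just invoked: in general $g\notin\Dom h_0$, so $h_0\Exp{\i(t-s)h_0}g$ is not defined for fixed $s$, and the identity must be read as asserting that $\psi:=\int_0^t\Exp{\i(t-s)h_0}g\,b(s)\,\d s$ lies in $\Dom h_0$ with $h_0\psi$ equal to the displayed right-hand side. I would establish this by pairing against an arbitrary $\eta\in\Dom h_0$: since $s\mapsto\Exp{-\i(t-s)h_0}\eta$ is strongly $C^1$ on $\Dom h_0$, one has the scalar identity $\braket{h_0\eta,\Exp{\i(t-s)h_0}g}=\i\,\partial_s\braket{\eta,\Exp{\i(t-s)h_0}g}$, and an ordinary scalar integration by parts gives $\braket{h_0\eta,\psi}=\braket{\eta,\chi}$ with $\chi$ the claimed right-hand side; self-adjointness of $h_0$ then yields $\psi\in\Dom h_0$ and $h_0\psi=\chi$. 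The remaining bookkeeping of constants, and the observation that the convention $\int_0^t=-\int_t^0$ merely replaces $t$ by $|t|$ in every estimate, are routine.
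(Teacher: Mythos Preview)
Your proof is correct and uses the same strategy as the paper: Duhamel's formula followed by a single integration by parts in~$s$, justified weakly and closed by a Riesz/duality argument. The only difference is cosmetic---you split $v$ into its two rank-one pieces and dispose of the $\ones$-component as an $h_0$-eigenvector, whereas the paper keeps $v$ together and bounds $\|h_0^{n-1}v\|$ and $\|h_0^{n-1}vh\|$ directly, arriving at the same affine bound $K_n(t)=2\|h_0^{n-1}v\|+t\|h_0^{n-1}vh\|$.
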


\begin{proof}
	We prove the lemma for $t>0$. The proof for $t<0$ is similar.

	Let $\chi \in\Dom h_0^n $ and $\phi\in\Dom h_0$. Using Duhamel's formula,
	\begin{align*}
	\langle h_0^n\chi,(\e^{\i t h_0}-\e^{\i t h})\phi\rangle&=\i\int_0^t\langle h_0^n \chi, \e^{\i (t-s)h_0}v\e^{\i s h}\phi\rangle \d s,\\
		&=-\int_0^t\langle \chi,(\partial_s e^{\i(t-s)h_0})h_0^{n-1}v\e^{\i s h}\phi\rangle \d s,
	\end{align*}
	Since $\phi\in\Dom h$, the functions $s\mapsto\langle \ones, \e^{\i s h}\phi\rangle$ and $s\mapsto\langle \f, \e^{\i s h}\phi\rangle$ are in $C^1(\rr)$. Similarly, the functions $s\mapsto \langle \chi,\e^{\i(t-s)h_0}h_0^{n-1}\ones\rangle$ and $s\mapsto \langle \chi,\e^{\i(t-s)h_0}h_0^{n-1}\f\rangle$ are in $C^1(\rr)$ using $\chi\in\Dom(h_0^{n})$ for the second function. Integrating by parts,
	\begin{align*}
	&\i\int_0^t\langle \chi,(\partial_s e^{\i(t-s)h_0})h_0^{n-1}v\e^{\i s h}\phi\rangle \d s \\
		&\qquad\qquad			= \i\big[\langle \chi,e^{\i(t-s)h_0}h_0^{n-1}v\e^{\i s h}\phi\rangle\big]_0^t
			+ \int_0^t\langle \chi,e^{\i(t-s)h_0}h_0^{n-1} v h \e^{\i s h}\phi\rangle \d s.
	\end{align*}
	Since $\ones$ and $\f$ are in $\Dom(h_0^{n-1})$,
	$$\|h_0^{n-1}v\|<\infty\quad\text{and}\quad \|h_0^{n-1}vh\|<\infty.$$
	The inequality,
	$$|\langle h_0^n\chi,(\e^{\i t h_0}-\e^{\i t h})\phi\rangle|\leq\|\chi\|\;\|\phi\|\big(2\|h_0^{n-1}v\|+t \|h_0^{n-1}vh\|\big).$$
	and Riesz's Lemma yield the lemma with the affine bound
	$$
		K_n(t)=2\|h_0^{n-1}v\|+t \|h_0^{n-1}vh\| .
	\eqno\qedhere
	$$
\end{proof}

The second Lemma concerns the stability of the domain of powers of $h_0$ with respect to~the perturbed dynamical unitary group.
\begin{lemma}
\label{lem:powers-p}
\label{lemma:sup-t-power-p}
	Let $n \in \nn^*$ and suppose $f \in \Dom \hat{e}^{n-1}$. If $\phi \in \Dom h_0^n$, then for all $\alpha \in [0,n]$,
	\begin{equation}\label{eq:sup-t-power-p}
		\sup_t \|h_0^{\alpha} \Exp{\i t h} \phi\| < \infty.
	\end{equation}
\end{lemma}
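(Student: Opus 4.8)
The plan is to prove the estimate first for the extreme exponent $\alpha=n$ and then deduce all intermediate $\alpha\in[0,n]$ by a standard moment (interpolation) inequality for the spectral measure of the non-negative self-adjoint operator $h_0$. The naive route for $\alpha=n$ — writing $h_0^n\Exp{\i t h}\phi=\Exp{\i t h_0}h_0^n\phi-h_0^n(\Exp{\i t h_0}-\Exp{\i t h})\phi$ and invoking Lemma~\ref{lem:powers-n-to-diff} — only yields a bound affine in $|t|$, so it does not suffice. Instead I would exploit that, $h$ being self-adjoint, each power $h^k$ commutes with $\Exp{\i t h}$: since $f\in\Dom\hat e^{n-1}\subseteq\Dom\hat e^{k-1}$ gives $\Dom h^k=\Dom h_0^k$ for every $0\le k\le n$ (by the argument recalled just before the statement), for $\phi\in\Dom h_0^n=\Dom h^n$ one has $\Exp{\i t h}\phi\in\Dom h^k$ and $h^k\Exp{\i t h}\phi=\Exp{\i t h}h^k\phi$, whence $\sup_t\|h^k\Exp{\i t h}\phi\|=\|h^k\phi\|<\infty$.

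The key step is then to control $h_0^n$ by the operators $h^k$. I would prove by induction on $k\in\{0,\dots,n\}$ the relative bound
\[
	\|h_0^k\psi\|\le C_k\sum_{j=0}^k\|h^j\psi\|\qquad\text{for all }\psi\in\Dom h_0^k=\Dom h^k,
\]
with $C_k$ independent of $\psi$. The case $k=0$ is trivial. For the inductive step, note that $\psi\in\Dom h_0^k$ gives $h_0^k\psi=h_0^{k-1}(h_0\psi)=h_0^{k-1}(h\psi)-h_0^{k-1}(v\psi)$, since $v\psi=\braket{\f,\psi}\ones+\braket{\ones,\psi}\f\in\operatorname{span}\{\ones,\f\}\subseteq\Dom h_0^{k-1}$ (using $\f\in\Dom h_0^{k-1}$, which holds because $f\in\Dom\hat e^{k-1}$). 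Hence $h_0^{k-1}v$ is bounded and $\|h_0^{k-1}(v\psi)\|\le\|h_0^{k-1}v\|\,\|\psi\|$; moreover $h\psi\in\Dom h^{k-1}=\Dom h_0^{k-1}$, so the induction hypothesis applied to $h\psi$ bounds $\|h_0^{k-1}(h\psi)\|$ by $C_{k-1}\sum_{j=1}^{k}\|h^j\psi\|$. Adding these estimates gives the claim with $C_k=C_{k-1}+\|h_0^{k-1}v\|$. This induction is the only genuine piece of work; everything else is bookkeeping.

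Combining the two steps with $\psi=\Exp{\i t h}\phi$ yields $M:=\sup_t\|h_0^n\Exp{\i t h}\phi\|\le C_n\sum_{j=0}^n\|h^j\phi\|<\infty$. Finally, for arbitrary $\alpha\in[0,n]$ the inequality $\lambda^{2\alpha}\le 1+\lambda^{2n}$ on $[0,\infty)$ shows $\Dom h_0^n\subseteq\Dom h_0^\alpha$, and Hölder's inequality applied to the spectral measure of $h_0$ at the unit-modulus-preserved vector $\Exp{\i t h}\phi$ gives
\[
	\|h_0^\alpha\Exp{\i t h}\phi\|\le\|h_0^n\Exp{\i t h}\phi\|^{\alpha/n}\,\|\Exp{\i t h}\phi\|^{1-\alpha/n}\le M^{\alpha/n}\,\|\phi\|^{1-\alpha/n},
\]
which is the asserted bound, uniform in $t$. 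I expect the main obstacle to be setting up the relative bound cleanly (keeping track that $h\psi$ stays in the right domain so that the induction hypothesis applies), whereas the interpolation and the commutation identities are routine.
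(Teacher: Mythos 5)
Your proof is correct and takes essentially the same approach as the paper: reduce to $\alpha=n$ by spectral interpolation, exploit the commutation $[\e^{\i t h}, h^n]=0$ (so $\|h^k\e^{\i t h}\phi\|=\|h^k\phi\|$), and then compare $h_0^n$ with $h^n$ using boundedness of the rank-two pieces built from $v$ and $h_0$. The only cosmetic difference is that the paper bounds $\|h_0^n\e^{\i t h}\phi\|$ in one step by observing that $h_0^n-h^n$ is a polynomial in the bounded monomials $h_0^k v^p h_0^m$ ($p\geq1$, $k+p+m\leq n$) and is therefore itself bounded, whereas your induction establishes the slightly weaker relative bound $\|h_0^k\psi\|\le C_k\sum_{j\le k}\|h^j\psi\|$; both suffice for the conclusion.
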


\begin{proof}
	By a concavity argument, it suffices to show \eqref{eq:sup-t-power-p} for $\alpha = n$. Since $f \in \Dom \hat{e}^{n-1}$ implies $\Dom h^n = \Dom h_0^n$, the commutation $[\Exp{\i t h}, h^n] = 0$ yields directly that $\Exp{\i t h}\phi \in \Dom h_0^n$.

	Now, by the triangle inequality
	\begin{equation}
		\|h_0^n \Exp{\i t h} \phi \| \leq \|(h_0^n - h^n) \Exp{\i t h} \phi\| + \|h^n \phi\|.
	\end{equation}
	The operator $h_0^{n}-h^n$ is a polynomial in operator variables $h_0^k v^p h_0^{m}$ with $k+p+m\leq n$, $p\geq 1$. Since $f\in\Dom \hat e^{n-1}$,  all the operator variables $h_0^k v^p h_0^m$ are bounded. Hence, $h_0^n-h^n$ is bounded and $\sup_{t}\|h_0^n\e^{\i th}\phi\|<\infty$ follows from $\phi\in\Dom h^n$.
\end{proof}

The last lemma of this subsection allows us to ensure that the existence of moments of $\P_t$ is not the fortuitous consequence of some cycles in the evolution, but truly the consequence of $f\in\Dom \hat e^n$. We will use it in the proof that~\ref{it:integrable_moment_fermions} implies~\ref{it:diff_function_fermions} in Theorems~\ref{thm:fermions-equiv-moments} and~\ref{thm:bosons-quad-equiv-moments}. For that purpose we show that a positive function of~$e$ that will appear later in our proofs is actually lower bounded away from~$0$.

We denote $\bar h:= h_0 +\bar v$ where $\bar v := \ones\langle \overline{\f},\cdot\,\rangle +\overline{\f}\langle\ones,\cdot\, \rangle$.
\begin{lemma}\label{lem:osc-bosons}
Let $A=\sqrt{T}$ or $A=\sqrt{\rho}$. Assume furthermore that $\psi \in \Dom A \setminus \{0\}$ is not an eigenvector of~$h$. Then, for any non-trivial interval $(t_1, t_2)$ of~$\rr$,
	\begin{equation*}
		\inf_{e \in \rr_+} \int_{t_1}^{t_2} \|A (1-\Exp{-\i t (\overline{h}-e)} \psi)\|^2 \d t > 0.
	\end{equation*}
\end{lemma}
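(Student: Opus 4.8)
The quantity $\|A(1-\e^{-\i t(\bar h - e)}\psi)\|^2$ is continuous in $(t,e)$ and non-negative, so the integral $g(e):=\int_{t_1}^{t_2}\|A(1-\e^{-\i t(\bar h - e)}\psi)\|^2\,\d t$ is a continuous function of $e\in\rr_+$. The plan is to show that $g$ is strictly positive pointwise and that $\liminf_{e\to\infty}g(e)>0$; these two facts together with continuity give the claimed uniform lower bound. For the pointwise positivity, suppose $g(e_0)=0$ for some $e_0$. Then $A(1-\e^{-\i t(\bar h - e_0)}\psi)=0$ for almost every, hence (by continuity in $t$) every, $t\in(t_1,t_2)$. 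Since $A=\sqrt T$ or $A=\sqrt\rho$ has trivial kernel (both $T$ and $\rho$ are strictly positive by $\beta<\infty$ and $h_0$ lower bounded, noting $\rho$ may only degenerate as $e\to 0$, which is handled by $\psi\in\Dom A$ and the injectivity of $A$ on its domain), this forces $\e^{-\i t(\bar h - e_0)}\psi=\psi$ for all $t$ in an interval, and differentiating at $t$ interior to $(t_1,t_2)$ gives $(\bar h - e_0)\psi=0$, i.e. $\psi$ is an eigenvector of $\bar h$ — but $\bar h$ is unitarily equivalent to $h$ via complex conjugation, contradicting the hypothesis that $\psi$ is not an eigenvector of $h$.

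**Handling large $e$.** The remaining point is to rule out $g(e)\to 0$ along a sequence $e_k\to\infty$. Here I would write $\bar h = h_0 + \bar v$ with $\bar v$ bounded (indeed finite rank), and expand
\begin{equation*}
1-\e^{-\i t(\bar h - e)} = 1-\e^{-\i t(h_0-e)} + \bigl(\e^{-\i t(h_0-e)}-\e^{-\i t(\bar h-e)}\bigr),
\end{equation*}
where the second term is $O(t\|\bar v\|)$ uniformly in $e$ by Duhamel. So up to a uniformly small and $e$-independent perturbation it suffices to understand $\int_{t_1}^{t_2}\|A(1-\e^{-\i t(h_0-e)}\psi)\|^2\,\d t$ as $e\to\infty$. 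Decomposing $\psi=\psi_0\oplus\psi_1\in\cc\oplus L^2(\rr_+,\d e')$ with respect to $h_0=\eno\oplus\hat e$, the vector $\e^{-\i t(h_0-e)}\psi$ has components $\e^{-\i t(\eno-e)}\psi_0$ and $\e^{-\i t(e'-e)}\psi_1(e')$; by Riemann–Lebesgue the cross terms $\braket{\psi,\e^{-\i t(h_0-e)}\psi}$ tend to $0$ as $e\to\infty$ (after integrating against the bounded weight coming from $A^2$), so $\int_{t_1}^{t_2}\|A(1-\e^{-\i t(h_0-e)}\psi)\|^2\,\d t \to 2(t_2-t_1)\|A\psi\|^2>0$. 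Combining, $\liminf_{e\to\infty}g(e) \geq 2(t_2-t_1)\|A\psi\|^2 - C\,\|\bar v\|\,(\text{something bounded})$; to make this genuinely positive one shrinks nothing on the right but instead observes that one may equally replace $(t_1,t_2)$ by any non-trivial subinterval, and on a sufficiently short subinterval near a fixed interior point the Duhamel error is as small as desired while the main term stays bounded below — then monotonicity of the integral in the interval recovers the bound on all of $(t_1,t_2)$.

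**Main obstacle.** The delicate point is not the pointwise positivity (a clean eigenvector argument) but the uniformity as $e\to\infty$: one must make sure the bounded perturbation $\bar v$ does not, for some escaping sequence $e_k$, conspire with the oscillations to cancel the main term. The cleanest route is the subinterval trick just described, or alternatively a direct stationary-phase/Riemann–Lebesgue estimate on $\braket{\psi, A^2 \e^{-\i t(\bar h - e)}\psi}$ showing it is $o(1)$ as $e\to\infty$ uniformly for $t$ in the compact interval (using that $\bar h$ has absolutely continuous spectrum filling $[\eno\wedge 0,\infty)$ away from its finitely many eigenvalues, and $\psi\perp$ the eigenvectors is not assumed but the eigenvalues are a null set in the relevant spectral representation). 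I would present the subinterval argument as the main proof and remark that an explicit computation is possible in each concrete model (fermions: $A=\sqrt T$; bosons: $A=\sqrt\rho$) where $\bar h$ can be diagonalized.
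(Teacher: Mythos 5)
Your pointwise-positivity argument is sound and matches the paper's spirit (though note the paper first reduces to $A$ \emph{bounded} by replacing $\rho$ with the bounded $\rho_1 \le \rho$, a step you gloss over when you remark that ``both $T$ and $\rho$ are strictly positive'' --- strict positivity is not the issue, unboundedness of $\rho$ near $0$ is). The genuine gap is in the large-$e$ analysis. The Duhamel bound
\[
\bigl\|\e^{-\i t(h_0-e)} - \e^{-\i t(\bar h-e)}\bigr\| \le |t|\,\|\bar v\|
\]
is uniform in $e$, but it does \emph{not} tend to zero as $e\to\infty$: it is an $e$-independent constant of size $\sim t\|\bar v\|$, which has no reason to be smaller than $\sqrt{2}\|A\psi\|$. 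Your subinterval trick does not repair this: if you shrink to a subinterval of length $\epsilon$ near a fixed interior point $t_0$, the main term shrinks to $\approx 2\epsilon\|A\psi\|^2$ while the Duhamel contribution shrinks to $\approx \epsilon\, t_0^2\|\bar v\|^2\|A\|^2\|\psi\|^2$ --- both scale linearly in $\epsilon$, so the competition is unchanged, and $(t_1,t_2)$ is arbitrary so you cannot take $t_0$ small. The conclusion $\liminf_{e\to\infty} g(e) > 0$ is therefore not established by this route.

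The paper avoids the expansion $\bar h = h_0 + \bar v$ entirely. By Jensen's inequality it suffices to lower bound $F(e) := \bigl\| \int_{t_1}^{t_2} A\,(1-\e^{-\i t(\bar h - e)})\psi\,\d t\bigr\|$, and Fubini turns this into $\|A\,y(\bar h - e)\psi\|$ where $y(x) = \int_{t_1}^{t_2}(1-\e^{-\i t x})\,\d t$ is a \emph{bounded} continuous function vanishing only at $x=0$. Since $y(x-e) \to t_2-t_1$ pointwise as $e\to\infty$ and $y$ is bounded, dominated convergence in the spectral representation of $\bar h$ gives $y(\bar h - e)\psi \to (t_2-t_1)\psi$ in norm, hence $F(e) \to (t_2-t_1)\|A\psi\| > 0$. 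This works directly with the perturbed operator $\bar h$ and completely sidesteps the need for the perturbation $\bar v$ to be ``small.'' Your ``alternative'' remark about a Riemann--Lebesgue estimate on $\braket{\psi, A^2\e^{-\i t(\bar h - e)}\psi}$ is in the right direction and could be completed --- note that $\int_{t_1}^{t_2}\|A\e^{-\i t(\bar h-e)}\psi\|^2\d t$ is actually $e$-independent, and the cross term vanishes by the same dominated-convergence argument applied to $\tilde y(\bar h - e)\psi$ with $\tilde y(x)=\int_{t_1}^{t_2}\e^{-\i tx}\d t$ --- but you would still need to first reduce to $A$ bounded, and in any case you presented the Duhamel route, which fails, as the main proof.
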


\begin{proof}
	Let
	$$F(e):=\left\|\int_{t_1}^{t_2} A (1-\e^{-\i t (\bar h - e)})\psi\d t\right\|.$$
	From triangle inequality on integrals and Jensen's inequality, it is sufficient to lower bound this function on $\rr_+$ to prove the lemma.

	To deal with the unboundedness of $\rho$, we introduce
	$$\rho_{\text{cut}}(e):=\begin{cases}(\e^{\beta e}-1)^{-1}&\mbox{ if }e\geq 1\\ (\e^{\beta}-1)^{-1}&\mbox{ if } e<1.\end{cases}$$
	Let $\rho_1:=(\e^{\beta\eno}-1)^{-1}\oplus \rho_\text{cut}(\hat e)$.	 Since $e\mapsto (\e^{\beta e}-1)^{-1}$ is decreasing, it is a bounded operator and $\rho\geq \rho_1$. Hence lower bounding $e\mapsto F(e)$ with $A=\sqrt{\rho_1}$ implies the existence of the same lower bound for $e\mapsto F(e)$ with $A=\sqrt{\rho}$. Thus we now prove a lower bound away from $0$ on $e\mapsto F(e)$ with $A$ assumed to be a bounded positive operator with trivial kernel.

	Let $y:\rr \to \cc$ be the function defined by
	$$y(x)=\int_{t_1}^{t_2}1-\e^{-\i t x} \d t.$$
	The function $y$ is continuous and bounded by $2(t_2-t_1)$ on $\rr$, and $y(x)=0$ if and only if~$x=0$. Fubini's Theorem implies,
	$$F(e)=\|A y(\bar h -e)\psi\|.$$
	Since $\ker A =\{0\}$, it follows from $y(x)=0 \Leftrightarrow x=0$ that $F(e)=0 \Leftrightarrow \bar h \psi=e\psi$ which contradicts our assumption that $\psi$ is not an eigenvector of $h$. Hence, by continuity  of $F$,  on any compact $E\subset\rr_+$,
	$$\inf_{e\in E} F(e)>0.$$
	Hence $\inf_{e\in \rr_+} F(e)=0$ only if $\liminf_{e\to\infty} F(e)=0$.

	Since $\rr\ni x\mapsto y(x-e)$ converges point wise to $t_2-t_1$ when $e\to\infty$ and $y$ is bounded, by Lebesgue's dominated convergence Theorem,
	$$\lim_{e\to\infty} y(\bar h -e )\psi=(t_2-t_1)\psi$$
	in norm. Hence,
	$$\liminf_{e\to\infty} F(e)=\lim_{e\to\infty}\|Ay(\bar h -e)\psi\|=(t_2-t_1)\|A\psi\|.$$
	The triviality of the kernel of $A$ yields the result.
\end{proof}

 	\subsection{Proofs for the impurity in a quasi-free Fermi gas}
		\label{sec:proof_fermions}
The GNS representation we will work in is known as the (glued left) Araki--Wyss representation. This representation of the CAR was introduced in its earliest form in \cite{AWy}. The Hilbert space for this representation is the anti-symmetric Fock space~$\H = \Ga (\h \oplus \h)$ and the representation is determined by
\begin{gather}
		\pi(a(\phi)) = a(\sqrt{1-T}\phi \oplus 0) + a^*(0 \oplus \sqrt{T}\,\overline{\phi}),
\end{gather}
for all $\phi \in \h$. Finally the vector representative~$\Omega$ of~$\omega$ is the vacuum in~$\Ga(\h \oplus \h)$ and the (unperturbed) Liouvillean is $L = \dG(h_0 \oplus -h_0)$. For proofs of these facts, we refer the reader to~\cite[\S{17}]{DG} and \cite[\S{10}]{DE}. For $\phi \in \Dom h_0$, the operators~$L$ and $\pi(a^\sharp(\phi))$ satisfy the commutation relation
\begin{equation}\label{eq:L-a-comm}
	\i[L, \pi(a^\sharp(\phi))]\Psi = \pi(\delta(a^\sharp(\phi)))\Psi=\pi(a^\sharp(\i h_0 \phi))\Psi,
\end{equation}
for any $\Psi \in \Dom L$,
 as seen from differentiating the relation $$\Exp{\i s L} \pi(a^\sharp(\phi)) \Exp{-\i s L}\Psi = \pi(a^\sharp(\Exp{\i s h_0}\phi))\Psi$$ obtained from \eqref{eq:Ldef} and~\eqref{eq:fermions-dyn}. It follows directly from a repeated application of \eqref{eq:L-a-comm} that
 \begin{equation}\label{eq:equiv_dom_delta_dom_e}
 V\in\Dom(\delta^n) \iff f\in\Dom(\hat e^n).
 \end{equation}

\begin{proof}[Proof of Theorem~\ref{thm:fermions-equiv-moments}]
	The fact that the first statement implies the second one is trivial. The fact that the third one implies the first one is immediate from Theorem~\ref{thm:diff} and the equivalence~\eqref{eq:equiv_dom_delta_dom_e}.

	\bigskip

To prove that the the second statement implies the third one, we proceed by induction on~$n$. The result holds for $n = 0$. Assume it holds for some $n-1 \in \nn$ and suppose that there exists a non-trivial interval~$(t_1, t_2)$ on which $t \mapsto \ee_t(\Delta Q^{2n+2})$ is integrable. Then, $t \mapsto \ee_t(\Delta Q^{2n})$ is also integrable by Jensen's inequality and $f \in \Dom \hat{e}^{n-1}$ by the induction hypothesis. The equivalence~\eqref{eq:equiv_dom_delta_dom_e} yields $V\in\Dom \delta^{n-1}$.
Let $X_t := V - \tau_V^t (V)$. Then Lemma~\ref{lem:unif-bound-delta-n} implies $X_t\in\Dom\delta^{n-1}$ and $\sup_{t\in\rr}\|\delta^{n-1}(X_t)\|<\infty$.

\begin{description}\label{claim:A}
\item[Claim A] $\pi(X_t)\Omega\in\Dom L^n$ for a.e.\footnote{Here and in what follows, we use ``a.e.'' for almost every, with respect to the Lebesgue measure.} $t\in(t_1,t_2)$ and $$\int_{t_1}^{t_2}\|L^n\pi(X_t)\Omega\|^2\d t<\infty.$$
\end{description}

First, we prove $\Omega\in\Dom (L+\pi(X_t))^{n+1}$ for a.e. $t\in(t_1,t_2)$. Let $E\mapsto P(E)$ be the projection valued spectral measure of $L+\pi(X_t)$. Let $\Phi\in\cH$ be such that there exists $\Lambda>0$ such that $P([-\Lambda,\Lambda])\Phi=\Phi$. Then by the Cauchy--Schwarz inequality,
$$|\langle \Phi,(L+\pi(X_t))^{n+1}\Omega\rangle|^2\leq \|\Phi\|^2\|P([-\Lambda,\Lambda])(L+\pi(X_t))^{n+1}\Omega\|^2.$$
The norm $\|P([-\Lambda,\Lambda])(L+\pi(X_t))^{n+1}\Omega\|$ is non-decreasing as a function of $\Lambda$. Then \ref{it:integrable_moment_fermions} implies that for a.e. $t\in(t_1,t_2)$, $\lim_{\Lambda\to\infty}\|P([-\Lambda,\Lambda])(L+\pi(X_t))^{n+1}\Omega\|^2=\ee_t(\Delta Q^{2n+2})$. Hence for a.e. $t\in(t_1,t_2)$,
$$|\langle \Phi,(L+\pi(X_t))^{n+1}\Omega\rangle|^2\leq \|\Phi\|^2\ee_t(\Delta Q^{2n+2})$$
and Riesz's Lemma implies $\Omega\in\Dom (L+\pi(X_t))^{n+1}$.

Now we prove {\bf Claim A}. By the induction hypothesis, Lemma \ref{lem:powers-p} and the equivalence \eqref{eq:equiv_dom_delta_dom_e}, commuting repeatedly $L$ with $\pi(X_t)$, it follows that $\pi(X_t)\Dom L^{n-1}\subset\Dom L^{n-1}$. Hence any non-commutative polynomial in $L$ and $\pi(X_t)$ with at most $n-1$ occurrences of $L$ in each monomial preserves $\Dom L^{n-1}$. Working in a representation of $\cH$ in which $L$ is a multiplication operator, we can develop $(L+\pi(X_t))^{n+1}$. Using $L\Omega=0$ and $\i[L,\pi(\delta^p(X_t))]=\pi(\delta^{p+1}(X_t))$ for $p=0,\ldots,n-2$ we obtain,
\begin{equation}\label{eq:dec-into-lower-and-Q}
	(L+\pi(X_t))^{n+1}\Omega=L^n\pi(X_t)\Omega+Q_t\Omega
\end{equation}
with $Q_t$ a non-commutative polynomial in $\{\pi(\delta^{p}(X_t))\}_{p=0,\ldots,n-1}$. Since $\Omega\in\Dom (L+\pi(X_t))^{n+1}$ and $Q_t\Omega\in\cH$, we obtain $\pi(X_t)\Omega\in\Dom L^n$. Moreover, Lemma \ref{lem:powers-n-to-diff} implies,
$$\sup_{t\in\rr}\|Q_t\Omega\|<\infty.$$
Hence, {\bf Claim A} follows from \eqref{eq:dec-into-lower-and-Q}, the triangle inequality, assumption~\ref{it:integrable_moment_fermions} and Jensen's inequality.

\begin{description}
\item[Claim B] $f\in\Dom\hat e^n$
\end{description}
	Using the explicit expressions in Fock space, the non-trivial part of $L^n \pi(X_t) \Omega$ lives in the two-particle subspace of~$\Ga(\h \oplus \h)$ only, namely $P_2L^n\pi(X_t)\Omega=L^n\pi(X_t)\Omega$ with $P_2$ the projector onto the two-particle subspace of~$\Ga(\h\oplus\h)$. The vector there is
			{\footnotesize
			\begin{align}
		 	&(\sqrt{\one - T}  h_0^n \ones \oplus 0)  \wedge (0 \oplus \sqrt{T}  \, \overline{\f}) - (\sqrt{\one - T} h_0^n \Exp{\i t h} \ones \oplus 0)  \wedge (0 \oplus \sqrt{T} \Exp{-\i t\overline{h}} \overline{\f}) \label{eq:2part_vect_1} \\
				&\quad {} + (\sqrt{\one - T}  \ones \oplus 0)  \wedge (0 \oplus \sqrt{T}  \, h_0^n\overline{\f})+ (\sqrt{\one - T}  \f \oplus 0)  \wedge (0 \oplus \sqrt{T}  h_0^n \ones) \label{eq:2part_vect_2}\\
				&\quad {} - (\sqrt{\one - T} \Exp{\i t h} \f \oplus 0)  \wedge (0 \oplus \sqrt{T} h_0^n \Exp{-\i t \overline{h}} \ones) - (\sqrt{\one - T} \Exp{\i t h} \ones \oplus 0)  \wedge (0 \oplus \sqrt{T} h_0^n \Exp{-\i t \overline{h}} \overline{\f}) \label{eq:2part_vect_3}\\
				&\quad {}+ (\sqrt{\one - T}  h_0^n \f \oplus 0)  \wedge (0 \oplus \sqrt{T} \ones)- (\sqrt{\one - T} h_0^n \Exp{\i t h} \f \oplus 0)  \wedge (0 \oplus \sqrt{T} \Exp{-\i t \overline{h}} \ones).\label{eq:2part_vect_4}
			\end{align}
			}
Since $\psi_s\in\Dom h_0^n$, Lemma \ref{lemma:sup-t-power-p} implies the norm of \eqref{eq:2part_vect_1} in $(\h\oplus\h)\wedge(\h\oplus\h)$ is uniformly bounded in $t\in\rr$. Since $\sqrt{T}h_0^n$ is bounded, the norm of \eqref{eq:2part_vect_2}$+$\eqref{eq:2part_vect_3} is also uniformly bounded in $t\in\rr$.
	Now, since, by {\bf Claim A}, $L^n\pi(X_t)\Omega\in\cH$ for a.e. $t\in(t_1,t_2)$, \eqref{eq:2part_vect_4} has finite norm for a.e. $t\in(t_1,t_2)$.
	Recall that Lemma~\ref{lem:powers-n-to-diff} implies $\|h_0^n(\e^{\i t h_0}-\e^{\i t h})\|\leq K_n(|t|)$ with $\rr_+\ni t\mapsto K_n(t)$ an affine function. Because of this and the fact that $\sqrt{1-T}$ is bounded, there exists a family $(\phi_t)_{t\in\rr}$ in $(\h\oplus\h)\wedge(\h\oplus\h)$ with square integrable norm on $(t_1,t_2)$  and such that \eqref{eq:2part_vect_4} is equal to
	\begin{equation}\label{eq:2part_vect_5}
	\begin{split}
	 &(\sqrt{\one - T}  h_0^n \f \oplus 0)  \wedge (0 \oplus \sqrt{T} \ones) \\&\qquad\qquad{} - (\sqrt{\one - T} h_0^n \Exp{\i t h_0} \f \oplus 0)  \wedge (0 \oplus \sqrt{T} \Exp{-\i t \overline{h}} \ones) +\phi_t.
 	\end{split}
	\end{equation}
Since the norms of \eqref{eq:2part_vect_1}, \eqref{eq:2part_vect_2} and \eqref{eq:2part_vect_3} are uniformly bounded in time, and since~$\|\phi_t\|^2$ is integrable on $(t_1,t_2)$, using the triangle inequality, {\bf Claim A} and Jensen's inequality, it follows that $t\mapsto \|\text{\eqref{eq:2part_vect_5}}-\phi_t\|^2$ is integrable on $(t_1,t_2)$:
	\begin{align*}
		&\int_{t_1}^{t_2} \|(\sqrt{\one - T}  h_0^n \f \oplus 0)  \wedge (0 \oplus \sqrt{T} \ones) \\
		&\qquad\qquad\qquad\qquad {}- (\sqrt{\one - T} h_0^n \Exp{\i t h_0} \f \oplus 0)  \wedge (0 \oplus \sqrt{T} \Exp{-\i t \overline{h}} \ones)\|^2 \d t < \infty.
	\end{align*}
	Calculating explicitly the norm squared,
	\begin{equation}
		\int_{t_1}^{t_2} \int_{\rr_+} \left| \frac{1}{\Exp{-\beta e} + 1}\right| e^{2n} |f(e)|^2 \|\sqrt{T}(\one - \Exp{-\i t(\overline{h}-e)}) \ones \|^2 \d e \d t< \infty.
	\end{equation}
	By Fubini's theorem,
	\begin{equation}
		 \int_{\rr_+}  \left| \frac{1}{\Exp{-\beta e} + 1}\right| e^{2n} |f(e)|^2 \int_{t_1}^{t_2} \|\sqrt{T}(\one - \Exp{-\i t(\overline{h}-e)}) \ones \|^2  \d t \d e < \infty.
	\end{equation}
	Lemma~\ref{lem:osc-bosons} then implies
	$$
		\int_{\rr_+} e^{2n} |f(e)|^2  \d e < \infty. \eqno\qedhere
	$$
\end{proof}

 	\subsection{Proofs for the open harmonic oscillator}
		We begin with a few remarks that are immediate from the definitions. One computes directly from the definitions that for any~$\phi \in \Dom h_0 \cap \Dom h_0^{-1/2}$,
\begin{equation}
	\i[L,a^\text{AW}(\phi)]\Psi = a^\text{AW}(\i h_0 \phi)\Psi \label{eq:comm-L}
\end{equation}
for all $\Psi \in \cH_{\rm fin}\cap \Dom L$ with $\cH_{\rm fin}$ the subspace of $\Gs(\h\oplus\h)$ spanned by the vectors with a fixed number of particles: $$\cH_{\rm fin}:=\operatorname{linspan}\{a^*(\phi_n)\dotsb a^*(\phi_1)\Omega : n\in\nn,\ \phi_j\in \h\oplus \h, j=1,\dotsc,n\}.$$

\bigskip
In order to prove Theorem~\ref{thm:bosons-quad-equiv-moments}, we will need the following lemma to control the eventual buildup of infrared irregularities.
\begin{lemma} \label{lemma:sup-t-power-half}
	Assume $\|\hat{e}^{-1/2} f\| \neq \eno^{1/2}$  and $\phi \in \Dom h_0^{-1/2}$, then
	$$
	\sup_{t \in \rr} \| h_0^{-1/2} \Exp{\i t h} \phi \| < \infty.
	$$
\end{lemma}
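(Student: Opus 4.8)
The plan is to reduce the statement, via Duhamel's formula, to a uniform-in-$t$ bound on an oscillatory integral built from the impurity amplitude $\langle\ones,\e^{\i sh}\phi\rangle$, and then to control that integral through the Friedrichs structure of $h=h_0+v$; the hypothesis $\|\hat e^{-1/2}f\|\neq\eno^{1/2}$ will turn out to be precisely the statement that $h$ has no zero-energy eigenvalue or resonance.

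First I would use the direct-sum splitting $\h=\cc\oplus L^2(\rr_+,\d e)$: since $h_0^{-1/2}$ acts as the bounded scalar $\eno^{-1/2}$ on the $\cc$ summand, one has $\|h_0^{-1/2}\e^{\i th}\phi\|^2\le\eno^{-1}\|\phi\|^2+\|\hat e^{-1/2}(\e^{\i th}\phi)_2\|^2$, so it suffices to bound the bath component. Writing $\beta(s):=\langle\ones,\e^{\i sh}\phi\rangle$, which is bounded (and $C^1$ since $\ones\in\Dom h$), Duhamel's formula for $h=h_0+v$ with $v=\ones\langle\f,{\,\cdot\,}\rangle+\f\langle\ones,{\,\cdot\,}\rangle$ gives $(\e^{\i th}\phi)_2=\e^{\i t\hat e}\phi_2+\i\int_0^t\e^{\i(t-s)\hat e}f\,\beta(s)\,\d s$. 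Applying $\hat e^{-1/2}$, the free term has the constant norm $\|\hat e^{-1/2}\phi_2\|<\infty$, while the second term, evaluated at energy $e$, equals $\i e^{-1/2}f(e)\e^{\i te}B_t(e)$ with $B_t(e):=\int_0^t\e^{-\i se}\beta(s)\,\d s=\langle\ones,(\int_0^t\e^{\i s(h-e)}\,\d s)\phi\rangle$. Hence the lemma is equivalent to $\sup_{t}\int_{\rr_+}e^{-1}|f(e)|^2|B_t(e)|^2\,\d e<\infty$ (the case $t<0$ being symmetric).

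Next I would bring in the explicit Friedrichs resolvent. A Schur-complement computation on $\cc\oplus L^2(\rr_+,\d e)$ gives $\langle\ones,(h-z)^{-1}\phi\rangle=G(z)^{-1}\bigl(\phi_1-\langle f,(\hat e-z)^{-1}\phi_2\rangle\bigr)$ with $G(z):=\eno-z-\langle f,(\hat e-z)^{-1}f\rangle$. Since $f\in\Dom\hat e^{-1/2}$, dominated convergence yields $\langle f,(\hat e-z)^{-1}f\rangle\to\|\hat e^{-1/2}f\|^2$ as $z\to0$ from the upper half-plane, so $G(0)=\eno-\|\hat e^{-1/2}f\|^2$ and the hypothesis is exactly $G(0)\neq0$; since $\Im G(\lambda+\i\epsilon)\le-\epsilon$ and $\Re G$ extends continuously to the real axis near $0$, there are $\delta,c>0$ with $|G(\lambda+\i\epsilon)|\ge c$ for $|\lambda|\le\delta$, $\epsilon\ge0$. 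In parallel I would split $\phi=P_{\mathrm{pp}}\phi+P_{\mathrm{ac}}\phi$ (there is no singular continuous spectrum for this model). Any eigenvector of $h$ has the form $(c_j,-c_j(\hat e-E_j)^{-1}f)$ with $E_j\neq0$ because $G(0)\ne0$, so $\hat e^{-1/2}$ applied to its bath component lies in $L^2$ — it is a bounded multiple of $\hat e^{-1/2}f$ near $e=0$ and $\hat e^{-1/2}$ is bounded away from $0$ — whence $\psi_j\in\Dom h_0^{-1/2}$ and $\|h_0^{-1/2}\e^{\i th}P_{\mathrm{pp}}\phi\|\le\sum_j|\langle\psi_j,\phi\rangle|\,\|h_0^{-1/2}\psi_j\|$ is finite and independent of $t$. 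Only the absolutely continuous part remains.

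For the a.c.\ part I would pass to the generalized eigenfunction representation of $h$ on $[0,\infty)$, in which the spectral density of $\ones$ is $\tfrac1\pi\Im G(\lambda+\i0)^{-1}$, i.e. essentially $|f(\lambda)|^2/|G(\lambda+\i0)|^2$; the bound $|G|\ge c>0$ near $0$ keeps this from developing a non-integrable singularity at the bottom of the spectrum (which is exactly what a zero-energy resonance would produce), and the pairing against $\phi$ stays controlled because $\hat e^{-1/2}f,\hat e^{-1/2}\phi_2\in L^2$. Stone's formula then rewrites $B_t(e)$ for $\phi\in P_{\mathrm{ac}}\cH$ as $\int_{0}^{\infty}w(\lambda)\,\frac{\e^{\i t(\lambda-e)}-1}{\i(\lambda-e)}\,\d\lambda$ with an $L^1$ weight $w$ that is tame near $\lambda=0$; decomposing $\frac{\e^{\i t(\lambda-e)}-1}{\lambda-e}$ into a truncated-Hilbert-transform kernel plus a fixed kernel, a van der Corput / Carleson-type maximal estimate bounds $\sup_t|B_t(e)|$ by $|w(e)|$ plus a Hilbert transform of $w$, which after multiplication by $e^{-1}|f(e)|^2$ and integration is finite. \textbf{The main obstacle} is precisely this last step: obtaining the bound uniformly in $t$ (rather than the weaker $L^2$-in-$t$ bound that would follow from Kato $h$-smoothness of $\ones$) requires genuine cancellation in the oscillatory integral down to $\lambda=e=0$ — this is where $G(0)\ne0$ is essential — and one must accommodate the fact that near $e=0$ the form factor $f$ is only square-integrable against $e^{-1}\,\d e$, not continuous, which forces the maximal-function argument to be carried out in the weighted $L^2$ framework above rather than through pointwise control of $f$.
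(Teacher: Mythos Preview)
Your strategy is a valid line of attack in principle, but it is vastly more elaborate than necessary and, as you yourself flag, the final maximal-function step is not actually carried out: the passage from Stone's formula to a uniform-in-$t$ bound on $\int e^{-1}|f(e)|^2|B_t(e)|^2\,\d e$ via ``van der Corput / Carleson-type'' estimates is only sketched, and making it rigorous with merely $f\in\Dom\hat e^{-1/2}$ and no continuity of $f$ near $0$ would require genuine work that you have not supplied. So as written there is a gap.

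The paper bypasses all of this with a two-line operator-theoretic argument. Since $f\in\Dom\hat e^{-1/2}$, the operator
\[
h_0^{-1/2}hh_0^{-1/2}=\one+(\eno^{-1/2}\ones)\langle h_0^{-1/2}\f,{\,\cdot\,}\rangle+(h_0^{-1/2}\f)\langle\eno^{-1/2}\ones,{\,\cdot\,}\rangle
\]
is a bounded rank-two perturbation of the identity with spectrum $\{1,\,1\pm\eno^{-1/2}\|\hat e^{-1/2}f\|\}$. The hypothesis $\|\hat e^{-1/2}f\|\neq\eno^{1/2}$ is precisely the statement that $0$ is not in this spectrum, so $h_0^{-1/2}hh_0^{-1/2}$ has bounded inverse; equivalently, both $h_0^{-1/2}h^{1/2}$ and $h_0^{1/2}h^{-1/2}$ are bounded. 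It follows that $\phi\in\Dom h_0^{-1/2}$ implies $\phi\in\Dom h^{-1/2}$, and then
\[
\|h_0^{-1/2}\e^{\i th}\phi\|=\|h_0^{-1/2}h^{1/2}\,\e^{\i th}\,h^{-1/2}\phi\|\le\|h_0^{-1/2}h^{1/2}\|\,\|h^{-1/2}\phi\|,
\]
which is independent of $t$. Your Duhamel/Friedrichs machinery recovers the same spectral condition $G(0)\neq0$ by a longer route, but then spends all its effort on an oscillatory-integral estimate that the intertwining $h_0^{-1/2}h^{1/2}$ renders unnecessary.
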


\begin{proof}
	Since $\ker h_0 = \{0\}$, $h_0^{\pm 1/2} h_0^{\mp 1/2} = \one$ and
	$$
		h_0^{-1/2} h h_0^{-1/2} = \one + (\eno^{-1/2} \ones) \braket{h_0^{-1/2} \f,{\,\cdot\,}} + (h_0^{-1/2} \f)  \braket{\eno^{-1/2} \ones,{\,\cdot\,}}.
	$$
	Since $\psi_f \in \Dom h_0^{-1/2}$, $h_0^{-1/2} h h_0^{-1/2}$ is bounded  and so is $h_0^{-1/2} h^{1/2}$.
	The operator $h_0^{-1/2}hh_0^{-1/2}$ has spectrum
	\begin{equation}
	\label{eq:sph}
		\sp(h_0^{-1/2} h h_0^{-1/2}) = \{ 1, 1 \pm \eno^{-1/2} \|\hat{e}^{-1/2} f \| \}.
	\end{equation}

	Suppose now that there exists~$\psi\in \ker h\setminus\{0\}$. Then $\psi\in\Dom h = \Dom h_0$, so $\psi\in\Dom h_0^{1/2}$.
	Then $\psi'=h_0^{1/2}\psi$ is an eigenvector of $h_0^{-1/2}hh_0^{-1/2}$ of eigenvalue $0$. Since, $\|\hat{e}^{-1/2} f\| \neq \eno^{1/2}$ implies $0$ is not an eigenvalue of $h_0^{-1/2}hh_0^{-1/2}$, the hypothesis~$\|\hat{e}^{-1/2} f\| \neq \eno^{1/2}$ implies $\ker h=\{0\}$ and $h^{\pm 1}h^{\mp 1}=\one$.

	 Hence, $\|\hat{e}^{-1/2} f\| \neq \eno^{1/2}$ implies that $(h_0^{-1/2}hh_0^{-1/2})^{-1}=h_0^{1/2}h^{-1}h_0^{1/2}$ is bounded, and so is $h_0^{1/2}h^{-1/2}$.

	 The inequality
	$$
		|\braket{\psi, h^{-1/2} \phi}| \leq \|h_0^{1/2} h^{-1/2}\| \|\psi\| \|h_0^{-1/2} \phi\|.
	$$
	for all $\psi \in \Dom h_0^{-1/2}$ shows that $\phi \in \Dom h^{-1/2}$.
	Therefore,
	$$
	 \|h_0^{-1/2} \Exp{\i t h} \phi\| = \|h_0^{-1/2} h^{1/2} \Exp{\i t h} h^{-1/2} \phi\| \leq \|h_0^{-1/2} h^{1/2}\| \|h^{-1/2} \phi\|
	 $$
	 which is independent of $t$.
\end{proof}
\begin{remark}
As shown in the proof, assumption $\|\hat e^{-1/2}f\|\neq \eno^{1/2}$ implies $\ker h=\{0\}$ which might be a more recognizable assumption to the reader. Assuming $f\in\Dom(\hat e^{-1})$, the implication is an equivalence. If $f\not\in\Dom(\hat e^{-1})$ and $\|\hat e^{-1/2}f\|=\eno^{1/2}$, $0$ may be in the singular continuous spectrum of $h$ (see~\cite{WWA}).
\end{remark}

\begin{proof}[Proof of Theorem~\ref{thm:bosons-quad-equiv-moments}]
	The fact that the first statement implies the second one is trivial.

\ref{it:integrability_moments_bosons_quad}$\implies$\ref{it:diff_function_bosons_quad}: As for the fermionic case, we proceed by induction on~$n$. The implication is true  for $n = 0$. Assume it holds for some $n-1 \in \nn$ and suppose that there exists a non-trivial interval~$(t_1,t_2)$ on which $t \mapsto \ee_t(\Delta Q^{2n+2})$ is integrable. Then, $t \mapsto \ee_t(\Delta Q^{2n})$ is integrable by Jensen's inequality and $f \in \Dom \hat{e}^{n-1}$ by induction hypothesis.
Consider the set
	\begin{align*}
			\mathcal{G}_{n-1}&:= \{(\rho+1)^{1/2} h_0^p \Exp{\i s h} \f \oplus 0, (\rho+1)^{1/2}  h_0^p \Exp{\i s h} \ones \oplus 0,\\
			 &\qquad \qquad 0 \oplus \rho^{1/2} (-h_0)^p \Exp{-\i s \overline{h}} \overline{\f} , 0 \oplus \rho^{1/2} (-h_0)^p \Exp{-\i s \overline{h}} \ones\}_{s\in\rr;\ p=0,\ldots,n-1}
	\end{align*}
	of vectors in $\h\oplus\h$. Lemmas~\ref{lemma:sup-t-power-p} and \ref{lemma:sup-t-power-half} ensure that
	\begin{equation}\label{eq:uniform_bound_cG}
	m:=\sup_{\phi\in\mathcal{G}_{n-1}}\|\phi\|<\infty
	\end{equation}
	We set  $X_t := V^\text{AW} - \Exp{\i t (L + V^\textnormal{AW})}V^\text{AW}\Exp{-\i t (L + V^\textnormal{AW})}$.
\begin{description}
\item[Claim A] $X_t\Omega\in\Dom L^n\cap\cH_{\rm fin}$ for a.e. $t\in(t_1,t_2)$ and $$\int_{t_1}^{t_2}\|L^nX_t\Omega\|^2\d t<\infty.$$
\end{description}

Since both $L$ and $L+X_t$ are self-adjoint, the proof uses the same arguments as the ones used in the proof of {\bf Claim A} in the proof of Theorem~\ref{thm:fermions-equiv-moments}. The only difference lays in the proof that $\sup_{t\in\rr}\|Q_t\Omega\|<\infty$. We no longer have that $\pi(\delta^p(X_t))$ is uniformly bounded in norm for $p=0,\ldots, n-1$. Instead, using
	\begin{equation}\label{eq:bound_vector_aOmega}
	\|a^\sharp(\phi_k)\ldots a^\sharp(\phi_1)\Omega\|\leq \sqrt{(k+1)!}\prod_{i=1}^k\|\phi_i\|
	\end{equation}
	for $(\phi_i)_{i=1}^k\subset \ch\oplus\ch$, the unifrom bound \eqref{eq:uniform_bound_cG} leads to
	\begin{equation}\label{eq:uniform_bound_QOmega}
	\sup_{t\in\rr}\|Q_t\Omega\|<\infty.
	\end{equation}
	We leave the complete adaptation of the proof to the interested reader.
\begin{description}
\item[Claim B] $f\in\Dom \hat e^n$.
\end{description}

	 Using the explicit expressions in Fock space $L^n X_t \Omega$ lives in the two-particle subspace of~$\Ga(\h \oplus \h)$ only. As in the proof of Theorem \ref{thm:fermions-equiv-moments}, using \eqref{eq:uniform_bound_cG}, \eqref{eq:bound_vector_aOmega} and Lemma \ref{lem:powers-n-to-diff} we deduce from {\bf Claim A} that,
	 \begin{align*}
		&\int_{t_1}^{t_2} \| ((\rho+1)^{1/2} h_0^{n}\f \oplus 0) \otimes_{\rm s} (0 \oplus \rho^{1/2} \ones) \\
		&\qquad\qquad\qquad{} - ((\rho+1)^{1/2} h_0^{n} \e^{\i th_0}\f \oplus 0) \otimes_{\rm s} (0 \oplus \rho^{1/2}\e^{-\i t\bar h} \ones) \|^2 \d t < \infty.
	\end{align*}
	But because the two terms arising from the symmetrized tensor product are orthogonal, we have
	\begin{align*}
		&\int_{t_1}^{t_2} \|(\rho+1)^{1/2} h_0^{n}\f\otimes \rho^{1/2} \ones - (\rho+1)^{1/2} h_0^{n} \e^{\i th_0}\f \otimes \rho^{1/2}\e^{-\i t\overline{h}} \ones\|^2 \d t < \infty.
	\end{align*}
	Therefore, using the spectral representation of~$h_0$,
	\begin{align*}
		\int_{t_1}^{t_2} \int_0^\infty (\rho(e) + 1) |f(e)|^2 e^{2n} \|\rho^{1/2} (\one - \Exp{-\i t (\overline{h}-e)})\ones\|^2 \d e \d t < \infty.
	\end{align*}
	By Fubini's theorem, $\rho(e)+1\geq1$ and Lemma~\ref{lem:osc-bosons}, we conclude
	\begin{equation}
		\int_{0}^{\infty} e^{2n} |f(e)|^2  \d e < \infty
	\end{equation}
	and {\bf Claim B} holds for $n$. By induction, \ref{it:integrability_moments_bosons_quad} implies~\ref{it:diff_function_bosons_quad}.

\medskip
	\ref{it:diff_function_bosons_quad}${\implies}$\ref{it:unif_bound_moments_bosons_quad}: Note that Lemmas~\ref{lemma:sup-t-power-p} and \ref{lemma:sup-t-power-half}  ensures that if the third statement holds, then
	\begin{equation}\label{eq:uniform_bound_cG_n}
	m := \sup_{\phi \in \mathcal{G}_n} \|\phi\|  < \infty
	\end{equation}
	Developing the power of $L+X_t$, using $L\Omega=0$ and $\i[L,{a^{AW}}^\sharp(\phi)]={a^{AW}}^\sharp(\i h_0\phi)$ for any $\phi\in\Dom h_0$ we obtain $(L+X_t)^{n+1}\Omega=Q_t\Omega$ with $Q_t$ a non-commutative polynomial in $a^\sharp(\phi),\ \phi\in \mathcal{G}_n$. It follows then from, \eqref{eq:bound_vector_aOmega} and \eqref{eq:uniform_bound_cG_n} that
	$$\sup_{t\in\rr}\|Q_t\Omega\|<\infty.
	 \eqno\qedhere
	$$
\end{proof}
\begin{remark}
	If we restrict $\mathcal{G}_n$ to $t<T$, then following the same proof, $\sup_{t\in[0,T]}\ee_t(\Delta Q^{2n+2})<\infty$. In that case the assumption $\|\hat e^{-1/2}f\|\neq \eno^{1/2}$ is not necessary.
	Indeed, one can use $f\in\Dom \hat e^{-1/2}$ and $\phi\in\Dom h_0^{-1/2}$ implies $t\mapsto \|h_0^{-1/2}\e^{\i th}\phi\|$ is bounded by a continuous function instead of Lemma~\ref{lemma:sup-t-power-half}. This implication is easily proved using Riesz's Lemma and Duhamel's formula.
\end{remark}
We turn to the proof of Proposition \ref{prop:bosons-quad-an}. We first establish two lemmas.

\begin{lemma}\label{lemma:max-exp-x}
		If there exists $\gamma > 0$ such that $f \in \Dom \Exp{\frac 12 \gamma \hat{e}}$, then for any $\psi\in\Dom \e^{\frac12\gamma h_0}$, and any $\gamma'\in{[{0},{\frac 12}\gamma)}$,
		$$\sup_{t\in\rr}(\|\rho^{1/2}\e^{\gamma' h_0}\e^{\i t h}\psi\| +\|(1+\rho)^{1/2}\e^{\gamma' h_0}\e^{\i t h}\psi\|)<\infty.$$
\end{lemma}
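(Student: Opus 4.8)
The plan is to peel off the three difficulties in turn --- the infrared weight $\rho^{1/2}$ or $(1+\rho)^{1/2}$, the ultraviolet weight $\e^{\gamma' h_0}$, and the uniformity in $t$ --- by separate mechanisms, using that the exponential weight is felt only at high energies (where $\rho^{1/2}\sim\one$) and the infrared weight only at low energies (where $\e^{\gamma' h_0}\sim\one$). First I would reduce the two choices of $A\in\{\rho^{1/2},(1+\rho)^{1/2}\}$ to a single estimate by a pointwise comparison of functions of $h_0$. Fix $\gamma''\in(\gamma',\tfrac12\gamma)$. Then $x\mapsto A(x)\,x^{1/2}\,\e^{(\gamma'-\gamma'')x}$ is bounded on $\sp h_0\subseteq\{\eno\}\cup[0,\infty)$: at $x\to0^+$ it tends to $\beta^{-1/2}$, while at $x\to+\infty$ the factor $(\e^{\beta x}-1)^{-1/2}x^{1/2}$ vanishes for $A=\rho^{1/2}$, and for $A=(1+\rho)^{1/2}$ the polynomial growth $O(x^{1/2})$ is killed by $\e^{(\gamma'-\gamma'')x}$ since $\gamma'<\gamma''$. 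Hence $A\,\e^{\gamma' h_0}\le C\,h_0^{-1/2}\e^{\gamma'' h_0}$ as positive functions of $h_0$, and it suffices to bound $\sup_t\|h_0^{-1/2}\e^{\gamma'' h_0}\e^{\i th}\psi\|$. (Finiteness of the left-hand side even at $t=0$ already forces $\psi\in\Dom\rho^{1/2}=\Dom h_0^{-1/2}$; I take this infrared regularity, and $\|\hat e^{-1/2}f\|\neq\eno^{1/2}$, to be in force, as in the ambient model.)

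Next I would split $h_0^{-1/2}\e^{\gamma'' h_0}=h_0^{-1/2}\e^{\gamma'' h_0}\one_{[0,M]}(h_0)+h_0^{-1/2}\e^{\gamma'' h_0}\one_{(M,\infty)}(h_0)$ for $M$ large, with the first summand $\le\e^{\gamma'' M}h_0^{-1/2}$ and the second $\le M^{-1/2}\e^{\gamma'' h_0}$. For the low-energy term this reduces the claim to $\sup_t\|h_0^{-1/2}\e^{\i th}\psi\|<\infty$, which is exactly Lemma~\ref{lemma:sup-t-power-half} applied to $\phi=\psi\in\Dom h_0^{-1/2}$; this is where $\|\hat e^{-1/2}f\|\neq\eno^{1/2}$ enters, through $\ker h=\{0\}$ and the boundedness of $h_0^{-1/2}h^{1/2}$.

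For the high-energy term one is left with $\sup_t\|\e^{\gamma'' h_0}\e^{\i th}\psi\|<\infty$, which is the heart of the matter. The key observation is that the perturbation $v=\ones\langle\f,{\,\cdot\,}\rangle+\f\langle\ones,{\,\cdot\,}\rangle$ is rank two and built from vectors with $\e^{\frac12\gamma h_0}$-decay: since $f\in\Dom\e^{\frac12\gamma\hat e}$ one has $\f\in\Dom\e^{\frac12\gamma h_0}$, and $\ones$ lies in every such domain, so $\sigma\mapsto\e^{\sigma h_0}v\e^{-\sigma h_0}$ is again rank two, norm-bounded and strongly continuous on $[-\tfrac12\gamma,\tfrac12\gamma]$. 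Consequently the operator-valued function $G(\sigma):=\e^{\sigma h_0}\e^{-\sigma h}$ solves $G'(\sigma)=-\bigl(\e^{\sigma h_0}v\e^{-\sigma h_0}\bigr)G(\sigma)$ with $G(0)=\one$, so $\|G(\sigma)\|\le\exp\bigl(\int_0^{|\sigma|}\|\e^{rh_0}v\e^{-rh_0}\|\,\d r\bigr)<\infty$ for $|\sigma|\le\tfrac12\gamma$; the transposed computation gives that $\e^{\gamma'' h}\e^{-\gamma'' h_0}$ is bounded too. Since $\psi\in\Dom\e^{\frac12\gamma h_0}\subseteq\Dom\e^{\gamma'' h_0}$, this yields $\psi\in\Dom\e^{\gamma'' h}$, and because $\e^{\gamma'' h}$ commutes with the unitary $\e^{\i th}$ one has $\e^{\gamma'' h_0}\e^{\i th}\psi=(\e^{\gamma'' h_0}\e^{-\gamma'' h})\,\e^{\i th}\,(\e^{\gamma'' h}\psi)$, hence $\|\e^{\gamma'' h_0}\e^{\i th}\psi\|\le\|G(\gamma'')\|\,\|\e^{\gamma'' h}\e^{-\gamma'' h_0}\|\,\|\e^{\gamma'' h_0}\psi\|$, finite and independent of $t$. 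Combining with the low-energy term and the first reduction finishes the proof.

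The main obstacle is the high-energy estimate of the previous paragraph, that is, making this ``imaginary-time'' argument rigorous: one must check that $\sigma\mapsto\e^{\sigma h_0}v\e^{-\sigma h_0}$ is genuinely bounded and strongly continuous up to the endpoints $\pm\tfrac12\gamma$, that $G(\sigma)$ is strongly differentiable with the stated generator on a suitable core (e.g.\ analytic vectors), and that the bounded extensions of $\e^{\pm\gamma'' h_0}\e^{\mp\gamma'' h}$ intertwine domains as used --- these are the familiar domain subtleties of the analytic continuation of $s\mapsto\e^{\i sh}\psi$ to the strip $|{\Im s}|<\tfrac12\gamma$ for $\psi$ of exponential decay. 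The only other point requiring care is that the vectors $\psi$ to which the lemma is applied lie in $\Dom h_0^{-1/2}$, so that Lemma~\ref{lemma:sup-t-power-half} is available for the low-energy piece.
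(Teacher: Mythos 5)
Your high-energy estimate is the same as the paper's. The paper establishes boundedness of $\e^{\pm\frac12\gamma h_0}\e^{\mp\frac12\gamma h}$ by expanding matrix elements $\langle\phi,\e^{\pm\frac12\gamma h_0}\e^{\mp\frac12\gamma h}\phi'\rangle$ in a Dyson series whose interaction is $\alpha\mapsto\e^{\pm\alpha h_0}v\e^{\mp\alpha h_0}$, convergent because $R(\gamma):=\sup_{|\alpha|\le\frac12\gamma}\|\e^{\alpha h_0}v\e^{-\alpha h_0}\|<\infty$; it then sandwiches $\e^{\i t h}$ between $\e^{\pm\frac12\gamma h_0}\e^{\mp\frac12\gamma h}$ to get $\sup_t\|\e^{\frac12\gamma h_0}\e^{\i t h}\psi\|\le\e^{\gamma R(\gamma)}\|\e^{\frac12\gamma h_0}\psi\|$. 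Your $G(\sigma)$/Gronwall argument is the ODE rephrasing of exactly this series, so the heart of the proof agrees; the ``domain subtleties'' you flag are the ones the paper sidesteps by working with matrix elements on $\Dom\e^{\pm\frac12\gamma h_0}\times\Dom\e^{\mp\frac12\gamma h}$ and then invoking Riesz.

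Where you genuinely diverge is in passing from $\sup_t\|\e^{\frac12\gamma h_0}\e^{\i t h}\psi\|<\infty$ to the statement with the weight $A\in\{\rho^{1/2},(1+\rho)^{1/2}\}$. The paper does this in one line: from $h_0\e^{\gamma'h_0}\le C(\gamma')\e^{\frac12\gamma h_0}$ (elementary pointwise inequality) together with boundedness of $\sqrt\rho\,h_0$, it deduces $\|\sqrt\rho\,\e^{\gamma'h_0}\e^{\i t h}\psi\|\le\|\sqrt\rho\,h_0\|\,C(\gamma')\,\|\e^{\frac12\gamma h_0}\e^{\i t h}\psi\|$, with no detour through Lemma~\ref{lemma:sup-t-power-half}, no energy splitting, and crucially no hypothesis $\|\hat e^{-1/2}f\|\neq\eno^{1/2}$. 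You instead compare $A\,\e^{\gamma'h_0}\le C\,h_0^{-1/2}\e^{\gamma''h_0}$, split into $\one_{[0,M]}(h_0)$ and $\one_{(M,\infty)}(h_0)$, and call on Lemma~\ref{lemma:sup-t-power-half} for the low-energy piece. This is a perfectly reasonable and more verbose route, but note what it costs: Lemma~\ref{lemma:sup-t-power-half} needs both $\psi\in\Dom h_0^{-1/2}$ and $\|\hat e^{-1/2}f\|\neq\eno^{1/2}$. The latter is a hypothesis of Theorem~\ref{thm:bosons-quad-equiv-moments} but \emph{not} of Proposition~\ref{prop:bosons-quad-an}, which is where Lemma~\ref{lemma:max-exp-x} is actually invoked. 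So if you build your proof on Lemma~\ref{lemma:sup-t-power-half}, you have quietly strengthened the hypotheses of the lemma beyond what is available downstream, and you should flag this discrepancy explicitly rather than tuck it into a parenthesis. (I agree the paper's one-line absorption of $\sqrt\rho$ deserves scrutiny near $e=0$ given the $e^{-1/2}$ singularity of $\sqrt\rho\,\e^{\gamma'e}$; but it is the paper's deliberate attempt to avoid exactly the detour your proof takes, so the mismatch of hypotheses you create is the salient difference between the two approaches.)
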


\begin{proof}
	First, let us show that $\e^{\pm\frac12 \gamma h_0}\e^{\mp\frac12 \gamma h}$ is bounded. Let $\phi\in\Dom \e^{\pm \frac12 \gamma h_0}$ and $\phi'\in\Dom \e^{\mp \frac12 \gamma h}$. Then, by repeated differentiation,
	\begin{align*}
	&\langle\phi,\e^{\pm\frac12\gamma h_0}\e^{\mp\frac12\gamma h}\phi'\rangle
	\\&\qquad
		=\sum_n (\mp1)^n \int_{0<\alpha_1<\cdots<\alpha_n<\tfrac12\gamma}\langle\phi,\e^{\pm\alpha_1 h_0}v\e^{\mp\alpha_1 h_0}\dotsb \e^{\pm\alpha_n h_0}v\e^{\mp \alpha_nh_0}\phi'\rangle \quad\\
		&\specialcell{\hfill \d\alpha_1\ldots \d\alpha_n.}
	\end{align*}
	The series on the right-hand side converges uniformly in $\phi$ and $\phi'$ since assumption $\psi_f\in\Dom \e^{\frac12\gamma h_0}$ implies
	$$
		R(\gamma) := \sup_{\alpha\in[-\frac12\gamma,\frac12\gamma]}\|\e^{\alpha h_0}v\e^{-\alpha h_0}\| < \infty.
	$$
	Then, the Cauchy--Schwarz inequality implies
	$$
		|\langle\phi,\e^{\pm\frac12\gamma h_0}\e^{\mp\frac12\gamma h}\phi'\rangle|\leq \|\phi\|\|\phi'\|\e^{\frac12\gamma R(\gamma)}.
	$$
	Thus $\e^{\pm\frac12 \gamma h_0}\e^{\mp\frac12 \gamma h}$ is bounded by $\e^{\frac12 \gamma R(\gamma)}$.

	\medskip

Inserting the identity $\e^{-\frac12 \gamma h}\e^{\frac12 \gamma h}$ we obtain,
	$$\|\e^{\frac12 \gamma h_0}\e^{\i t h}\psi\|=\|\e^{\frac12 \gamma h_0}\e^{-\frac12 \gamma h}\e^{\i t h}\e^{\frac12 \gamma h}\psi\|\leq \e^{\frac12\gamma R(\gamma)}\|\e^{\frac12 \gamma h}\psi\|.$$
	Inserting the identity $\e^{-\frac12 \gamma h_0}\e^{\frac12 \gamma h_0}$, we obtain,
	$$\|\e^{\frac12 \gamma h_0}\e^{\i t h}\psi\|\leq \e^{\frac12\gamma R(\gamma)}\|\e^{\frac12 \gamma h}\e^{-\frac12 \gamma h_0}\e^{\frac12 \gamma h_0}\psi\|\leq \e^{\gamma R(\gamma)}\|\e^{\frac12 \gamma h_0}\psi\|$$
	and we conclude
	\begin{equation}\label{eq:exp-evol-f}
		\sup_{t\in\rr}\|\e^{\frac12 \gamma h_0}\e^{\i t h}\psi\|<\infty.
	\end{equation}

	\medskip
	Since $e\e^{\frac12 \gamma' e}\leq \frac{2}{\epsilon}\e^{\frac12(\gamma' +\epsilon)e}$ for any $\gamma>0$, $\epsilon>0$ and $e\geq 0$, for any $x\in [0,\tfrac12[$,
	$$h_0\e^{x \gamma h_0}\leq \frac{2}{(\frac12 - x)\gamma}\e^{\frac12 \gamma h_0}.$$
	Then, since $\sqrt{\rho} h_0$ is bounded,
	$$\|\sqrt{\rho} \e^{x\frac12 \gamma h_0} \psi\|\leq \|\sqrt{\rho} h_0\|\frac{2}{(\frac12-x)\gamma}\|\e^{\frac12 \gamma h_0}\e^{\i t h}\psi\|.$$
	It then follows from~\eqref{eq:exp-evol-f} that
	$$
		\sup_{t \in \rr} \|\rho^{1/2}\e^{x \gamma h_0}\e^{\i t h}\psi\| < \infty.
	$$
	Similarly
	$$
		\sup_{t \in \rr} \|(\one +\rho)^{1/2}\e^{x \gamma h_0}\e^{\i t h}\psi\| < \infty.
	\eqno\qedhere
	$$
\end{proof}

\begin{lemma} \label{lem:reaminder-for-real}
	If $f \in \Dom \hat{e}\cap\Dom \hat e^{-1/2}$, then there exists $s_0>0$ such that
	\begin{equation*}
		 \sup_{t\in\rr}\left\|\Exp{\i s (L + X_t)}\Omega - \Omega + \sum_{n=1}^N \i^n \int_0^s  \dotsi \int_0^{s_{n-1}}\tau^{s_n} (X_t) \dotsb \tau^{s_1} (X_t) \Omega \d s_n \dotsb \d s_1\right\|
	\end{equation*}
	converges to~0 as~$N \to \infty$, for all $s\in(-s_0,s_0)$.
\end{lemma}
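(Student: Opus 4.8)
The claimed bound is, up to the paper's sign convention, exactly the assertion that the truncated Araki--Dyson series for $\Exp{\i s(L+X_t)}\Omega$ about the free dynamics converges in norm \emph{uniformly in~$t$} on a small enough interval $(-s_0,s_0)$, where $X_t:=V^{\text{AW}}-\tau_V^t(V^{\text{AW}})$. The plan is therefore: (i) recall this expansion and reduce the lemma to a uniform, geometric bound on its terms; (ii) use the quasi-free structure to reduce those bounds to a single uniform one-particle estimate; (iii) assemble the estimate from~\eqref{eq:bound_vector_aOmega} and the volume of the time-ordered simplex; (iv) identify the sum of the series with $\Exp{\i s(L+X_t)}\Omega$. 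For~(i): since $L+X_t$ is (essentially) self-adjoint and $L\Omega=0$, iterating Duhamel's formula for the pair $(L,L+X_t)$ gives, as in~\eqref{eq:Dyson-cstar} but applied to the vector $\Omega$ and the unbounded perturbation $X_t$, that $\Exp{\i s(L+X_t)}\Omega=\Omega+\sum_{n\ge1}\i^n\int_{0\le s_n\le\cdots\le s_1\le s}\tau^{s_n}(X_t)\cdots\tau^{s_1}(X_t)\Omega\,\d s_n\cdots\d s_1$; it then suffices to show that each summand is a well-defined vector and that, for some $c<\infty$ independent of $t$ and $s$, the $n$-th summand has norm at most $\sqrt{2n+1}\,(c|s|)^n$, so that the tail $\sum_{n>N}$ vanishes uniformly in $t$ as $N\to\infty$ once $|s|<s_0:=1/c$.

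For~(ii): the free dynamics $\tau$ acts on the Araki--Woods creation/annihilation operators by rotating one-particle form factors, and $X_t$ is the second quantization of the rank-$\le4$ one-particle operator $v-v_t$, with $v_t:=\Exp{\i th}v\Exp{-\i th}$ and $h=h_0+v$. Hence, written out in $\Gs(\h\oplus\h)$, each $\tau^{s_j}(X_t)$ is a sum of at most a fixed number $M$ (in fact $M=16$) of monomials $a^\sharp(\chi)\,a^\sharp(\chi')$ whose one-particle form factors are of the form $\sqrt{1+\rho}\,\Exp{\i s_j h_0}\phi\oplus0$ or $0\oplus\sqrt{\rho}\,\overline{\Exp{\i s_j h_0}\phi}$ with $\phi\in\{\ones,\f,\Exp{\i th}\ones,\Exp{\i th}\f\}$. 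The crucial point is the uniform bound $m:=\sup\|\chi\|<\infty$, the supremum over all such form factors, over $s_j\in(-s_0,s_0)$, and over $t\in\rr$. Since $\sqrt{\rho}$ and $\sqrt{1+\rho}$ are bounded on the atomic summand and dominated by a constant multiple of $\one+h_0^{-1/2}$ on $L^2(\rr_+,\d e)$, and since $\rho$ commutes with $\Exp{\i s_j h_0}$, this reduces to $\sup_t\|\Exp{\i th}\phi\|=\|\phi\|$ together with $\sup_t\|h_0^{-1/2}\Exp{\i th}\phi\|<\infty$ for $\phi\in\{\ones,\f\}$ (note $\ones,\f\in\Dom h_0^{-1/2}$ because $f\in\Dom\hat e^{-1/2}$), and these are precisely the uniform one-particle estimates of Lemmas~\ref{lemma:sup-t-power-p} and~\ref{lemma:sup-t-power-half}.

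For~(iii)--(iv): given $m$, expand each of the $n$ factors into its $\le M$ monomials, so $\tau^{s_n}(X_t)\cdots\tau^{s_1}(X_t)\Omega$ is a sum of at most $M^n$ vectors $a^\sharp(\chi_1)\cdots a^\sharp(\chi_{2n})\Omega$ with $\|\chi_i\|\le m$; by~\eqref{eq:bound_vector_aOmega} each has norm $\le\sqrt{(2n+1)!}\,m^{2n}$, hence $\|\tau^{s_n}(X_t)\cdots\tau^{s_1}(X_t)\Omega\|\le M^n\sqrt{(2n+1)!}\,m^{2n}$ uniformly in $t$ and in the $s_j$. Integrating over the time-ordered simplex, of Lebesgue volume $|s|^n/n!$, and bounding $\sqrt{(2n+1)!}/n!\le\sqrt{2n+1}\,2^n$ (from $\binom{2n}{n}\le4^n$), the $n$-th summand is $\le\sqrt{2n+1}\,(2Mm^2|s|)^n$, which gives~(i) with any $s_0<(2Mm^2)^{-1}$. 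To identify the limit, note that both $s\mapsto\Exp{\i s(L+X_t)}\Omega$ and the now norm-convergent series solve the interaction-picture Volterra equation $\Psi(s)=\Omega+\i\int_0^s\tau^{u}(X_t)\,\Psi(u)\,\d u$ on $(-s_0,s_0)$ --- for the left-hand side one differentiates $\Exp{\i s(L+X_t)}\Exp{-\i sL}=\one+\i\int_0^s\Exp{\i(s-u)L}X_t\Exp{\i u(L+X_t)}\,\d u$ and uses $L\Omega=0$ --- and the uniform a priori bound forces uniqueness of the solution; alternatively one appeals directly to the Araki--Dyson construction of~\cite{DJP}. The difference displayed in the statement is then exactly the geometric tail bounded above, uniformly in $t$.

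The step I expect to be the main obstacle is~(ii). Steps (i), (iii) and (iv) are essentially bookkeeping, but the genuinely uniform-in-$t$ control of $\|\sqrt{\rho}\,\Exp{\i th}\phi\|$ --- that is, of the low-energy content of the \emph{perturbed} one-particle evolution --- is delicate: it is what forces the infrared regularity of $f$ and the exclusion of a zero-energy resonance, and it is exactly the reason the one-particle lemmas of this subsection are proved beforehand.
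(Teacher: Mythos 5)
Your core estimate (steps (ii)--(iii)) is exactly the paper's: form the uniform one-particle bound $m:=\sup_{\phi\in\mathcal{G}_0}\|\phi\|<\infty$ over the Araki--Woods form factors $\sqrt{1+\rho}\,\e^{\i sh_0}\e^{\i th}\phi\oplus0$ and $0\oplus\sqrt{\rho}\,\e^{\i sh_0}\e^{-\i t\bar h}\bar\phi$ via Lemma~\ref{lemma:sup-t-power-half}, expand $\tau^{s_j}(X_t)$ into its $16$ quadratic monomials, invoke~\eqref{eq:bound_vector_aOmega} to get $\|\tau^{s_n}(X_t)\cdots\tau^{s_1}(X_t)\Omega\|\le 16^n\sqrt{(2n+1)!}\,m^{2n}$ uniformly in $t$ and in the $s_j$, integrate over the simplex to pick up $|s|^n/n!$, and conclude geometric decay (the paper uses $\limsup_N 2^{-N}\sqrt{(2N+1)!}/(N+1)!<\infty$; your $\binom{2n}{n}\le4^n$ is the same bound). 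You also correctly diagnose the sign typo in the statement and the reason only Lemma~\ref{lemma:sup-t-power-half} is needed.

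Step (iv), however, is flawed. The vector $\Psi(s):=\e^{\i s(L+X_t)}\Omega=\e^{\i s(L+X_t)}\e^{-\i sL}\Omega$ does \emph{not} satisfy the Volterra equation $\Psi(s)=\Omega+\i\int_0^s\tau^u(X_t)\Psi(u)\,\d u$. Differentiating $U(s):=\e^{\i s(L+X_t)}\e^{-\i sL}$ on $\cH_{\rm fin}$ gives $\partial_s U(s)\Phi=\i\,U(s)\tau^s(X_t)\Phi$, so $\Psi$ obeys $\Psi(s)=\Omega+\i\int_0^s U(u)\tau^u(X_t)\Omega\,\d u$: the driving term is the fixed vector $\tau^u(X_t)\Omega$ hit by the operator $U(u)$, not $\tau^u(X_t)$ hit by the \emph{vector} $\Psi(u)$, and these differ. (Your intermediate Duhamel display is also missing a trailing $\e^{-\i sL}$ as an operator identity, though that cancels on $\Omega$.) Iterating the incorrect Volterra equation would moreover produce the reverse time-ordering $\tau^{s_1}(X_t)\cdots\tau^{s_n}(X_t)\Omega$, not the paper's $\tau^{s_n}(X_t)\cdots\tau^{s_1}(X_t)\Omega$. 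The clean resolution, which is what the paper does, is to iterate the identity $U(s)\Phi=\Phi+\i\int_0^sU(u)\tau^u(X_t)\Phi\,\d u$ on the specific vectors $\Phi=\tau^{s_{j-1}}(X_t)\cdots\tau^{s_1}(X_t)\Omega\in\cH_{\rm fin}$ (using $f\in\Dom\hat e^{-1/2}$ to stay in $\cH_{\rm fin}$), which yields the exact truncated series plus an explicit remainder $R_N$; your term estimate then bounds $\|R_N\|$ directly and no separate identification or uniqueness argument is needed. Your fallback ``appeal to the Araki--Dyson construction of~\cite{DJP}'' is essentially this, so the gap is repairable, but as written the Volterra argument does not go through.
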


\begin{proof}
	Since $L\Omega=0$, $\Exp{\i s (L + X_t)}\Omega=\Exp{\i s (L + X_t)}\Exp{-\i s L}\Omega$ and for  $\Psi\in\cH_{\rm fin}$,
	$$
	 \partial_s \Exp{\i s (L + X_t)}\Exp{-\i s L} \Psi = \i \Exp{\i s (L+X_t)} \Exp{-\i s L} \tau^s (X_t)\Psi.
	$$
	Since $f\in\Dom \hat e^{-1/2}$ implies $\tau^{s_n} (X_t) \dotsb \tau^{s_1} (X_t) \Omega\in\cH_{\rm fin}$, a repeated application of the fundamental theorem of calculus yields that for any $N \in \nn$,
	\begin{equation}
	\begin{split}
		&\Exp{\i s (L + X_t)} \Exp{-\i s L}\Omega \\
			&\qquad = 1 + \sum_{n=1}^N \i^n \int_0^s \dotsi \int_0^{s_{n-1}} \tau^{s_n} (X_t) \dotsb \tau^{s_1} (X_t) \Omega  \d s_n \dotsb \d s_1 + R_{N}
	\end{split}
	\end{equation}
	with $R_N$ being
	$$
		\i^{N+1} \int_0^s  \dotsi \int_0^{s_{N}}\Exp{\i s_{N+1} (L + X_t)} \Exp{-\i s_{N+1} L} \tau^{s_{N+1}} (X_t) \dotsb \tau^{s_1} (X_t) \Omega  \d s_{N+1} \dotsb \d s_1.
	$$
	Then,
	\begin{align*}
		\|R_N\| &\leq \frac{|s|^{N+1}}{(N+1)!} \sup_{s_1, \dotsc, s_{N+1} \in [0,s]} \|\tau^{s_{N+1}} (X_t) \dotsb \tau^{s_1} (X_t) \Omega\|.
	\end{align*}
	Let
	\begin{align*}
	&\mathcal{G}_0:=\{(\rho+1)^{1/2} \Exp{\i s h_0}\Exp{\i t h} \f \oplus 0, (\rho+1)^{1/2} \Exp{\i s h_0} \Exp{\i t h} \ones \oplus 0,\\
			 &\qquad \qquad 0 \oplus \rho^{1/2} \Exp{\i s h_0}\Exp{-\i t \overline{h}} \overline{\f} , 0 \oplus \rho^{1/2} \Exp{\i s h_0}\Exp{-\i t \overline{h}} \ones\}_{s\in\rr, t\in\rr}.
	\end{align*}
	Lemma \ref{lemma:sup-t-power-half} implies
	\begin{equation}\label{eq:uniform_bound_cG_0}
	m:=\sup_{\phi\in\mathcal{G}_0}\|\phi\|<\infty.
	\end{equation}
	The operators $\tau^s (X_t)$ are non-commutative polynomials of degree $2$ in $a^\sharp(\phi), \phi\in\mathcal{G}_0$. They each consists of 16 monomials. It thus follows from \eqref{eq:bound_vector_aOmega} that
\begin{align*}
		\|R_N\| &\leq \frac{\sqrt{(2N+1)!}}{(N+1)!} (16m^2|s|)^{N+1}
	\end{align*}
	Since $\limsup_{N\to\infty} 2^{-N}\frac{\sqrt{(2N+1)!}}{(N+1)!}<\infty$, $\lim_{N\to\infty} \|R_N\|=0$ for $|s|<\frac1{32m^2}.$
\end{proof}

\begin{proof}[Proof of Proposition~\ref{prop:bosons-quad-an}.]
	Let $F_{t,N} : \rr \to \mathbb{C}$ be defined by the truncated series
	$$
		F_{t,N}(\alpha) := 1 + \sum_{n=1}^N \i^n \int_0^\alpha  \dotsi \int_0^{\alpha_{n-1}}
		\braket{\Omega, \tau^{\alpha_n}(X_t) \dotsb \tau^{\alpha_1} (X_t) \Omega } \d\alpha_n \dotsb \d \alpha_1.
	$$
	Let $\gamma'\in(0,\gamma)$ and
	\begin{align*}
	&\mathcal{G}_{\gamma'}:=\{(\rho+1)^{1/2}\e^{\alpha h_0} \Exp{\i s h} \f \oplus 0, (\rho+1)^{1/2} \e^{\alpha h_0}  \Exp{\i s h} \ones \oplus 0,\\
			 &\qquad \qquad 0 \oplus \rho^{1/2} \e^{\alpha h_0} \Exp{-\i s \overline{h}} \overline{\f} , 0 \oplus \rho^{1/2} \e^{\alpha h_0} \Exp{-\i s \overline{h}} \ones\}_{s\in\rr, |\Im \alpha|<\frac12\gamma'}.
	\end{align*}
	Lemma \ref{lemma:max-exp-x} implies,
	\begin{equation}\label{eq:uniform_bound_cG_gamma}
	m:=\sup_{\phi\in\mathcal{G}_{\gamma}}\|\phi\|<\infty.
	\end{equation}
	Note that the operators $\tau^{\alpha_i}(X_t)$ are non-commutative polynomials of degree~$2$ in $a^\sharp(\phi),\ \phi\in{\mathcal{G}_{\gamma}}$. Each one of this polynomial consists of 16 monomials.

	Using Wick's Theorem, each integrand in the truncated series $F_{t,N}$ admits and analytic extension to~$\{(\alpha_1,\ldots,\alpha_n)\in\cc^n\ |\ |\Im \alpha_j|<\frac12\gamma\}$. Hence each $F_{t,N}$ admits an analytic extension to~$\{\alpha \in \cc : |{\Im \alpha}|<\frac 12 \gamma \}$.

	Lemma~\ref{lem:reaminder-for-real} yields convergence of the sequence $(F_{t,N})_{N \in \nn}$ on a subset of $\{\alpha \in \cc : |{\Im \alpha}|<\tfrac 12 \gamma \}$ that has an accumulation point and the limit there coincides with the characteristic function of $\P_t$.

	From \eqref{eq:uniform_bound_cG_gamma},
	\begin{align*}
		&\left\|\int_0^\alpha \dotsi \int_0^{\alpha_{n-1}} \braket{\Omega, \tau^{\alpha_n} (X_t) \dotsb \tau^{\alpha_1} (X_t) \Omega} \d\alpha_n \dotsb \d \alpha_1\right\| \\
		&\hspace{30ex} \leq  \frac{|\alpha|^n}{n!} \| \tau^{\alpha_n} (X_t) \dotsb \tau^{\alpha_1} (X_t) \Omega \| \\
		&\hspace{30ex}\leq \frac{\sqrt{(2n+1)!}}{n!}   (16m^2|\alpha|)^n.
	\end{align*}
	Since $\limsup_{n\to\infty}2^{-n}\frac{\sqrt{(2n+1)!}}{n!}<\infty$, the sequence $(F_{t,N})_{N \in \nn}$ is uniformly bounded on ${\mathcal{D}}:=\{\alpha \in \cc : |\alpha|<\frac{1}{32 m^2}\mbox{ and }|\Im \alpha|<\frac12\gamma'\}$. We conclude by the Vitali--Porter convergence theorem that  $(F_{t,N})_{N \in \nn}$ converges on ${\mathcal{D}}$ to an analytic function.
	Therefore, for $\gamma' > 0$ small enough,
	\begin{align*}
		\sup_{t \in \rr} \ee_t[\Exp{\gamma'|\Delta Q|}] &\leq \sup_{t \in \rr}(\braket{\Omega, \Exp{\gamma' (L + X_t)}\Omega} + \braket{\Omega, \Exp{-\gamma' (L + X_t)}\Omega})
	\end{align*}
	is finite.
\end{proof}

 	\subsection{Proofs for the van Hove Hamiltonian}
		\begin{proof}[Proof of Theorem \ref{thm:vanHove_Poisson}]
Using $L\Omega=0$, the characteristic function of $\P_t$ can be rewritten as
\begin{equation}\label{eq:bosons_lin_rewrite_cht}
\begin{split}
		\cht(\alpha)&= \braket{\Exp{-\i t(L + \varphi^{\text{AW}}(f))}\Omega,  \Exp{\i \alpha L} \Exp{-\i t(L + \varphi^{\text{AW}}(f))} \Exp{-\i \alpha L} \Omega} \\
			&= \braket{\Exp{-\i t(L + \varphi^{\text{AW}}(f))}\Omega,  \Exp{-\i t(L + \varphi^{\text{AW}}(\Exp{\i \alpha h_0}f))}  \Omega}.
	\end{split}
\end{equation}
Using the Lie--Trotter--Kato formula, \eqref{eq:weyl} and $L\Omega=0$, for any~$t$ and~$\alpha$ in~$\rr$,
\begin{equation}\label{eq:bosons_lin_TK}
\begin{split}
	\Exp{-\i t(L + \varphi^{\text{AW}}(\Exp{\i \alpha h_0}f))}\Omega &= \lim_{n\to\infty} (\Exp{-\i \frac{t}{n}L} \Exp{-\i\frac{t}{n}\varphi^{\text{AW}}(\Exp{\i \alpha h_0} f)})^n\Omega\\
	&= \Exp{\i\theta} \lim_{n\to\infty}\Exp{-\i \varphi^{\text{AW}}(\sum_{k=1}^n \frac{t}{n} \Exp{\i \alpha h_0} \Exp{-\i \frac{k t}{n} h_0} f) }\Omega,
\end{split}
\end{equation}
where $\Exp{\i\theta}$ is a phase that is independent of~$\alpha$ and thus cancels in~\eqref{eq:bosons_lin_rewrite_cht}.

Let $\mathcal{C}:=\operatorname{linspan}\{a^*(\phi_n)\dotsb a^*(\phi_1)\Omega : n\in\nn, \phi_i\in\Dom(h_0^{-1/2}\oplus h_0^{-1/2})\}.$ This is a core for all of the field operators
$\varphi^{\text{AW}}(\sum_{k=1}^n \frac{t}{n} \Exp{\i \alpha h_0} \Exp{-\i \frac{k t}{n} h_0} f)$
and
$\varphi^{\text{AW}}\left(\i\e^{\i\alpha h_0}h_0^{-1}(\e^{-\i t h_0}-1)f\right)$,
where $x\mapsto x^{-1}(\e^{-\i tx}-1)$ is extended to~$x=0$ by continuity. From the convergence of the Riemann sums to a Riemann integral,
$$
	\lim_{n\to\infty} \varphi^{\text{AW}}\left(\sum_{k=1}^n \frac{t}{n} \Exp{\i \alpha h_0} \Exp{-\i \frac{k t}{n} h_0} f\right)\Psi=\varphi^{\text{AW}}\left(\i\e^{\i\alpha h_0}h_0^{-1}(\e^{-\i t h_0}-1)f\right)\Psi.
$$
for every $\Psi\in\mathcal{C}$. It then follows from Proposition~VIII.25 of~\cite{RS1} that
$$\srlim_{n\to\infty} \varphi^{\text{AW}}\left(\sum_{k=1}^n \frac{t}{n} \Exp{\i \alpha h_0} \Exp{-\i \frac{k t}{n} h_0} f\right)=\varphi^{\text{AW}}\left(\i\e^{\i\alpha h_0}h_0^{-1}(\e^{-\i t h_0}-1)f\right).$$
Hence, from \eqref{eq:bosons_lin_TK},
\begin{align*}
\Exp{-\i t(L + \varphi^{\text{AW}}(\Exp{\i \alpha h_0}f))}\Omega &=\Exp{\i\theta} \Exp{-\i\varphi^{\text{AW}}\left(\i\e^{\i\alpha h_0}h_0^{-1}(\e^{-\i t h_0}-1)f\right)}\Omega.
\end{align*}
Setting $f_t:=-\i h_0^{-1}(\e^{-\i t h_0}-1)f$, from this equality and \eqref{eq:bosons_lin_rewrite_cht}, relations \eqref{eq:weyl}  and  \eqref{eq:weyl-eval}, imply
	\begin{align*}
		\cht(\alpha) &= \braket{
					W^{AW}(f_t) \Omega,
					W^{AW}(\e^{\i\alpha h_0}f_t) \Omega} \\
					&=\braket{\Omega, W^{AW}((\e^{\i\alpha h_0}-1)f_t)\Omega}\e^{\frac{\i}{2}\Im(\braket{f_t,(\e^{\i\alpha h_0}-1)f_t})}\\
				&=\Exp{\frac12\braket{(1+\rho)^{1/2}f_t,(\e^{\i\alpha h_0}-1)(1+\rho)^{1/2}f_t}+\frac12\braket{\rho^{1/2}f_t,(\e^{-\i\alpha h_0}-1)\rho^{1/2}f_t}}.
	\end{align*}
	Writing explicitly the inner products as integrals with respect to~$\d e$, using $1+\rho=(1-\e^{-\beta h_0})^{-1}$ and a change of variable $e\to -e$ in the second inner product,
$$
   \cht(\alpha)=\exp\left(\int_\rr (\e^{\i \alpha e} -1)\d\nu_t(e)\right).
$$
L\'evy--Khintchine's canonical representation of infinitely divisible distributions yields the proposition.
\end{proof}

\begin{proof}[Proof of Theorem \ref{thm:bosons-lin-equiv-moments}]
The implication \ref{it:uniform_bound_moments_bosons_lin}$\implies$\ref{it:integrable_moments_bosons_lin} is obvious.

Since Proposition \ref{thm:vanHove_Poisson} implies $\P_t$ is the measure of an inhomogeneous Poisson process with intensity $\d \nu_t(e)$, its $2n+2^{\text{nd}}$ cumulant is given by $\int_\rr e^{2n+2}\d \nu_t(e)$. Then \ref{it:an_fun_bosons_lin} implies this cumulant is uniformly bounded and the equivalence between the existence of a uniform bound on the $2n+2^{\text{nd}}$ cumulant and $2n+2^{\text{nd}}$ moment yields \ref{it:uniform_bound_moments_bosons_lin}.

For the implication \ref{it:integrable_moments_bosons_lin}$\implies$\ref{it:diff_fun_bosons_lin}, we first show that \ref{it:integrable_moments_bosons_lin} implies the $2n+2^{\text{nd}}$ cumulant of $\P_t$ is integrable over $(t_1,t_2)$. The $2n+2^{\text{nd}}$ cumulant of $\P_t$ is a polynomial in $\{\ee_t(\Delta Q^p)\}_{p=0}^n$ where in each monomial the product of the moments $\prod_{i=1}^k\ee_t(\Delta Q^p_i)$ is such that $\sum_{i=1}^k p_i=2n+2$. Using Jensen's inequality on each moment in the product,
$$\prod_{i=1}^k\ee_t(\Delta Q^{p_i})\leq \ee_t(\Delta Q^{2n+2})^{\frac{1}{2n+2}\sum_{i=1}^kp_i}=\ee_t(\Delta Q^{2n+2}).$$
Hence \ref{it:integrable_moments_bosons_lin} implies the $2n+2^{\text{nd}}$ cumulant of $\P_t$ is integrable over $(t_1,t_2)$.

Since the $2n+2^{\text{nd}}$ cumulant of $\P_t$ is given by $\int_\rr e^{2n+2}\d \nu_t(e)$,
$$\int_{t_1}^{t_2}\int_\rr e^{2n+2}\d \nu_t(e)\d t<\infty.$$
By Fubini's Theorem,
$$\int_\rr \left(\int_{t_1}^{t_2}(1-\cos(et))\d t\right)e^{2n}|\rho(e)||f(|e|)|^2\d e<\infty.$$
Since $e\mapsto\int_{t_1}^{t_2}(1-\cos(et))\d t=(t_2-t_1)(1-\frac{\sin(e t_2)-\sin(e t_1)}{e(t_2-t_1)})$ is lower bounded away from $0$ on $\rr\setminus[-1,1]$,
$$\int_{1}^{\infty} e^{2n}\rho(e)|f(e)|^2\d e<\infty.$$
Since $\rho(e)\geq 1$ for $e\geq 1$,
$$\int_{1}^\infty e^{2n}|f(e)|^2\d e<\infty.$$
Finally $f\in L^2(\rr_+)$ implies $\int_{0}^1e^{2n}|f(e)|^2\d e\leq \int_{0}^1|f(e)|^2\d e$. Hence
$$\int_{\rr_+}e^{2n}|f(e)|^2\d e<\infty.
 \eqno\qedhere
$$
\end{proof}
\begin{proof}[Theorem \ref{thm:bosons-lin-equiv-an}]
The implication \ref{it:unif_bound_fourier_bosons_lin}$\implies$\ref{it:integrable_fourier_bosons_lin} is obvious.

From Proposition \ref{thm:vanHove_Poisson}, the characteristic function of $\P_t$ is,
$$\cht(\alpha)=\exp\left(\int_\rr (\e^{\i e\alpha}-1)\d\nu_t(e)\right).$$
Assumption $f\in \Dom \e^{\frac 12\gamma\hat e}\cap \Dom \hat e^{-1/2}$ implies $f\in\Dom \hat e^{-1/2}\e^{\frac12 \gamma\hat e}$. Hence, since the functions $e\mapsto |1-\cos(et)|/e^2$ and $e\mapsto |e|/|1-\e^{-\beta e}|$ are both bounded in $e\in\rr$, \ref{it:an_fun_bosons_lin} and the expression of $\d\nu_t$ imply $\sup_{t\in\rr}\int_\rr \e^{\gamma |e|}\d\nu_t(e)<\infty$. The inequality $\int_\rr \e^{\gamma |e|}\d\nu_t(e)\geq \int_\rr \e^{\pm\gamma e}\d\nu_t(e)$ and the continuity of $x\mapsto \Exp{x}$ imply
$$\sup_{t\in\rr}\cht(\pm\i\gamma)<\infty.$$
The inequality $\e^{|x|}\leq \e^x+\e^{-x}$ for $x\in\rr$ yields \ref{it:an_fun_bosons_lin} $\implies$ \ref{it:unif_bound_fourier_bosons_lin}.

Using the $\e^{|x|}\geq \frac12(\e^x+\e^{-x})$ and Jensen's inequality \ref{it:integrable_fourier_bosons_lin} implies, $t\mapsto \log \cht(\pm\i\gamma)$ are both integrable over $(t_1,t_2)$. Repeating the same steps as in the proof of Theorem~\ref{thm:bosons-lin-equiv-moments},
$$\int_0^{\infty}\e^{\gamma e}|f(e)|^2\d e<\infty,$$
and we conclude that \ref{it:integrable_fourier_bosons_lin} implies \ref{it:an_fun_bosons_lin}.
\end{proof}
\begin{remark}
From these last two proofs, one can understand why, in the second statements of Theorems~\ref{thm:fermions-equiv-moments}, \ref{thm:bosons-quad-equiv-moments}, \ref{thm:bosons-lin-equiv-moments} and~\ref{thm:bosons-lin-equiv-an}, the integrability over an interval~$(t_1,t_2)$ cannot be replaced by finiteness for a~$t\in\rr$.

Let
$$f_n:e\mapsto\begin{cases}\frac{1}{\lceil e\rceil^{n+1}}(\lceil e\rceil - e -\i \lceil e\rceil^{-1})^{-1}&\mbox{if }e\geq 1\\ \i e&\mbox{if }0\leq e<1.\end{cases}$$
For $n\in\nn\setminus\{0\}$, $f_n\in \Dom \hat e^{-1/2}\cap \hat e$. Integrating between two consecutive integers $N>1$,
$$\int_{N-1}^{N} e^{2n}|f_n(e)|^2\d e\geq \left(\frac{(N-1)}{N}\right)^{2n}\frac{\operatorname{arctan}N}{N}.$$
Since $\lim_{N\to\infty} \operatorname{arctan}N= \tfrac{\pi}{2}$, $\lim_{N\to\infty} \tfrac{N-1}{N}=1$ and $\sum_{N=2}^\infty \tfrac 1N=\infty$, $f_n\not\in \Dom \hat e^{n}$ but
$$\int_{N-1}^{N}e^{2n}(1-\cos(e2\pi))|f_n(e)|^2\d e\leq \frac1{N^2}\int_0^{1}\frac{1-\cos(x2\pi)}{x^2+N^{-2}}\ dx\leq \frac{2\pi^2}{N^2},$$
where the last inequality follows from $\frac{1-\cos(2x)}{x^2}\leq 2$. Then from $\sum_{N=2}^\infty\tfrac{1}{N^2}<\infty$ and Theorem \ref{thm:vanHove_Poisson}, $\ee_{2\pi}(\Delta Q^{2n+2})$ is finite even if $f\not\in \Dom \hat e^n$.

We can similarly construct a $f_\gamma\in \Dom \hat e^{-1/2}\cap \hat e$ such that $\ee_{2\pi}(\e^{\gamma|\Delta Q|})$ is finite but $f_\gamma\not\in \Dom \e^{\frac12\gamma \hat e}$.
\end{remark}

\appendix

\section{Self-adjointness}\label{app:self-adjoint}

The following two technical lemmas ensure mathematical soundness of the construction of the perturbed dynamics~$(\tau_V^t)_{t \in \rr}$ and of the measure~$\P_t$ in our models where the perturbation is unbounded~\cite{DJP}. More detailed proofs are provided in~\cite{Renaud-th}.

\begin{lemma}\label{lemma_selfa_impurity}
Let $L+V$  the operator defined in \eqref{eq:def-L-wwA} for the quantum open harmonic oscillator of Section~\ref{sec:model-bosons}. Assume $f\in \Dom(\hat{e})\cap\Dom(\hat{e}^{-1/2})$.
Then $L+V$ is essentially self-adjoint on $\Dom L \cap \Dom V$.
\end{lemma}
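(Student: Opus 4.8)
The plan is to invoke the Glimm--Jaffe--Nelson commutator theorem (see, e.g., \cite[\S X.5]{RS2}) with the positive reference operator
$$
	\Lambda := \dG(h_0\oplus h_0) + (N+\one)^2, \qquad N := \dG(\one),
$$
on $\Gs(\h\oplus\h)$. Since $f\in\Dom\hat e^{-1/2}$ ensures $\f\in\Dom\rho^{1/2}$ (while $\ones\in\Dom\rho^{1/2}$ trivially), the operator $V^{\text{AW}}$ is well-defined and symmetric on the dense subspace $\mathcal{D}$ of finite-particle vectors all of whose one-particle components lie in (the image under the natural identification of) $\coinf(\rr_+)$. The operator $\Lambda$ is positive, self-adjoint, bounded below by $\one$, and $\mathcal{D}$ is a core for it; moreover $\mathcal{D}\subseteq\Dom\Lambda\subseteq\Dom L\cap\Dom V^{\text{AW}}$, the last inclusion because $\pm L\leq\dG(h_0\oplus h_0)\leq\Lambda$ yields $\Dom\Lambda\subseteq\Dom L$ by the spectral calculus while $V^{\text{AW}}$ is $N$-bounded so that $\Dom N^2\subseteq\Dom V^{\text{AW}}$. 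Once essential self-adjointness of $L+V^{\text{AW}}$ on $\mathcal{D}$ is established, the claim on $\Dom L\cap\Dom V^{\text{AW}}$ follows from the elementary fact that a symmetric extension of an essentially self-adjoint operator is itself essentially self-adjoint with the same closure. It thus remains to verify, for $\phi\in\mathcal{D}$, the form bound $|\braket{\phi,(L+V^{\text{AW}})\phi}|\leq c\braket{\phi,\Lambda\phi}$ and the commutator bound $|\braket{(L+V^{\text{AW}})\phi,\Lambda\phi}-\braket{\Lambda\phi,(L+V^{\text{AW}})\phi}|\leq c\braket{\phi,\Lambda\phi}$.

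For the form bound, $\pm L\leq\dG(h_0\oplus h_0)\leq\Lambda$ settles the $L$ part. Expanding ${a^{\text{AW}}}^\sharp(\ones)$ and ${a^{\text{AW}}}^\sharp(\f)$ as in \eqref{eq:AWo-rep} into ordinary creation and annihilation operators introduces one-particle form factors among $\sqrt{1+\rho}\,\ones\oplus 0$, $0\oplus\sqrt{\rho}\,\overline{\ones}$, $\sqrt{1+\rho}\,\f\oplus 0$ and $0\oplus\sqrt{\rho}\,\overline{\f}$; since $\rho(e)\sim(\beta e)^{-1}$ near $e=0$ and $\rho$ decays exponentially at infinity, the hypothesis $f\in\Dom\hat e^{-1/2}$ makes all four of these genuine finite-norm vectors of $\h\oplus\h$, so that $V^{\text{AW}}$ is a finite sum of monomials $a^\sharp(g)a^\sharp(g')$ with $\|g\|,\|g'\|<\infty$. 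The standard estimate $\|a^\sharp(g)\psi\|\leq\|g\|\,\|(N+\one)^{1/2}\psi\|$ applied twice gives $\|V^{\text{AW}}\phi\|\leq c\|(N+\one)\phi\|$, whence, using $N+\one\geq\one$,
$$
	|\braket{\phi,V^{\text{AW}}\phi}|\leq\|\phi\|\,\|V^{\text{AW}}\phi\|\leq c\|\phi\|\,\|(N+\one)\phi\|\leq c\|(N+\one)\phi\|^2\leq c\braket{\phi,\Lambda\phi}.
$$
This is precisely why the square $(N+\one)^2$, rather than $N+\one$, must appear in $\Lambda$: a bilinear expression in $a^\sharp$ is $N$-bounded but not $N$-form-bounded.

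For the commutator bound, observe first that $[L,\Lambda]$ vanishes on $\mathcal{D}$, since $\dG$'s of commuting one-particle operators commute and $[L,N]=0$; thus $L$ contributes nothing and only $[V^{\text{AW}},\dG(h_0\oplus h_0)]+[V^{\text{AW}},(N+\one)^2]$ need be controlled on $\mathcal{D}$. The commutator of a monomial $a^\sharp(g)a^\sharp(g')$ with $(N+\one)^2$ equals that monomial times an operator affine in $N$ (the monomial shifts particle number by a fixed $k\in\{-2,0,2\}$), and the resulting terms are bounded by $c\|(N+\one)\phi\|^2\leq c\braket{\phi,\Lambda\phi}$ using only $f\in\Dom\hat e^{-1/2}$. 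The commutator with $\dG(h_0\oplus h_0)$ replaces, in each monomial, one form factor $g$ by $(h_0\oplus h_0)g$, and here the crucial point is that $(h_0\oplus h_0)(\sqrt{1+\rho}\,\f\oplus 0)=\hat e\sqrt{1+\rho}\,f\oplus 0$ has finite norm \emph{precisely because} $f\in\Dom\hat e$ (the other weighted form factors being finite-norm either trivially, as for $\ones$, or thanks to the exponential decay of $\rho$, as for $0\oplus\hat e\sqrt{\rho}\,\overline{f}$); the ensuing terms are then again finite sums of monomials in $a^\sharp$ with finite-norm form factors, hence bounded by $c\,\|\phi\|\,\|(N+\one)\phi\|\leq c\braket{\phi,\Lambda\phi}$. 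Both hypotheses of the commutator theorem being in force, the conclusion follows.

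The main obstacle is organisational rather than conceptual: one must keep track of the monomials produced by expanding $V^{\text{AW}}$ and of the further monomials produced by the two commutators, and for each weighted form factor pin down which of $f\in\Dom\hat e^{-1/2}$, $f\in\Dom\hat e$, or merely $f\in L^2$ is the relevant input; one must also check carefully that $\mathcal{D}$ is simultaneously a core for $\Lambda$ and contained in $\Dom L\cap\Dom V^{\text{AW}}$, and that the commutator manipulations above are genuine operator identities on $\mathcal{D}$. A more detailed treatment is given in \cite{Renaud-th}, and self-adjointness statements of this type also fall within the general theory of \cite{DJP}.
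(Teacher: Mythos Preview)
Your proof follows essentially the same route as the paper's: both invoke the Nelson commutator theorem from \cite[\S X]{RS2} with a comparison operator built out of $\dG(h_0\oplus h_0)$ and the number operator, and both identify that $f\in\Dom\hat e^{-1/2}$ is exactly what makes the Araki--Woods form factors $\sqrt{\rho}\,\f$, $\sqrt{1+\rho}\,\f$ lie in $\h$, while $f\in\Dom\hat e$ is exactly what makes the commutator form factors $h_0\sqrt{1+\rho}\,\f$ finite.

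Two minor differences are worth recording. First, the paper uses the simpler comparison operator $\hat N=\dG(h_0\oplus h_0+\one)+\one$ and verifies the \emph{operator} bound $\|(L+V)\Psi\|\leq c\|\hat N\Psi\|$, which is the hypothesis of Theorem~X.37; your $\Lambda=\dG(h_0\oplus h_0)+(N+\one)^2$ works as well, but the $(N+\one)^2$ is unnecessary: a bilinear monomial $a^\sharp(g)a^\sharp(g')$ \emph{is} $(N+\one)$-form bounded, since $|\braket{\phi,a^\sharp(g)a^\sharp(g')\phi}|\leq\|g\|\|g'\|\,\|(N+\one)^{1/2}\phi\|^2$, so your stated justification for squaring is inaccurate (though harmless). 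Second, since Theorem~X.37 asks for an operator bound rather than the form bound you state, you should note explicitly that your estimates in fact yield the stronger $\|(L+V^{\text{AW}})\phi\|\leq c\|\Lambda\phi\|$ (this follows from $\|L\phi\|\leq\|\dG(h_0\oplus h_0)\phi\|\leq\|\Lambda\phi\|$ and $\|V^{\text{AW}}\phi\|\leq c\|(N+\one)\phi\|\leq c\|\Lambda\phi\|$, using $\Lambda\geq\one$). With these small adjustments your argument is complete and equivalent to the paper's.
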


\begin{proof}
We apply  the Nelson commutator theorem in the form of \cite[\S{X.37}]{RS2}.
Set $\hat{N}:=\d \Gamma(h_0\oplus h_0+\one)+\one$. Then it is easy to show $D=\Dom(L)\cap \Dom(V)$ is a core for $\hat{N}$. The estimates
\[
\norme{(L+V)\Psi} \leq c\norme{\hat{N}\Psi}
\]
for some $c>0$ and all $\Psi\in D$, and
\[
|((L+V)\Psi,\hat{N}\Psi)-(\hat{N}\Psi, (L+V)\Psi)|\leq d\norme{\hat{N}^{\frac{1}{2}}\Psi}^2,
\]
for some $d > 0$ and all $\Psi \in D$ follow from the well-known estimate
\[
\Big\|\Big(\prod_{i=1}^{n} a^\# (g_i)\Big) (\dG(\one)+\one)^{-\frac{n}{2}} \Big\| \leq c_n\prod_{i=1}^{n} \norme{g_i} .
\]
for $g_i\in \ch$, provided  $\norme{ \rho^{\frac{1}{2}}\f}$, $\norme{ (1+\rho)^{\frac{1}{2}}\f}$, $\norme{h_0 \rho^{\frac{1}{2}}\f}$,$\norme{ h_0(1+\rho)^{\frac{1}{2}}\f}$ are finite.
The hypothesis  $f\in \Dom(\hat{e})\cap\Dom(\hat{e}^{-1/2})$ precisely guarantees that these norms are finite.
\end{proof}

\begin{lemma}\label{lemma_selfa_linear}
Let $L+\phi^{\textnormal{AW}}(f)$  the operator defined in \eqref{eq:def-L-vH} in the van Hove model of Section~\ref{sec:model-bosons-lin}. Assume $f\in \Dom(\hat{e})\cap\Dom(\hat{e}^{-1/2})$.
Then $L+\phi^{\textnormal{AW}}(f)$ is essentially self-adjoint on $\Dom L \cap \Dom \phi^{\textnormal{AW}}(f)$.
\end{lemma}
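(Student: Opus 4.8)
The plan is to apply the Nelson commutator theorem, in the form of \cite[\S X.37]{RS2}, exactly as in the proof of Lemma~\ref{lemma_selfa_impurity}; the present case is in fact lighter because $\varphi^{\text{AW}}(f)$ is \emph{linear}, rather than quadratic, in the creation and annihilation operators. I would take as comparison operator $\hat N := \dG(h_0 \oplus h_0 + \one) + \one$ on $\Gs(\h \oplus \h)$. As in the quadratic case, one first checks that $D := \Dom L \cap \Dom \varphi^{\text{AW}}(f)$ is a core for $\hat N$: it contains the dense subspace of finite-particle vectors whose one-particle components lie in $\Dom h_0$, on which $\hat N$ is essentially self-adjoint.

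The two Nelson estimates to verify are
\[
  \norme{(L + \varphi^{\text{AW}}(f))\Psi} \leq c\,\norme{\hat N \Psi}
\]
and
\[
  |\langle (L + \varphi^{\text{AW}}(f))\Psi, \hat N \Psi\rangle - \langle \hat N \Psi, (L + \varphi^{\text{AW}}(f))\Psi\rangle| \leq d\,\norme{\hat N^{1/2}\Psi}^2
\]
for all $\Psi \in D$ and some constants $c,d > 0$. For the first one, the Liouvillean contributes $\norme{L\Psi} \leq \norme{\dG(h_0 \oplus h_0)\Psi} \leq \norme{\hat N\Psi}$, while, recalling $\varphi^{\text{AW}}(f) = \varphi(\sqrt{1+\rho}\,f \oplus \sqrt{\rho}\,\bar f)$, the field contributes a bound $\norme{\varphi^{\text{AW}}(f)\Psi} \leq (\norme{\sqrt{1+\rho}\,f} + \norme{\sqrt{\rho}\,f})\,\norme{(\dG(\one)+\one)^{1/2}\Psi}$ coming from the textbook estimate $\norme{a^\sharp(g)(\dG(\one)+\one)^{-1/2}} \leq \norme{g}$. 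Since $1+\rho = (1-\e^{-\beta h_0})^{-1}$ behaves like $(\beta \hat e)^{-1}$ near $0$ and is bounded at infinity, the norms $\norme{\sqrt{1+\rho}\,f}$ and $\norme{\sqrt{\rho}\,f}$ are finite precisely because $f \in \Dom \hat e^{-1/2}$.

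For the commutator estimate, $L$ drops out since $[L,\hat N] = 0$ (both are second quantizations of commuting one-particle operators). The remaining commutator $[\hat N, \varphi^{\text{AW}}(f)]$ is, by $[\dG(b), a^\sharp(g)] = \pm a^\sharp(bg)$, again a linear field, obtained by replacing $f$ with $\hat e f$ (up to bounded factors) in each of its two components; it is therefore controlled by $\norme{\hat N^{1/2}\,\cdot\,}^2$ as soon as $\norme{\sqrt{1+\rho}\,\hat e f}$ and $\norme{\sqrt{\rho}\,\hat e f}$ are finite. Near $e = 0$ one has $\sqrt{\rho(e)}\,e \sim \beta^{-1/2} e^{1/2}$, so this is again handled by $f \in \Dom \hat e^{-1/2}$; at large $e$ the factor $\sqrt{1+\rho}$ is bounded and $\sqrt{\rho}$ decays, so finiteness is handled by $f \in \Dom \hat e$. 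Hence the hypothesis $f \in \Dom \hat e \cap \Dom \hat e^{-1/2}$ is exactly what is needed, and the Nelson commutator theorem then yields essential self-adjointness on $D$.

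I do not expect a genuine obstacle here: the only mildly delicate point is the bookkeeping of which power of $\hat e$ (equivalently, how many factors of $h_0$) survives after commuting with $\hat N$, so as to confirm that no regularity beyond $\Dom \hat e$ at infinity and $\Dom \hat e^{-1/2}$ near the origin is required; all operator-theoretic inputs are the same Fock-space bounds already used for Lemma~\ref{lemma_selfa_impurity}, and fully detailed arguments are provided in \cite{Renaud-th}.
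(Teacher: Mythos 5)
Your proposal is correct and follows essentially the same approach as the paper: the paper's own proof is a one-line reference to Nelson's commutator theorem with the same comparison operator $\hat N = \dG(h_0 \oplus h_0 + \one) + \one$ used in Lemma~\ref{lemma_selfa_impurity}, and you fill in exactly the details that the paper elides. Your tracking of which norms ($\norme{\sqrt{1+\rho}\,f}$, $\norme{\sqrt{\rho}\,f}$, $\norme{\sqrt{1+\rho}\,\hat e f}$, $\norme{\sqrt{\rho}\,\hat e f}$) are controlled by $f \in \Dom \hat e^{-1/2}$ near the origin and by $f \in \Dom \hat e$ at infinity is correct.
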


\begin{proof} The result follows applying Nelson's commutator theorem in a similar way as in Lemma~\ref{lemma_selfa_impurity}.
\end{proof}
 \newcommand{\p}{\mathsf{p}}
\section{A note on the thermodynamic limit} \label{app:TL}
	\newcommand{\Weyl}{\mathfrak{W}}
In this appendix, we show that for the models of Section~\ref{sec:models}, $\P_t$ is the weak limit of a family of probability measures associated to a two-time measurement protocol on finite dimensional systems as described in Section~\ref{sec:Cstar-desc}. We provide the proof for bosonic models; the fermionic case only requires a straightforward adaptation of the arguments.

\medskip
Consider a positive semi-definite densely defined self-adjoint operator~$h_0$ on a separable complex Hilbert space~$\h$.
The extended system we wish to obtain in the thermodynamic limit is a quasi-free Bose gas with one-particle unperturbed Hamiltonian~$h_0$, at equilibrium at inverse temperature~$\beta > 0$, together with a perturbation of the form $$V = \tfrac{1}{\sqrt{2}}(a^*(g) + a(g))$$ with $g \in \Dom h_0^{-1/2}$ or of the form $$V = a^*(g_1)a(g_2) + a^*(g_2)a(g_1)$$
with $g_1, g_2 \in \Dom h_0^{-1/2}$. These operators are naturally affiliated to the Weyl algebra over~$\Dom h_0^{1/2}$, denoted~$\Weyl(\Dom h_0^{-1/2})$.

We work in a representation of this gas on the Araki--Woods von Neumann algebra $\mathfrak{M}^\textnormal{AW} \subset \mathcal{B}(\Gs(\h \oplus \h))$ generated by the Weyl operators
$$
   W^{\textnormal{AW}}(\phi):= W(\sqrt{\one+\rho(h_0)}\phi\oplus \sqrt{\rho(h_0)}\bar \phi)
$$
for $\phi \in \Dom h_0^{-1/2}$ and with initial state~$\braket{\Omega, {\,\cdot\,} \Omega}$ where~$\Omega \in \Gs(\h \oplus \h)$ is the vacuum vector. The Liouvillean for the unperturbed dynamics in this representation is
$$
   L := \dG(h_0 \oplus -h_0).
$$
In this representation, the creation operators affiliated to~$\Weyl(\Dom h_0^{-1/2})$ take the form
$$
    {a^{\textnormal{AW}}}^*(\phi)=a^*(\sqrt{\one+\rho(h_0)}\phi\oplus 0)+a(0\oplus \sqrt{\rho(h_0)}\bar \phi).
$$
for any $\phi\in\Dom h_0^{-1/2}$.
In this manner,~$V$ corresponds to an operator~$V^{\textnormal{AW}}$ affiliated to~$\mathfrak{M}^{\textnormal{AW}}$.

\medskip
Let~$(\p_D)_{D \in \nn}$ be a sequence of orthogonal projections on~$\h$ with $D$-dimensional range and satisfying the following properties:
\begin{enumerate}[label=(\roman*)]
\item the sequence~$(\p_D)_{D\in\nn}$ converges strongly to the identity; \item for all $D\in\nn$, $\overline{\p_D \phi}=\p_D\bar \phi$ for all $\phi \in \h$;
\item for all $D\in\nn$, $\p_Dh_0\p_D$ is positive definite;
\item there exists a core~$C\subseteq\h$ for $h_0$ such that $\lim_{D \to \infty} \p_Dh_0\p_D\phi=h_0\phi$ for all~$\phi \in C$.\label{it:s_lim_h_0}
\end{enumerate}
\noindent
With the help of this family of projectors, we define, for each~$D \in \nn$, an unperturbed Hamiltonian $H_0^{(D)} := \dG(\p_D h_0 \p_D)$. We also set $H_0^{(D,\delta)} := \dG(p_\delta(\p_D h_0 \p_D))$ for the purpose of defining an initial state below, understood in the sense of functional calculus for the real function
$$
   p_\delta : e \mapsto
      \max\{\delta,e\}.
$$
The parameter~$\delta > 0$ serves as an IR regularization.
Both $H_0^{(D,\delta)}$ and $H_0^{(D)}$ are self-adjoint operators on~$\Gs(\p_D \h)$ and, since $\p_Dh_0\p_D>0$, the operator $\Exp{-\beta H_0^{(D,\delta)}}$ is trace class.

Let $V^{(D)}$ be the same polynomial as~$V$, but where $g$ [resp. $g_1$ and $g_2$] is replaced by $\p_D g$ [resp. $\p_D g_1$ and $\p_D g_2$].

We make the following definition, in accordance with an extension of Remark~\ref{rk:finite_dim} from matrices to trace class operators.
\begin{definition}\label{def:ttm:fd}
	Let $\P_t^{(D,\delta)}$ be the probability measure of the heat variation as defined by the two-time measurement of the unperturbed Hamiltonian~$H_0^{(D)}$ with respect to the perturbation $V^{(D)}$ and the initial state
	\begin{align*}
	   \omega^{(D,\delta)} :
			\Weyl(\p_D \h) &\to \cc \\
			A &\mapsto \frac{\tr(A \Exp{-\beta H_0^{(D,\delta)}})}{\tr(\Exp{-\beta H_0^{(D,\delta)}})}.
	\end{align*}
	Namely, it is the probability measure with characteristic function
	$$\cht^{(D,\delta)}(\alpha) = \omega^{(D,\delta)}(\e^{\i\alpha\e^{\i t(H_0^{(D)}+V^{(D)})}H_0^{(D)}\e^{-\i t(H_0^{(D)}+V^{(D)})}}\e^{-\i\alpha H_0^{(D)}}).$$
\end{definition}

The algebra~$\Weyl(\p_D \h)$ together with the state~
$
   \omega^{(D,\delta)}
$
admit an Araki--Woods representation on $\Gs(\h\oplus\h)$: the vacuum~$\Omega \in \Gs(\h \oplus \h)$ is a vector representative of~$\omega^{(D,\delta)}$ and the $*$-isomorphism
$
     \pi^{\textnormal{AW}(D,\delta)} : \Weyl(\p_D \h) \to \mathfrak{M}^{\textnormal{AW}(D,\delta)}
$
is the extension by linearity of
$
     \pi^{\textnormal{AW}(D,\delta)}(W(\phi)):=W^{\textnormal{AW}(D,\delta)}(\phi)
$
for $\phi \in \p_D \h$, where
$$
   W^{\textnormal{AW}(D,\delta)}(\phi) := W(\sqrt{\one + \rho\circ p_\delta(\p_D h_0 \p_D)}\phi \oplus \sqrt{\rho\circ p_\delta(\p_D h_0 \p_D)}\,\overline\phi )
$$
and
$\rho: \rr_+\ni e\mapsto (\e^{\beta e}-1)^{-1}$.
For each~$D$ and~$\delta$, $\mathfrak{M}^{\textnormal{AW}(D,\delta)} $
is the von Neumann algebra generated by $\{W^{\textnormal{AW}(D,\delta)}(\phi):\phi\in\p_D\h\}$.
It is a subalgebra of the algebra~$\Weyl(\h \oplus \h)$ of operators on~$\Gs(\h \oplus \h)$. The Liouvillean for the dynamics implemented by~$H_0^{(D)}$ is
$$
   L^{(D)} := \dG(\p_D h_0 \p_D \oplus - \p_D h_0 \p_D).
$$

From its definition as the generator of the unitary groups $t\mapsto W(tg)$, in this representation, each creation operator~$a^*(g)$ on $\Gs(\p_D\h)$ is mapped to
$$
   a^*(\sqrt{\one + \rho\circ p_\delta(\p_D h_0 \p_D)}g \oplus 0) + a(0 \oplus \sqrt{\rho\circ p_\delta(\p_D h_0 \p_D)}\,\overline g)
$$
on~$\Gs(\h \oplus \h)$.
In this manner, the perturbation~$V^{(D)}$ on $\Gs(\p_D\h)$ is mapped to an unbounded operator~$V^{\textnormal{AW}(D,\delta)}$ affiliated to~$\mathfrak{M}^{\textnormal{AW}(D,\delta)}$.

\medskip
The following lemma allows us to prove a sufficiently strong notion of convergence of $V^{\textnormal{AW}(D,\delta)}$ towards $V^{\textnormal{AW}}$. It is a straigforward exercise in analysis on Fock space using the inequalities~\eqref{eq:bound_vector_aOmega} and $\|(\dG(\one)+\one)^{-1/2}a^\sharp(g)\|\leq \|g\|$.
\begin{lemma}\label{lem:conv_monome_on_core}
      Let $\Phi \in \operatorname{linspan}\{a^*(g_k)\dotsb a^*(g_1)\Omega: k\in\nn, g_1,\dotsb g_k\in\mathfrak{h}\}$. Then, the function
   $$\mathfrak{h}^n\ni(x_1, \dotsc, x_n) \mapsto a^{\sharp}(x_n)\dotsb a^{\sharp}(x_1)\Phi \in \Gs(\mathfrak{h})$$
   where each $\sharp$ stands for $*$ or nothing, is norm continuous for $\mathfrak{h}^n$ equipped with the product of the norm topology.
\end{lemma}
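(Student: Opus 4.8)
The plan is to reduce to a single monomial, establish a norm bound on the map that is uniform over bounded subsets of $\h^n$, and then deduce continuity from a telescoping estimate. First I would use that $(x_1,\dots,x_n)\mapsto a^\sharp(x_n)\dotsb a^\sharp(x_1)\Phi$ depends linearly on $\Phi$ and that $\Phi$ is a \emph{finite} linear combination of vectors of the form $a^*(g_k)\dotsb a^*(g_1)\Omega$; hence it suffices to treat the case where $\Phi$ equals one such monomial. In that case $\Phi$ belongs to the subspace $\cH_{\leq k}$ of $\Gs(\h)$ spanned by vectors with at most $k$ particles.

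Next I would record the elementary estimate underlying the argument. Since each $a^\sharp(y)$ raises or lowers the particle number by one, the vector $a^\sharp(y_i)\dotsb a^\sharp(y_1)\Phi$ lies in $\cH_{\leq k+i}$, on which $\dG(\one)+\one$ is bounded above by $k+i+1$. Combining this with $\norme{(\dG(\one)+\one)^{-1/2}a^\sharp(y)}\leq\norme{y}$ — equivalently $\norme{a^\sharp(y)\psi}\leq\norme{y}\,\norme{(\dG(\one)+\one)^{1/2}\psi}$ — an induction on $i$ yields
\[
	\norme{a^\sharp(y_n)\dotsb a^\sharp(y_1)\Phi}\leq C_{k,n}\,\norme{\Phi}\prod_{i=1}^n\norme{y_i},\qquad C_{k,n}:=\sqrt{(k+1)(k+2)\dotsb(k+n)},
\]
for all $y_1,\dots,y_n\in\h$. (Alternatively, one may expand $\Phi$ and apply~\eqref{eq:bound_vector_aOmega} directly to the resulting product of $n+k$ creation and annihilation operators acting on $\Omega$.)

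Finally, given a sequence $(x_1^{(j)},\dots,x_n^{(j)})$ converging to $(x_1,\dots,x_n)$ in the product norm topology on $\h^n$, I would set $M:=\sup_{i,j}\norme{x_i^{(j)}}<\infty$, so that $\norme{x_i}\leq M$ as well, and use the telescoping identity
\[
	a^\sharp(x_n^{(j)})\dotsb a^\sharp(x_1^{(j)})\Phi-a^\sharp(x_n)\dotsb a^\sharp(x_1)\Phi=\sum_{i=1}^n a^\sharp(x_n^{(j)})\dotsb a^\sharp(x_{i+1}^{(j)})\,a^\sharp(x_i^{(j)}-x_i)\,a^\sharp(x_{i-1})\dotsb a^\sharp(x_1)\Phi,
\]
which holds because $a^\sharp$ is additive in its argument (whether $\sharp$ is the linear creation operator or the anti-linear annihilation operator). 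Applying the bound above to each of the $n$ summands gives
\[
	\norme{a^\sharp(x_n^{(j)})\dotsb a^\sharp(x_1^{(j)})\Phi-a^\sharp(x_n)\dotsb a^\sharp(x_1)\Phi}\leq C_{k,n}\,\norme{\Phi}\,M^{n-1}\sum_{i=1}^n\norme{x_i^{(j)}-x_i},
\]
which tends to $0$ as $j\to\infty$; since the product norm topology on $\h^n$ is metrizable, sequential continuity is continuity, proving the lemma. There is no genuine obstacle: the only points requiring a little care are the bookkeeping of particle-number sectors in the uniform estimate, and the observation that the telescoping identity only needs additivity of $a^\sharp$ in its argument rather than linearity, since annihilation operators are anti-linear.
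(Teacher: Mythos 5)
Your proof is correct and follows exactly the route the paper indicates: the bound you establish by tracking particle-number sectors is an application of the inequality $\|(\dG(\one)+\one)^{-1/2}a^\sharp(g)\|\leq\|g\|$, you correctly observe that the alternative is to invoke~\eqref{eq:bound_vector_aOmega} directly, and the telescoping estimate with additivity of $a^\sharp$ is the standard way to convert such a bound into norm continuity. The constant $C_{k,n}=\sqrt{(k+1)\dotsb(k+n)}$ is correct, and your remark that additivity (rather than linearity) suffices for the telescoping identity is exactly the right point of care.
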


\begin{proposition}\label{prop:conv-on-core}
   Let $\mathcal C:=\operatorname{linspan}\{a^*(g_n)\dotsb a^*(g_1)\Omega: n\in\nn, g_1,\dotsc,g_n\in C\}$ with $C$ a core for $h_0$ verifying \ref{it:s_lim_h_0}. The space $\mathcal C$ is a common core for $L^{(D)}, L, V^{\textnormal{AW} (D,\delta)}$ and $V$. Moreover, for all $\Phi \in \mathcal C$,
	$$
       \lim_{D \to \infty} L^{(D)} \Phi = L\Phi
   $$
	and
   $$
      \lim_{\delta \downarrow 0} \lim_{D \to \infty} V^{\textnormal{AW} (D,\delta)} \Phi = V^{\textnormal{AW}}\Phi.
   $$
\end{proposition}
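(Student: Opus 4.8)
The plan is to establish the three assertions in order: first that $\mathcal C$ is a common core, then the convergence of the Liouvilleans, and finally the (iterated) convergence of the perturbations. The core statement for $L$ and $L^{(D)}$ is standard: $\mathcal C$ is contained in $\cH_{\rm fin}$ and is invariant under the number operator; since $h_0$ (resp.\ $\p_Dh_0\p_D$) is self-adjoint and $C$ is a core for $h_0$, an elementary argument on each $n$-particle sector (using that finite symmetric tensor products of a core form a core for the second quantization) shows $\mathcal C$ is a core for $\dG(h_0\oplus -h_0)=L$ and for $L^{(D)}$. For $V^\textnormal{AW}$ and $V^{\textnormal{AW}(D,\delta)}$, which are degree-one or degree-two polynomials in creation/annihilation operators smeared against vectors in $\Dom h_0^{-1/2}$ (and $C\subseteq\Dom h_0^{-1/2}$ since $h_0\geq 0$ need not guarantee this — here one uses that $C$ may be taken inside $\Dom h_0^{-1/2}$, or invokes Lemma~\ref{lemma_selfa_impurity} / Lemma~\ref{lemma_selfa_linear} which already give essential self-adjointness on $\Dom L\cap\Dom V$ with $\cH_{\rm fin}$-type cores), one again checks invariance of $\mathcal C$ and density.

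For the convergence $L^{(D)}\Phi\to L\Phi$: it suffices to treat $\Phi=a^*(g_n)\dotsb a^*(g_1)\Omega$ with $g_j\in C$. Using the standard pull-through formula, $L^{(D)}\Phi$ is a finite sum of terms $a^*(g_n)\dotsb a^*(\p_Dh_0\p_D\, g_j\oplus -\ldots)\dotsb a^*(g_1)\Omega$, and similarly for $L\Phi$ with $h_0$ in place of $\p_Dh_0\p_D$. Since $g_j\in C$, property~\ref{it:s_lim_h_0} gives $\p_Dh_0\p_D g_j\to h_0g_j$ in $\h$ for each $j$, and Lemma~\ref{lem:conv_monome_on_core} (joint norm-continuity of the monomial map) then yields convergence term by term. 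Summing over $j$ gives $L^{(D)}\Phi\to L\Phi$.

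For the perturbations, fix $\delta>0$ and again reduce to $\Phi=a^*(g_n)\dotsb a^*(g_1)\Omega$. In the Araki--Woods picture, $V^{\textnormal{AW}(D,\delta)}$ is the same polynomial as $V^\textnormal{AW}$ but with each smearing vector $\psi$ replaced by $\sqrt{\one+\rho\circ p_\delta(\p_Dh_0\p_D)}\,\p_D\psi\oplus 0$ or $0\oplus\sqrt{\rho\circ p_\delta(\p_Dh_0\p_D)}\,\overline{\p_D\psi}$. The key sub-step is: for $\psi\in\{g,\ones,\f\}$ (the IR-regular form factors), one has, as $D\to\infty$,
\begin{equation*}
   \sqrt{\one+\rho\circ p_\delta(\p_Dh_0\p_D)}\,\p_D\psi \to \sqrt{\one+\rho\circ p_\delta(h_0)}\,\psi
\end{equation*}
in $\h$, and likewise for the $\sqrt{\rho\circ p_\delta}$ factor, because $p_\delta$ makes $e\mapsto\sqrt{1+\rho(p_\delta(e))}$ a bounded continuous function of $e\geq 0$, $\p_Dh_0\p_D\to h_0$ in the strong resolvent sense (a consequence of properties (i)--(iv) of the projections), hence $F(\p_Dh_0\p_D)\to F(h_0)$ strongly for bounded continuous $F$, and $\p_D\psi\to\psi$. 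Then Lemma~\ref{lem:conv_monome_on_core} upgrades this to $V^{\textnormal{AW}(D,\delta)}\Phi\to V^{\textnormal{AW}(\delta)}\Phi$, where $V^{\textnormal{AW}(\delta)}$ denotes the polynomial with $p_\delta(h_0)$ in place of $h_0$ inside $\rho$. Finally, letting $\delta\downarrow 0$: since $f=\psi_f\in\Dom h_0^{-1/2}$, one has $\sqrt{\rho(p_\delta(h_0))}\,\f\to\sqrt{\rho(h_0)}\,\f$ and $\sqrt{1+\rho(p_\delta(h_0))}\,\f\to\sqrt{1+\rho(h_0)}\,\f$ in $\h$ by dominated convergence (using $\rho(e)\sim(\beta e)^{-1}$ near $0$ and $\f\in\Dom h_0^{-1/2}$), and $\ones$ causes no IR trouble; Lemma~\ref{lem:conv_monome_on_core} once more gives $V^{\textnormal{AW}(\delta)}\Phi\to V^\textnormal{AW}\Phi$. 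Chaining the two limits proves the claim.

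The main obstacle is the \emph{interchange and control of the two limits} in the form-factor vectors: one must check that $F_\delta(\p_Dh_0\p_D)\p_D\psi$ converges in norm, not merely weakly, which rests on the boundedness and continuity of $F_\delta$ (guaranteed precisely by the IR cutoff $p_\delta$) together with strong resolvent convergence of $\p_Dh_0\p_D$; and then that removing the cutoff is legitimate, which is exactly where the hypothesis $g,f\in\Dom h_0^{-1/2}$ enters to tame the $e\to 0$ singularity of $\rho$. Everything else is routine Fock-space bookkeeping already packaged in Lemma~\ref{lem:conv_monome_on_core} and the estimate $\|(\dG(\one)+\one)^{-1/2}a^\sharp(g)\|\leq\|g\|$.
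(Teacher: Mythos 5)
Your proof is correct and follows essentially the same route as the paper: you reduce to monomials, establish $\p_D h_0\p_D\to h_0$ in the strong resolvent sense (paper cites Reed--Simon VIII.25 via property~(iv)), use boundedness/continuity of $\rho\circ p_\delta$ to handle $D\to\infty$, remove the IR cutoff as $\delta\downarrow0$ (the paper invokes monotone rather than your dominated convergence, but both apply since $\sqrt{\rho\circ p_\delta}$ increases to $\sqrt\rho$), and lift to Fock space via Lemma~\ref{lem:conv_monome_on_core}. The only stylistic difference is that you unpack a few steps the paper leaves implicit (e.g.\ the core assertion and the $\p_D\psi\to\psi$ handoff), and your parenthetical worry about $C\subseteq\Dom h_0^{-1/2}$ is not actually an issue since the test vectors $g_j$ only need to lie in a core for $h_0$, the IR condition being required only of the fixed form factors.
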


\begin{proof}
	The fact that $\mathcal C$ is a common core is direct from the definitions of the operators and the fact that $C$ is a core for $h_0$.

	The first convergence follows from a direct computation in Fock space and \ref{it:s_lim_h_0}.

   For the second limit, let $g \in \Dom h_0^{-1/2}$. By Proposition~VIII.25 of~\cite{RS1} assumption \ref{it:s_lim_h_0} implies the sequence $\p_Dh_0 \p_D$ converges in the strong resolvant sense to $h_0$. Then, for any $\delta > 0$, since $e\mapsto \rho\circ p_\delta(e)$ is continuous and bounded,
   $$
      \lim_{D \to \infty} \sqrt{\rho \circ p_\delta(\p_Dh_0\p_D)}\, \p_D \overline g = \sqrt{\rho \circ p_\delta(h_0)}\, \overline g.
   $$
   By Lebesgue's monotone convergence Theorem,
   $$
      \lim_{\delta \downarrow 0} \sqrt{\rho \circ p_\delta(h_0)}\, \overline g = \sqrt{\rho(h_0)}\, \overline g = \sqrt{\rho} \,\overline g.
   $$
   The same arguments lead to
   $$\lim_{\delta \downarrow 0}\lim_{D \to \infty} \sqrt{1 + \rho \circ p_\delta(\p_Dh_0\p_D)}\, \p_D g=\sqrt{1 + \rho(h_0)}\,g.$$
   The second convergence therefore follows from Lemma \ref{lem:conv_monome_on_core} with $n=1$ or $n=2$.
\end{proof}

For the proof of the self-ajointness of the different operators, we refer the reader to Appendix~\ref{app:self-adjoint}.
\begin{proposition}
	With $\P_t^{(D,\delta)}$ as in Definition~\ref{def:ttm:fd} and~$\P_t$ the spectral measure of
	$$L+V^{\textnormal{AW}}-\e^{\i t (L+V^{\textnormal{AW}})}V^{\textnormal{AW}}\e^{-\i t (L+V^{\textnormal{AW}})}$$
	with respect to $\Omega$,
	$$
		\wlim_{\delta\downarrow 0} \wlim_{D \to \infty} \P_t^{(D,\delta)} = \P_t.
	$$
\end{proposition}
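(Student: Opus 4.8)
The plan is to deduce the weak convergence from pointwise convergence of characteristic functions. Since $\cht$ is itself a characteristic function, hence continuous at $0$, by the L\'evy continuity theorem it suffices to prove, for every fixed $\alpha\in\rr$, that $\lim_{\delta\downarrow 0}\lim_{D\to\infty}\cht^{(D,\delta)}(\alpha)=\cht(\alpha)$, read stagewise: first $\lim_{D\to\infty}\cht^{(D,\delta)}(\alpha)=\cht^{(\delta)}(\alpha)$, the characteristic function of the $W^*$-dynamical system keeping the full one-particle Hamiltonian~$h_0$ but with the Bose--Einstein density~$\rho$ replaced by the bounded IR-regularised density $\rho\circ p_\delta(h_0)$; then $\lim_{\delta\downarrow 0}\cht^{(\delta)}(\alpha)=\cht(\alpha)$. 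I would begin by writing all three characteristic functions in the Araki--Woods representation on $\Gs(\h\oplus\h)$: by the extension of Remark~\ref{rk:finite_dim} from matrices to trace-class Gibbs densities, the operator identity recalled just before Definition~\ref{def:FCS-vanHove} (resp. Definition~\ref{def:FCS-OH}), and $L^{(D)}\Omega=0$, one has
$$
   \cht^{(D,\delta)}(\alpha)=\braket{\e^{-\i t(L^{(D)}+V^{\textnormal{AW}(D,\delta)})}\Omega,\ \e^{\i\alpha L^{(D)}}\,\e^{-\i t(L^{(D)}+V^{\textnormal{AW}(D,\delta)})}\Omega},
$$
together with the analogous formulas for $\cht^{(\delta)}$ (with $L^{(D)}\to L$, $V^{\textnormal{AW}(D,\delta)}\to V^{\textnormal{AW}(\delta)}$) and for $\cht$ (with $V^{\textnormal{AW}(\delta)}\to V^{\textnormal{AW}}$). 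From here the whole proof is about convergence of unitary groups.

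The heart of the argument is strong resolvent convergence of the generators. Proposition~\ref{prop:conv-on-core} hands us a space $\mathcal C$ that is a common core for $L^{(D)}$, $L$, $V^{\textnormal{AW}(D,\delta)}$ and $V^{\textnormal{AW}}$, on which $L^{(D)}\Phi\to L\Phi$ and $V^{\textnormal{AW}(D,\delta)}\Phi\to V^{\textnormal{AW}(\delta)}\Phi$ as $D\to\infty$ (the inner limit isolated inside its proof), and $V^{\textnormal{AW}(\delta)}\Phi\to V^{\textnormal{AW}}\Phi$ as $\delta\downarrow 0$, for every $\Phi\in\mathcal C$. Arguing as in Appendix~\ref{app:self-adjoint}, Nelson's commutator theorem shows that $L^{(D)}+V^{\textnormal{AW}(D,\delta)}$, $L+V^{\textnormal{AW}(\delta)}$ and $L+V^{\textnormal{AW}}$ are essentially self-adjoint on $\mathcal C$. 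Then, since convergence on a common core forces strong resolvent convergence of self-adjoint operators (cf.~\cite[Thm.~VIII.25]{RS1}), I get $L^{(D)}\to L$ and $L^{(D)}+V^{\textnormal{AW}(D,\delta)}\to L+V^{\textnormal{AW}(\delta)}$ in the strong resolvent sense as $D\to\infty$, and $L+V^{\textnormal{AW}(\delta)}\to L+V^{\textnormal{AW}}$ in the strong resolvent sense as $\delta\downarrow 0$.

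Next, by the Trotter--Kato theorem~\cite{RS1}, strong resolvent convergence upgrades to strong convergence of the associated unitary groups at each fixed value of the parameter. So, as $D\to\infty$, $\e^{-\i t(L^{(D)}+V^{\textnormal{AW}(D,\delta)})}\Omega\to\e^{-\i t(L+V^{\textnormal{AW}(\delta)})}\Omega$ and $\e^{\i\alpha L^{(D)}}\to\e^{\i\alpha L}$ strongly; since every operator in sight is unitary, a routine $\varepsilon/3$ estimate then shows that $\e^{\i\alpha L^{(D)}}\e^{-\i t(L^{(D)}+V^{\textnormal{AW}(D,\delta)})}\Omega$ converges strongly to $\e^{\i\alpha L}\e^{-\i t(L+V^{\textnormal{AW}(\delta)})}\Omega$. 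Taking the inner product of two strongly convergent families of bounded norm gives $\lim_{D\to\infty}\cht^{(D,\delta)}(\alpha)=\cht^{(\delta)}(\alpha)$; this $\cht^{(\delta)}$ is a genuine characteristic function, being the Fourier transform of the spectral measure of a self-adjoint operator at $\Omega$, so the inner weak limit $\wlim_{D\to\infty}\P_t^{(D,\delta)}=\P_t^{(\delta)}$ follows. Running the identical argument with $\delta\downarrow 0$ in place of $D\to\infty$ yields $\lim_{\delta\downarrow 0}\cht^{(\delta)}(\alpha)=\cht(\alpha)$, hence $\wlim_{\delta\downarrow 0}\P_t^{(\delta)}=\P_t$, and therefore $\wlim_{\delta\downarrow 0}\wlim_{D\to\infty}\P_t^{(D,\delta)}=\P_t$.

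The step I expect to be the main obstacle is, for the \emph{sum} $L^{(D)}+V^{\textnormal{AW}(D,\delta)}$, the passage from convergence of the two summands on $\mathcal C$ to strong resolvent convergence of the sum: this is legitimate only because $\mathcal C$ is simultaneously a core for each of the three self-adjoint operators involved. Securing this requires re-running the Nelson commutator estimate of Appendix~\ref{app:self-adjoint} on the domain $\mathcal C$ (which is also a core for the comparison operator $\dG(\cdot)+\one$), with constants that depend only on $\|\rho^{1/2}\f\|$, $\|(1+\rho)^{1/2}\f\|$, $\|h_0\rho^{1/2}\f\|$, $\|h_0(1+\rho)^{1/2}\f\|$ and their $D$- and $\delta$-regularised analogues, all finite under $f\in\Dom\hat e\cap\Dom\hat e^{-1/2}$ and, crucially, uniformly bounded in the parameters. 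Everything else reduces to bookkeeping in Fock space, using the bounds $\|a^\sharp(\phi_k)\dotsb a^\sharp(\phi_1)\Omega\|\leq\sqrt{(k+1)!}\,\prod_i\|\phi_i\|$ and $\|(\dG(\one)+\one)^{-1/2}a^\sharp(g)\|\leq\|g\|$ already used in the main text.
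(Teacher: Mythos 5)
Your proposal is correct and follows the paper's strategy: L\'evy's continuity theorem, the Araki--Woods representation for the characteristic functions, strong resolvent convergence deduced from convergence on a common core via~\cite[Prop.~VIII.25]{RS1}, and convergence of unitary groups combined with a uniform-boundedness argument for the composed families. The differences are matters of organisation and care, and both are to your credit. First, you explicitly split the iterated limit by introducing the intermediate system with density $\rho\circ p_\delta(h_0)$, producing a genuine characteristic function $\cht^{(\delta)}$ and a well-defined $\P_t^{(\delta)}$ at the inner stage; the paper states the iterated strong resolvent limit in one breath, which, read literally, is not directly what~\cite[Prop.~VIII.25]{RS1} gives and really does require the stagewise argument you spell out. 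Second, you flag the subtle point that Proposition~\ref{prop:conv-on-core} only certifies $\mathcal C$ as a common core for the \emph{summands} $L^{(D)}, L, V^{\textnormal{AW}(D,\delta)}, V^{\textnormal{AW}}$, whereas what is needed to invoke~\cite[Prop.~VIII.25]{RS1} is that $\mathcal C$ be a common core for the \emph{sums}; the paper passes over this silently, referring to the self-adjointness results of Appendix~\ref{app:self-adjoint}, while you correctly observe that Nelson's commutator theorem, applied with $\mathcal C$ as the chosen core for the comparison operator $\hat N$, yields essential self-adjointness of each sum on $\mathcal C$ itself, which closes the gap. In short, your route and the paper's are the same, but your write-up makes explicit two steps the paper leaves implicit.
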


\begin{proof}
	By extension of Remark~\ref{rk:finite_dim} to the trace-class operator $\Exp{-\beta H_0^{(D,\delta)}}$, the characteristic function of~$\P_t^{(D,\delta)}$ can be expressed in the standard GNS representation, which is unitarily equivalent to the Araki--Woods representation. Hence,
	\begin{align*}
		\mathcal{E}_t^{(D,\delta)}(\alpha) = \braket{\Exp{-\i t (L^{(D)} + V^{\textnormal{AW}(D,\delta)})}\Omega,  \Exp{\i\alpha L^{(D)}} \Exp{-\i t (L^{(D)} + V^{\textnormal{AW}(D,\delta)})}  \Omega}
	\end{align*}
	By Proposition~\ref{prop:conv-on-core} above and Proposition~VIII.25 of~\cite{RS1}, $L^{(D)} + V^{\textnormal{AW}(D,\delta)}$ converges in the strong resolvent sense to $L + V^{\textnormal{AW}}$. Hence,
	$$
		\lim_{\delta \downarrow 0} \lim_{D \to \infty} \Exp{-\i t (L^{(D)} + V^{\textnormal{AW}(D,\delta)})}\Omega = \Exp{-\i t (L + V^{\textnormal{AW}})}\Omega.
	$$
	Because $\Exp{\i\alpha L^{(D)}}$ and $\Exp{-\i t(L^{(D)} + V^{\textnormal{AW}(D,\delta)})}$ are unitary for $\alpha \in \rr$ and $t \in \rr$, the strong resolvent convergence of~$L^{(D)}$ and $L^{(D)}+ V^{\textnormal{AW}(D,\delta)}$ also imply through an~$\epsilon/2$-argument that
	$$
		\lim_{\delta \downarrow 0} \lim_{D \to \infty} \Exp{\i\alpha L^{(D)}}\Exp{-\i t (L^{(D)} + V^{\textnormal{AW}(D,\delta)})}\Omega = \Exp{\i\alpha L}\Exp{-\i t (L + V^{\textnormal{AW}})}\Omega.
	$$
	By continuity of the inner product, we have
	$$
		\lim_{\delta \downarrow 0} \lim_{D \to \infty} \mathcal{E}_t^{(D,\delta)}(\alpha) = \mathcal{E}_t(\alpha)
	$$
	for all~$\alpha \in \rr$ and the result follows from L\'evy's continuity theorem.
\end{proof}

\bibliographystyle{amsalpha}
\small
\bibliography{fcs-tails-ref}

\end{document}